\documentclass[11pt]{article}
\usepackage{amssymb}
\usepackage{amsfonts}
\usepackage{amsmath}
\usepackage{amsthm}
\usepackage{latexsym}

\usepackage{mathtools}

\usepackage{xspace} 

\usepackage[pagebackref=true]{hyperref}
\hypersetup{
    unicode=false,          
    colorlinks=true,        
    linkcolor=blue,          
    citecolor=blue,        
    filecolor=magenta,      
    urlcolor=cyan           
}

\usepackage{hyperref,cleveref}
\usepackage{amsmath}
\usepackage{amsthm}
\usepackage{amsfonts}
\usepackage{fullpage,appendix}

\usepackage{enumitem}

\usepackage[ruled]{algorithm}
\usepackage{algpseudocode}
\usepackage{algorithmicx}
\floatname{algorithm}{Protocol}

\usepackage{dsfont}
\usepackage{color}
\usepackage{tikz}

\usepackage{float}

\newtheorem{fact}{Fact}[section]

\newtheorem*{rep@theorem}{\rep@title}
\newcommand{\newreptheorem}[2]{%
\newenvironment{rep#1}[1]{%
 \def\rep@title{#2 \ref{##1}}%
 \begin{rep@theorem}}%
 {\end{rep@theorem}}}
\makeatother

\newtheorem{theorem}[fact]{Theorem}
\newreptheorem{theorem}{Theorem}

\newreptheorem{lemma}{Lemma}

\usetikzlibrary{decorations.pathreplacing}

\newcommand{\ignore}[1]{}
\definecolor{corlinks}{RGB}{64,128,128}
\definecolor{cormenu}{RGB}{0,37,94}
\definecolor{corurl}{RGB}{0,46,91}
\definecolor{darkgreen}{rgb}{0,0.5,0}

\newcommand{\1}{\mathds{1}}

\newcommand{\err}{\mathsf{err}}

\newcommand{\agr}{\mathsf{agr}}

\newcommand{\Cov}{{\rm Cov}}

\DeclareMathOperator*{\argmax}{arg\,max}

\DeclareMathOperator*{\Var}{\mathsf{Var}}

\newcommand{\Sign}{\mathsf{Sign}}

\newcommand{\calf}{\mathcal{F}}

\newcommand{\floor}[1]{\left \lfloor #1 \right \rfloor}
\newcommand{\zo}{\{0, 1\}}

\newcommand{\Ex}{\mathbb E}


\newcommand{\Maj}{\mathsf{Maj}}

\newcommand{\eps}{\epsilon}

\newtheorem{definition}[fact]{Definition}
\newtheorem{defn}[fact]{Definition}

\newtheorem{lem}[fact]{Lemma}
\newtheorem{corollary}[fact]{Corollary}

\newtheorem{proposition}[fact]{Proposition}

\newtheorem*{note}{Note}

%
%
%

\newcommand{\CC}{\mathsf{CC}} 
\newcommand{\CCU}{\mathsf{PubCCU}} 
\newcommand{\owCC}{\mathsf{owCC}} 
\newcommand{\owCCU}{\mathsf{owPubCCU}} 

\newcommand{\owIsrCCU}{\mathsf{owIsrCCU}} 

\newcommand{\owIsrCC}{\mathsf{owIsrCC}} 

\newcommand{\GIP}{\mathsf{GIP}} 

\newcommand{\ISR}{\mathsf{ISR}} 

\newcommand{\SubsetMaj}{\mathsf{SubsetMaj}} 

\newcommand{\SubsetParity}{\mathsf{SubsetParity}} 


\newcommand{\IIC}{\textsc{IIC}}

\newcommand{\HD}{\mathsf{HD}}

\newcommand{\PrivCC}{\mathsf{PrivCC}} 
\newcommand{\owPrivCC}{\mathsf{owPrivCC}} 

\newcommand{\PrivCCU}{\mathsf{PrivCCU}} 

\newcommand{\owPrivCCU}{\mathsf{owPrivCCU}} 

\newcommand{\Inote}[1]{}
\newcommand{\Pnote}[1]{}
\newcommand{\Mnote}[1]{}



\newif\ifnotes\notestrue

\usepackage{empheq}

\title{The Power of Shared Randomness in Uncertain Communication} \author{
Badih Ghazi \thanks{MIT, CSAIL, Cambridge MA 02139. {\tt badih@mit.edu}.}
\and
Madhu Sudan\thanks{Harvard John A. Paulson School of Engineering and Applied Sciences. {\tt madhu@cs.harvard.edu}.}
}
\date{\today}

\begin{document}
\maketitle

\begin{abstract}
In a recent work (Ghazi et al., SODA 2016), the authors with Komargodski and Kothari initiated the study of \emph{communication with contextual uncertainty}, a setup aiming to understand how efficient communication is possible when the communicating parties imperfectly share a huge context. In this setting, Alice is given a function $f$ and an input string $x$, and Bob is given a function $g$ and an input string $y$. 
The pair $(x,y)$ comes from a known distribution $\mu$ and $f$ and $g$ are
guaranteed to be close under this distribution.
Alice and Bob wish to compute $g(x,y)$ with high probability. The lack of agreement between Alice and Bob on the function that is being computed captures the uncertainty in the context. The previous work showed that any problem with one-way communication complexity $k$ in the standard model (i.e., without uncertainty\footnote{in other words, under the promise that $f=g$.}) has \emph{public-coin} communication at most $O(k(1+I))$ bits in the uncertain case, where $I$ is the mutual information between $x$ and $y$. Moreover, a lower bound of $\Omega(\sqrt{I})$ bits on the public-coin uncertain communication was also shown.

However, an important question that was left open is related to the power that public randomness brings to uncertain communication. Can Alice and Bob achieve efficient communication amid uncertainty without using public randomness? And how powerful are public-coin protocols in overcoming uncertainty? Motivated by these two questions:
\begin{itemize}
\item We prove the first separation between private-coin uncertain communication and public-coin uncertain communication. Namely, we exhibit a function class for which the communication in the standard model and the public-coin uncertain communication are $O(1)$ while the private-coin uncertain communication is a growing function of $n$ (the length of the inputs). This lower bound (proved with respect to the uniform distribution) is in sharp contrast with the case of public-coin uncertain communication which was shown by the previous work to be within a constant factor from the certain communication. This lower bound also implies the first separation between public-coin uncertain communication and deterministic uncertain communication. Interestingly, we also show that if Alice and Bob \emph{imperfectly share} a sequence of random bits (a setup weaker than public randomness), then achieving a constant blow-up in communication is still possible.
\item We improve the lower-bound of the previous work on public-coin uncertain communication. Namely, we exhibit a function class and a distribution (with mutual information $I \approx n$) for which the one-way certain communication is $k$ bits but the one-way public-coin uncertain communication is at least $\Omega(\sqrt{k} \cdot \sqrt{I})$ bits.
\end{itemize}

Our proofs introduce new problems in the standard communication complexity model and
prove lower bounds for these problems. Both the problems and the lower bound
techniques may be of general interest.

\end{abstract}

\newpage

\tableofcontents

\thispagestyle{empty}
\newpage
\pagenumbering{arabic}

\section{Introduction}\label{sec:intro}

In many forms of communication (e.g., human, computer-to-computer), the communicating parties share some context (e.g, knowledge of a language, operating system, communication protocol, encoding/decoding mechanisms.). This context is usually a) \emph{huge} and b) \emph{imperfectly shared} among the parties. Nevertheless, in human communication, very efficient communication is usually possible. Can we come up with a mathematical analogue of this phenomenon of efficient communication based on a huge but imperfectly shared context? Motivated by this general question, the study of ``communication amid uncertainty'' has been the subject of a series of recent work starting with Goldreich, Juba and Sudan \cite{JS1,GJS:jacm} followed by \cite{JKKS11,JS2,JubaWilliams13,HaramatyS,CGMS15}. While early works were very abstract and general, later works (starting with Juba, Kalai, Khanna and Sudan \cite{JKKS11}) tried to explore the ramifications of uncertainty in Yao's standard communication complexity model \cite{Yao}. In particular, the more recent works relax the different pieces of context that were assumed to be \emph{perfectly shared} in Yao's model, such as shared randomness \cite{CGMS15}, and in a recent work of the authors with Komargodski and Kothari the \emph{function being computed} \cite{ghazi2015communication}.

Specifically, \cite{ghazi2015communication} study the following functional notion of uncertainty in communication. Their setup builds on -- and generalizes -- Yao's classical model of (distributional) communication complexity, where Alice has an input $x$
and Bob has an input $y$, with $(x,y)$ being sampled from a distribution $\mu$. Their
goal is to communicate minimally so as to compute some function $g(x,y)$ (with
high probability over the choice of $(x,y)$). The understated emphasis of the
model is that for many functions $g$, the communication required is much less
than the lengths of $x$ or $y$, the entropy of $x$ or $y$ or even the conditional entropy of $x$ given $y$.

The question studied by \cite{ghazi2015communication} is: How much of this gain in communication is preserved
when the communicating parties do not exactly agree on the function being computed? (We further discuss the importance of this question in Section~\ref{subsec:imp}.)
This variation of the problem is modelled as follows: Alice is given a Boolean function $f$ and an input string $x$, and Bob is given a Boolean function $g$ and an input string $y$ where $(x,y)$ is sampled from a known distribution $\mu$ as
before, and $(f,g)$ is chosen (adversarially) from a known class $\mathcal{F}$ of pairs of functions that are close in terms of the Hamming distance $\Delta_{\mu}$ (weighted according to $\mu$). 
Alice and Bob wish to compute $g(x,y)$. 
Alice's knowledge of the function $f$ (which is close but not necessarily equal to $g$) captures the uncertainty in the knowledge of the context.

We define the \emph{public-coin uncertain communication complexity} $\CCU^{\mu}_{\epsilon}(\mathcal{F})$ as the minimum length of a two-way public-coin protocol whose output is correct with probability at least $1-\epsilon$ over its internal randomness and that of $(x,y)$. We similarly define the \emph{private-coin uncertain communication complexity} $\PrivCCU^{\mu}_{\epsilon}(\mathcal{F})$ by restricting to private-coin protocols. Clearly, $\CCU^{\mu}_{\epsilon}(\mathcal{F}) \le \PrivCCU^{\mu}_{\epsilon}(\mathcal{F})$. The quantities $\owCCU^{\mu}_{\epsilon}(\mathcal{F})$ and $\owPrivCCU^{\mu}_{\epsilon}(\mathcal{F})$ are similarly defined by restricting to one-way protocols.\footnote{Note that the uncertain model is clearly a generalization of Yao's model which corresponds to the particular case where $\mathcal{F} = \{(f,f)\}$ for some fixed function $f$. On the other hand, the uncertain model can also be viewed as a particular case of Yao's model via an exponential blow-up in the input size. For more on this view (which turns out to be ineffective in our setup), we refer the reader to Note~\ref{rem:phil_obj} at the end of this section.\label{ft:phil_obj}}

The previous work (\cite{ghazi2015communication}) gave an upper bound on $\owCCU^{\mu}_{\epsilon}(\mathcal{F})$ whenever $\mathcal{F}$ consists of functions $g$ whose one-way distributional complexity is small. More precisely, denote by $\owCC^\mu_{\epsilon}(g)$ the one-way communication complexity of $g$ in the standard distributional model.\footnote{By the ``easy direction'' of Yao's min-max principle, we can without loss of generality consider deterministic (instead of public-coin) protocols when defining $\owCC^\mu_{\epsilon}(g)$. We point out that this is not true in the uncertain case.} Namely, $\owCC^\mu_{\epsilon}(g)$ is the minimum length of a one-way deterministic protocol computing $g$ with probability at least $1-\epsilon$ over the randomness of $(x,y)$. Then, \cite{ghazi2015communication} showed that if $\mathcal{F}$ consists of pairs $(f,g)$ of functions that are at distance $\delta$, and if $\owCC^\mu_{\epsilon}(f), \owCC^\mu_{\epsilon}(g) \le k$, then 
for every positive $\theta$, 
$\owCCU_{\epsilon + 2\delta + \theta}^{\mu}(\mathcal{F})
    \leq O_\theta(k\cdot (1 + I(x;y)))$, where
$I(x;y)$ denotes the mutual information between $x$ and $y$.\footnote{One interpretation of the dependence of the communication in the uncertain setup on the mutual information $I(x;y)$ is that the players are better able to use the correlation of their inputs in the standard case than in the uncertain case.}
Note that if $\mu$ is a product distribution and if we let the parameter $\theta$ be a small constant, then the blow-up in communication is only a constant factor. However, the protocol of \cite{ghazi2015communication} crucially uses public randomness, and one of the main motivations behind this work is to understand how large the blow-up would be in the case where Alice and Bob have access to weaker types of randomness (or no randomness at all).

We point out that understanding the type of randomness that is needed in order to cope with uncertainty is a core question in the setup of communication with contextual uncertainty: \emph{If Alice and Bob do not (perfectly) agree on the function being computed, why can we assume that they (perfectly) agree on the shared randomness?}

\subsection{Our Contributions}

We prove several results about the power of shared randomness in uncertain communication. 
\paragraph*{Private and Imperfectly Shared Randomness}
Our first  result (Theorem~\ref{th:det_sep}) shows that private-coin protocols are much weaker than public-coin protocols in the setup of communication with contextual uncertainty. Far from obtaining a constant factor blow-up in communication, 
private-coin protocols incur an increase that is a growing function of $n$ when dealing with uncertainty.

Let $\mathcal{U} \triangleq \mathcal{U}_{2 n}$ be the uniform distribution on $\{0,1\}^{2 \cdot n}$. For positive integers $t$ and $n$, we define $\log^{(t)}(n)$ by setting $\log^{(1)}(n) = \log{n}$, and $\log^{(i)}(n) = \max(\log{\log^{(i-1)}(n)},1)$ for all $i \in \{2,\dots,t\}$.
\begin{theorem}[Lower-bound on private-coin uncertain protocols]\label{th:det_sep}
For every sufficiently small $\delta >0$, there exist a positive integer $\ell \triangleq \ell(\delta)$ and a function class $\mathcal{F} \triangleq \mathcal{F}_{\delta}$ such that
\begin{enumerate}[label=(\roman*)]
\item\label{part:det_dist} For each $(f,g) \in \mathcal{F}$, we have that $\Delta_{\mathcal{U}}(f,g) \le \delta$.
\item\label{part:det_cert_ub} For each $(f,g) \in \mathcal{F}$, we have that $\owCC^{\mathcal{U}}_0(f), \owCC^{\mathcal{U}}_0(g) \le \ell$ .
\item\label{part:det_unc_lb} For every $\eta > 0$ and $\epsilon \in (4\delta, 0.5]$, we have that $\PrivCCU^{\mathcal{U}}_{\epsilon/2-2\delta-\eta}(\mathcal{F}) = \Omega(\eta^{2} \cdot \log^{(t)}(n))$ for some positive integer $t = \Theta((\epsilon/\delta)^2)$.
\end{enumerate}
\end{theorem}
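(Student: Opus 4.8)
The plan is to reduce the private-coin uncertain complexity of $\mathcal{F}_\delta$ to the standard distributional communication complexity of an auxiliary two-party problem $Q$, and then to prove a communication lower bound for $Q$ by a level-by-level (round-elimination flavored) argument. The reduction is via Yao's minimax: a private-coin protocol that is correct on \emph{every} pair in $\mathcal{F}_\delta$ is, for any distribution $\lambda$ supported on $\mathcal{F}_\delta$, correct with probability $\ge 1 - (\epsilon/2 - 2\delta - \eta)$ on average over $(x,y) \sim \mathcal{U}$, $(f,g) \sim \lambda$ and its own coins; fixing the best coins yields a deterministic protocol with the same guarantee. Hence $\PrivCCU^{\mathcal{U}}_{\epsilon/2-2\delta-\eta}(\mathcal{F}_\delta) \ge D^{\mathcal{U}\otimes\lambda}_{\epsilon/2-2\delta-\eta}(Q)$, where $Q$ asks the players, holding $(x,f)$ and $(y,g)$ respectively, to output $g(x,y)$. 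It therefore suffices to design $\mathcal{F}_\delta$ together with a distribution $\lambda$ on it for which this quantity is $\Omega(\eta^2 \log^{(t)} n)$. The companion claim that the \emph{public}-coin uncertain complexity is $O_\delta(1)$ needs no extra work: under $\mathcal{U}$ the inputs $x$ and $y$ are independent, so $I(x;y)=0$, and the one-way upper bound of \cite{ghazi2015communication} together with part~\ref{part:det_cert_ub} already gives $\CCU^{\mathcal{U}}_{2\delta + \theta}(\mathcal{F}_\delta) = O_{\theta}(\ell) = O_\delta(1)$.

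The class $\mathcal{F}_\delta$ is built recursively with $t = \Theta((\epsilon/\delta)^2)$ nested levels of ``uncertain routing'' gadgets: at the outer level, a short portion of the inputs, interpreted through the (private) functions $f$ and $g$, selects which of many disjoint windows carries the real sub-instance, and the pair $(f,g)$ is allowed to disagree only about this selection at a single level, which keeps their joint Hamming distance below $\delta$ and gives part~\ref{part:det_dist}. Each window at a given level is a scaled copy of the level-below construction on inputs of length $\Theta(\log(\text{current length}))$, so the lengths telescope as $n, \log n, \ldots, \log^{(t)} n$; the innermost gadget is a gap-inner-product / subset-majority style primitive on $m_0 = \Theta(\log^{(t)} n)$ coordinates. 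The decisive design property, which yields part~\ref{part:det_cert_ub}, is that a player who knows its own function exactly can compute it, with zero error, from a one-way message of only $\ell = \ell(\delta)$ bits --- the function reduces, given full knowledge of its structure, to combining $O_\delta(1)$ designated bits of the two inputs --- whereas this structural information is precisely what $f$ and $g$ need not share, and there is no public randomness with which to re-synchronize on it (the protocol of \cite{ghazi2015communication} uses public randomness exactly to finesse this point).

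For part~\ref{part:det_unc_lb} I would prove the lower bound on $D^{\mathcal{U}\otimes\lambda}_{\cdot}(Q)$ by induction on the number of levels, peeling one level per step in the style of the round-elimination / message-compression arguments used for nested pointer and predecessor-type problems: any protocol whose total communication is below $\log(\text{current length})$ can be shown, by an information-theoretic ``the transcript so far is nearly independent of the active window'' argument (chain rule plus Pinsker, or a Hellinger / cut-and-paste estimate), to induce --- after conditioning on its transcript prefix --- a protocol of no larger cost for the level-below instance on inputs of logarithmically smaller length, with only a small additive increase in error; since the shared randomness is absent, the player who knows the active window cannot convey it within that budget. Iterating $t$ times reaches the innermost gadget on $m_0 = \Theta(\log^{(t)} n)$ coordinates, on which a Fourier / anticoncentration argument (of gap-Hamming type) shows that beating random guessing by an advantage of $\eta$ costs $\Omega(\eta^2 m_0) = \Omega(\eta^2 \log^{(t)} n)$ bits --- the factor $\eta^2$ being the usual price of sharpening the bias of an essentially unbiased output. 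Choosing the per-level disagreement so that the $t$ levels contribute Hamming distance totalling at most $\delta$ while the per-level errors accrue like a sum of $t$ terms of order $\delta/\sqrt t$ forces $\epsilon \gtrsim \sqrt t\,\delta$, i.e., $t = \Theta((\epsilon/\delta)^2)$, matching the statement.

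The main difficulty is neither endpoint but the recursive interface. One must define $\mathcal{F}_\delta$ and $\lambda$ so that the adversary's freedom in picking $(f,g)$ genuinely lets the hard instance be pushed one level deeper for \emph{every} cheap protocol --- i.e., the routing gadget at each level must be robust enough that no clever message can both respect the level's uncertainty and short-circuit the recursion --- and one must then compose the $t$ peeling steps so that the error accumulated through conditioning on transcript prefixes stays below the $\epsilon/2 - 2\delta - \eta$ threshold and the ``logarithm lost per level'' is honest rather than an artifact of the encoding. Making the round-elimination step quantitatively tight (so that a budget of $o(\log(\text{current length}))$, not merely $o(1)$, is what it takes to descend) and stitching the $t$ copies together without a blow-up in either communication or error is where the real work lies; the tools are standard (minimax, chain rule, Pinsker/Hellinger, hypercontractivity for the base case), but the bookkeeping across $t = \Theta((\epsilon/\delta)^2)$ levels is delicate.
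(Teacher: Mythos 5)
Your opening move — applying Yao's minimax by putting a distribution $\lambda$ on $(f,g)$, averaging over $(f,g)\sim\lambda$ and $(x,y)\sim\mathcal{U}$, and fixing the best private coins to obtain a deterministic protocol — is exactly the step that breaks, and it breaks irreparably. The fixing-of-coins argument you invoke applies equally to a \emph{public}-coin uncertain protocol: for any $\lambda$, the public-coin protocol of \cite{ghazi2015communication} (which under the uniform distribution communicates $O_\delta(\ell)$ bits) has average error $\le 2\delta+\theta$ over $(f,g)\sim\lambda$ and $(x,y)\sim\mathcal{U}$ and its coins, so fixing its public coins gives a deterministic protocol with the same cost and average error. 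Consequently $D^{\mathcal{U}\otimes\lambda}_{\cdot}(Q) = O_\delta(1)$ for \emph{every} $\lambda$, and your inequality $\PrivCCU^{\mathcal{U}}_{\cdot}(\mathcal{F}_\delta) \ge D^{\mathcal{U}\otimes\lambda}_{\cdot}(Q)$ cannot yield anything growing with $n$. The whole point of part~\ref{part:det_unc_lb} is a private-coin-versus-public-coin separation, so any proof route that first derandomizes with respect to a distribution over $(f,g)$ has already given up the game: you must keep the worst-case quantification over $(f,g)$ alive. The paper does so by reducing a worst-case, private-coin problem (the shift communication game $\mathcal{G}_{m,t}$, in which Bob holds a sorted $t$-tuple and Alice holds either its prefix or its suffix) to the uncertain task via the $\mathsf{stretch}_{r,a}$ embedding of tuples into the $(S,T)$-subsets that index the majority functions $f_S(X,Y)=\Sign(\sum_{i\in S}X_iY_i)$; the protocol $\Pi$ is run on $k=\Theta(1/\eta^2)$ fresh random inputs and Bob chooses between the prefix and suffix hypotheses by empirical error, using the Berry--Esseen/Sheppard computation to show that the two hypotheses induce functions at distance $\ge \epsilon-\delta$ while $g$ is $\delta$-close to $f_T$. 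The iterated-logarithm bound then comes from Linial's lower bound on the chromatic number of shift graphs, $\chi(G_{m,t})\ge\log^{(t-1)}(m)$, converted first to $\CC(\mathcal{G}_{m,t})\ge\log^{(t+1)}(m)$ and then, via the classical $\PrivCC_\epsilon(f)=\Omega(\log\CC(f))$ bound (Fact~\ref{fact:priv_det_NK}), to a private-coin bound. None of the machinery in your sketch (nested routing gadgets, transcript-conditioning/round-elimination, a gap-Hamming inner gadget) appears in the actual argument, and because of the minimax misstep it would not produce a valid bound even if the bookkeeping succeeded.

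A secondary remark: the construction in the paper is a single-level family (majority on a hidden subset, with $S\subseteq T$, $|T|=\ell$, $|T\setminus S|\le\delta'\ell$), not a $t$-level recursion; the parameter $t$ governs the length of the shift tuple in the \emph{reduction target}, not a nesting depth in $\mathcal{F}_\delta$. Both the $\eta^2$ factor (from Hoeffding over $k=\Theta(1/\eta^2)$ samples) and the $\log^{(t)}(n)$ factor (from Linial) enter through that reduction, not through an inductive peeling of levels. If you want to salvage the spirit of your plan, the place to start is to replace the minimax step with a direct reduction from a worst-case promise problem whose private-coin complexity is known to exceed its public-coin complexity by an unbounded margin, which is precisely the role played here by the shift game.
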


In Theorem~\ref{th:det_sep}, the inputs $x$ and $y$ are binary strings of length $n$ and $\calf$ is a family of pairs of functions, which each function mapping $\{0,1\}^n \times \{0,1\}^n$ to $\{0,1\}$. Also, the parameter $\eta$ can possibly depend on $n$. We point out that Theorem~\ref{th:det_sep} also implies the first separation between \emph{deterministic} uncertain protocols and public-coin uncertain protocols\footnote{This uses the fact that private-coin communication complexity is no larger than deterministic communication complexity, both in the certain and uncertain setups.}.



\begin{note}
We point out that the relative power of private-coin and public-coin protocols in the uncertain model is both conceptually and technically different from the standard model. Specifically, the randomness is potentially used in the standard model in order to fool an adversary selecting the input pair $(x,y)$, whereas in the uncertain model, it is used to fool an adversary selecting the pair $(f,g)$ of functions that are promised to be close. This promise makes the task of proving lower bounds against private-coin protocols in the uncertain model (e.g., Theorem~\ref{th:det_sep}) significantly more challenging than in the standard model.\footnote{In particular, the diagonilization-based arguments that imply a separation between the public-coin and the private-coin communication complexities of the Equality function in the standard model completely fail when we impose such a promise.} Moreover, a well-known theorem due to Newman \cite{newman1991private} shows that in the standard model, any public-coin protocol can be simulated by a private-coin protocol while increasing the communication by an additive $O(\log{n})$ bits. By contrast, there is no known analogue of Newman's theorem in the uncertain case!
\end{note}

\begin{note}\label{rem:priv}
The construction that we use to prove Theorem~\ref{th:det_sep} cannot give a separation larger than $\Theta(\log{\log{n}})$. Thus, showing a separation of $\omega(\log{\log{n}})$ between private-coin and public-coin protocols in the uncertain case would require a new construction. For more details, see Note~\ref{rem:priv_elab}.
\end{note}

In light of Theorem~\ref{th:det_sep}, it is necessary for Alice and Bob to share some form of randomness in order to only incur a constant blow-up in communication for product distributions. Fortunately, it turns out that it is not necessary for Alice and Bob to \emph{perfectly} share a sequence of random coins. If Alice is given a uniform-random string $r$ of bits and Bob is given a string $r'$ obtained by independently flipping each coordinate of $r$ with probability $0.49$, then efficient communication is still possible!

More formally, for $\rho \in [0,1]$, define $\owIsrCCU^{\mu}_{\epsilon,\rho}(\mathcal{F})$ in the same way that we defined $\owCCU^{\mu}_{\epsilon}(\mathcal{F})$ except that instead of Alice and Bob having access to public randomness, Alice will have access to a sequence $r$ of independent uniformly-random bits, and Bob will have access to a sequence $r'$ of bits obtained by independently flipping each coordinate of $r$ with probability $(1-\rho)/2$. Note that this setup of imperfectly shared randomness interpolates between the public randomness and private randomness setups, i.e., $\owIsrCCU^{\mu}_{\epsilon,1}(\mathcal{F}) = \owCCU^{\mu}_{\epsilon}(\mathcal{F})$ and $\owIsrCCU^{\mu}_{\epsilon,0}(\mathcal{F}) = \owPrivCCU^{\mu}_{\epsilon}(\mathcal{F})$.

\begin{theorem}[Uncertain protocol using imperfectly shared randomness]\label{le:isr}
Let $\rho \in (0,1]$ and $\mu$ be a \emph{product} distribution. Let $\mathcal{F}$ consist of pairs $(f,g)$ of functions with $\Delta_{\mu}(f,g) \le \delta$, and $\owCC^\mu_{\epsilon}(f), \owCC^\mu_{\epsilon}(g) \le k$. Then, for every positive $\theta$, 
$\owIsrCCU_{\epsilon + 2\delta + \theta, \rho}^{\mu}(\mathcal{F}) \leq O_\theta(k/\rho^2)$.
\end{theorem}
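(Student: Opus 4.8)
The plan is to port the public-coin uncertain protocol of \cite{ghazi2015communication} --- which, for a product distribution $\mu$ (so $I(x;y)=0$), already achieves communication $O_\theta(k)$ --- to the imperfectly-shared-randomness model, in the spirit in which \cite{CGMS15} port ``structured'' public-coin protocols to imperfectly shared randomness at a multiplicative cost of $1/\rho^2$. First I would isolate a version of the public-coin protocol whose only use of shared randomness is to pick random ``directions''. Fix a deterministic one-way protocol witnessing $\owCC^\mu_\epsilon(g)\le k$: a map $P_g$ from Bob's inputs to $[2^k]$ and decoders $D_g(1,\cdot),\dots,D_g(2^k,\cdot)$ with $\Pr_{(x,y)\sim\mu}[D_g(P_g(x),y)=g(x,y)]\ge 1-\epsilon$. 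Identify each Boolean function $h:\{0,1\}^n\to\{0,1\}$ (a ``row'') with the vector $\hat h(y)=(-1)^{h(y)}$ under the inner product $\langle\hat h,\hat h'\rangle_{\mu_Y}=\E_{y\sim\mu_Y}[(-1)^{h(y)+h'(y)}]=1-2\Pr_{y\sim\mu_Y}[h(y)\ne h'(y)]$. The protocol: for each of $m$ shared random sign-functions $R^{(1)},\dots,R^{(m)}:\{0,1\}^n\to\{-1,1\}$, Alice (who knows $f,x$) sends the bit $s^A_j=\mathrm{sign}\big(\E_{y\sim\mu_Y}[(-1)^{f(x,y)}R^{(j)}(y)]\big)$; Bob, for each candidate $i\in[2^k]$, forms the analogous sketch $s^B_{i,j}=\mathrm{sign}\big(\E_{y\sim\mu_Y}[(-1)^{D_g(i,y)}R^{(j)}(y)]\big)$, lets $i^\star$ be the index whose sketch agrees most often with Alice's, and outputs $D_g(i^\star,y)$.

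Next I would check correctness in the public-coin case. By a Chernoff bound and the fact that the sign-agreement probability is a fixed monotone Lipschitz function of the correlation, with $m=O(k/\theta^2)$ the empirical sketch-agreement of candidate $i$ with Alice estimates $\langle\widehat{f(x,\cdot)},\widehat{D_g(i,\cdot)}\rangle_{\mu_Y}$ to within $\pm\theta$ simultaneously over all $2^k$ candidates (a union bound over only $2^k$ of them --- this is the source of the $O_\theta(k)$ communication, as opposed to $O_\theta(n)$). Conditioned on this event, $i^\star$ satisfies $\Pr_{y\sim\mu_Y}[D_g(i^\star,y)\ne f(x,y)]\le \Pr_{y\sim\mu_Y}[f(x,y)\ne D_g(P_g(x),y)]+\theta$; averaging this bound directly over $x\sim\mu_X$ (rather than invoking Markov, which would cost $\sqrt\delta$ instead of $\delta$), and adding $\Pr_{(x,y)\sim\mu}[f(x,y)\ne g(x,y)]\le\delta$ together with $\Pr_{(x,y)\sim\mu}[f(x,y)\ne D_g(P_g(x),y)]\le \epsilon+\delta$, gives total error $\epsilon+2\delta+O(\theta)$.

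To get the imperfectly-shared version, I would replace each $R^{(j)}$ by a $\rho$-correlated pair $(R^{(j)}_A,R^{(j)}_B)$ built coordinatewise from the imperfectly shared string (each coordinate a $\pm1$-pair of correlation $\rho$, converted to correlated Gaussians by blocking if desired); since the imperfectly-shared-randomness model supplies correlated strings of unbounded length, the exponential ambient dimension $2^n$ costs nothing. Then Alice's bit $s^A_j$ and Bob's bit for candidate $i$ are signs of two (nearly) jointly Gaussian quantities of correlation $\rho\cdot\langle\widehat{f(x,\cdot)},\widehat{D_g(i,\cdot)}\rangle_{\mu_Y}$, so by the standard analysis of correlated hyperplane rounding in \cite{CGMS15} their agreement probability is a fixed strictly increasing function of $\langle\widehat{f(x,\cdot)},\widehat{D_g(i,\cdot)}\rangle_{\mu_Y}$ whose derivative is $\Theta(\rho)$ throughout. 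Hence taking $m'=O\big(k/(\rho^2\theta^2)\big)=O_\theta(k/\rho^2)$ sketch bits restores the $\pm\theta$-estimation guarantee over all $2^k$ candidates; the remainder of the analysis is verbatim, Alice sends $m'$ bits, and this yields $\owIsrCCU^\mu_{\epsilon+2\delta+\theta,\rho}(\mathcal{F})\le O_\theta(k/\rho^2)$.

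The main obstacle I anticipate is not the passage to imperfectly shared randomness (a routine application of the correlated-rounding machinery of \cite{CGMS15} once the protocol is in this form), but extracting and verifying the sketch-based reformulation: one must arrange that Bob selects a row $D_g(i^\star,\cdot)$ that is \emph{genuinely} $\mu_Y$-close to $f(x,\cdot)$ --- merely separating ``close'' from ``far'' is not enough when some decoder rows sit at an intermediate distance --- and carry out the error bookkeeping by averaging over $x$ rather than by Markov, so that the loss is $2\delta$ and not $O(\sqrt\delta)$. Confirming that a union bound over the $2^k$ decoder rows of $g$ (and not over all $2^n$ inputs) suffices is the crux of keeping the communication at $O_\theta(k/\rho^2)$.
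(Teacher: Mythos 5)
Your proposal is correct and follows essentially the same route as the paper: the paper fixes the same deterministic one-way protocol for $g$, has Bob estimate $\E_{y\sim\mu_Y}[\tilde f_x(y)\tilde B_j(y)]$ for all $L\le 2^k$ decoder rows via the \cite{CGMS15} locality-sensitive-hashing/$\GIP$ primitive under imperfectly shared randomness (packaged as Lemma~\ref{le:needed_isr}, with the $\log L=O(k)$ union bound over candidates and $1/\rho^2$ factor), takes the argmax, and then invokes the same error bookkeeping as the public-coin protocol of \cite{ghazi2015communication} to get $\epsilon+2\delta+\theta$. You re-derive the sign-sketch and correlated-rounding mechanics inline rather than citing $\GIP$ as a black box, but the protocol and analysis are the same.
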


The \emph{imperfectly shared randomness} model in Theorem~\ref{le:isr} was recently independently introduced (in the setup of communication complexity) by Bavarian, Gavinsky and Ito \cite{bavarian2014role} and by Canonne, Guruswami, Meka and Sudan \cite{CGMS15} (and it was further studied in \cite{ghazi2015communicationperm}). Moreover, our proof of Theorem~\ref{le:isr} is based on combining the uncertain protocol of \cite{ghazi2015communication} and the locality-sensitive-hashing based protocol of \cite{CGMS15}.

We point out that Theorem~\ref{le:isr} also holds for more general i.i.d. sources of correlated randomness than the one described above. More precisely, for i.i.d. (not necessarily binary) sources of (imperfectly) shared randomness with \emph{maximal correlation}\footnote{The \emph{maximal correlation} of a pair $(X,Y)$ of random variables (with support $\mathcal{X} \times \mathcal{Y}$) is defined as $\rho(X,Y) \triangleq \sup \Ex[F(X) G(Y)]$ where the supremum is over all functions $F:\mathcal{X} \to \mathbb{R}$ and $G:\mathcal{Y} \to \mathbb{R}$ with $\Ex[F(X)] = \Ex[G(Y)] = 0$ and $\Var[F(X)] = \Var[G(Y)] =  1$. It is not hard to show that the binary source of imperfectly shared randomness defined in the paragraph preceding Theorem~\ref{le:isr} has maximal correlation $\rho$.}  $\rho$, the work of Witsenhausen \cite{witsenhausen1975sequences} along with the protocols of \cite{CGMS15} and \cite{ghazi2015communication} imply an uncertain protocol with $O_{\theta}(k/\rho^2)$ bits of communication.

\paragraph*{Public Randomness}
We now turn to our next result where we consider the dependence of 
the upper bound of \cite{ghazi2015communication}
on the mutual information $I \triangleq I(X;Y)$ in the case of public-coin protocols. The previous work \cite{ghazi2015communication} proved a lower bound of $\Omega(\sqrt{I})$ on this dependence, but their lower-bound does not grow with $k$. We improve this lower bound to $\Omega(\sqrt{k} \cdot \sqrt{I})$.
\begin{theorem}[Improved lower-bound on public-coin uncertain protocols]\label{th:imp_rand_sep}
For every sufficiently small $\delta > 0$ and every positive integers $k,n$ such that $k = o(\exp(\sqrt{n}))$, there exist an input distribution $\mu$ on input pairs $(X,Y) \in \{0,1\}^{k \cdot n} \times \{0,1\}^{k \cdot n}$ with mutual information $I \approx k \cdot n$ and a function class $\mathcal{F} \triangleq \mathcal{F}_{\delta,k,n}$ such that\footnote{We note that $I \approx k \cdot n$ means that $I/(k \cdot n) \to 0$ as $n \to \infty$.}
\begin{enumerate}[label=(\roman*)]
\item\label{part:rand_dist} For each $(f,g) \in \mathcal{F}$, we have that $\Delta_{\mu}(f,g) \le \delta$.
\item\label{part:cert_ub} For each $(f,g) \in \mathcal{F}$, we have that $\owCC^\mu_0(f), \owCC^\mu_0(g) \le k$.
\item\label{part:uncert_lb} $\owCCU^{\mu}_{\epsilon}(\mathcal{F}) = \Omega(\sqrt{k} \cdot \sqrt{I})$ for some absolute constant $\epsilon > 0$ independent of $\delta$. 
\end{enumerate}
\end{theorem}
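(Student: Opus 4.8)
The plan is to amplify the previous work's $\Omega(\sqrt{I})$ lower bound by a factor of $\sqrt{k}$ through a direct-sum (XOR) construction, and to extract the bound via a reduction to a new problem in the \emph{standard} communication model. Recall that the $\Omega(\sqrt{I})$ bound is witnessed by a function class $\mathcal{F}_0$ on $\{0,1\}^n\times\{0,1\}^n$ together with a correlated distribution $\mu_0$ of mutual information $I_0=\Theta(n)$, for which every $(f_0,g_0)\in\mathcal{F}_0$ satisfies $\Delta_{\mu_0}(f_0,g_0)=o(1)$ and $\owCC^{\mu_0}_0(f_0),\owCC^{\mu_0}_0(g_0)=O(1)$, yet $\owCCU^{\mu_0}_{\epsilon_0}(\mathcal{F}_0)=\Omega(\sqrt{n})$ for a universal constant $\epsilon_0>0$ --- a family of the $\SubsetMaj$/$\SubsetParity$ type, where the ``threshold over a large correlated subset'' structure simultaneously forces small Hamming distance between functions indexed by nearby subsets and forces any uncertain protocol to solve a $\GIP$-like task. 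I would first fix such a base family, shrinking its distance parameter to $\Theta(\delta/k)$.

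Next I define $\mathcal{F}=\mathcal{F}_{\delta,k,n}$ and $\mu$ as the $k$-fold product: the inputs $(X,Y)=((x^1,y^1),\dots,(x^k,y^k))\in\{0,1\}^{kn}\times\{0,1\}^{kn}$ have blocks drawn i.i.d.\ from $\mu_0$, so $I(X;Y)=k I_0$ (and $I\approx kn$ in the sense of the statement, using $k=o(\exp(\sqrt n))$ so that a single block can still host a ``hard'' instance); a pair $(f,g)\in\mathcal{F}$ is a $k$-tuple of pairs $(f_i,g_i)\in\mathcal{F}_0$ with $f(X,Y)=\bigoplus_i f_i(x^i,y^i)$ and likewise for $g$. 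Part~\ref{part:rand_dist} follows from subadditivity of Hamming distance under XOR: $\Delta_\mu(f,g)\le\sum_i\Delta_{\mu_0}(f_i,g_i)\le k\cdot\Theta(\delta/k)\le\delta$. Part~\ref{part:cert_ub} follows since a one-way certain protocol runs the (zero-error, $O(1)$-bit) base protocol in each block and XORs the outputs; after tuning the per-block constant to $1$ (equivalently, using $\Theta(k)$ blocks) this costs $\le k$ bits, and symmetrically for $f$. Thus everything reduces to Part~\ref{part:uncert_lb}: $\owCCU^{\mu}_{\epsilon}(\mathcal{F})=\Omega(k\sqrt{n})=\Omega(\sqrt{k}\cdot\sqrt{I})$ for a universal $\epsilon>0$.

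For the lower bound I would introduce a standard-model problem that is, morally, the XOR of $k$ independent copies of a $\GIP$-type problem on $\Theta(n)$-bit inputs ($\GIP^{\oplus k}$), and prove a reduction: every public-coin one-way uncertain protocol for $\mathcal{F}$ of length $c$ yields a public-coin one-way protocol for $\GIP^{\oplus k}$ of length $c+O(\log(kn))$. Concretely, from a $\GIP^{\oplus k}$ instance Alice and Bob use public randomness to complete it, block by block, into a legal pair $(f,g)\in\mathcal{F}$ and correlated inputs $(x^i,y^i)\sim\mu_0$, arranged so that Alice's hidden subsets realize her half of the $i$-th $\GIP$ instance, Bob's realize his half, and the gap parameter is governed by $\delta$; they run the uncertain protocol and read off $\bigoplus_i\GIP_i$ from its Boolean output. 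It then remains to show $\owCC(\GIP^{\oplus k})=\Omega(k\sqrt{n})$. Since the $k$ copies are independent and a single copy is over a product distribution, I would obtain the single-copy bound $\Omega(\sqrt{n})$ from (the information-theoretic reformulation of) the previous work's core lemma, and multiply by $k$ via a direct-sum/XOR argument --- either a discrepancy bound (XOR multiplies discrepancy under product distributions, and a single $\GIP$ has discrepancy $2^{-\Omega(\sqrt n)}$ under $\mu_0$, whence $\owCC(\GIP^{\oplus k})=\Omega(k\sqrt n)$), or an information-complexity direct sum (the information cost of $\GIP^{\oplus k}$ is $k$ times that of a single $\GIP$, each being $\Omega(\sqrt n)$).

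\textbf{The main obstacle} I anticipate is precisely this multiplication. The previous work's $\Omega(\sqrt I)$ bound was (plausibly) proved by an agreement-distillation/counting argument that does not obviously tensorize, so to reach $\Omega(\sqrt{k}\cdot\sqrt{I})$ rather than the naive $\Omega(\sqrt{I})=\Omega(\sqrt{kn})$ one must recast the single-block hardness in a shape that behaves multiplicatively under XOR --- a discrepancy bound under the product distribution $\mu_0$, or a conditional information-cost bound --- and then verify that the reduction from $\mathcal{F}$ genuinely delivers $k$ \emph{independent} hard blocks with all gap and error parameters intact, for a constant $\epsilon$ independent of $\delta$. Establishing a $\GIP$-style lower bound that is simultaneously one-way, constant-error, and valid against the product distribution, in a form robust enough to survive XOR amplification, is where I expect most of the effort to lie --- and it is presumably why the paper advertises both a new standard-model problem and a new lower-bound technique for it.
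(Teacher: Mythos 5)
Your central idea --- compose $k$ independent copies of the base (SubsetParity / $\GIP$-type) hard problem and multiply the lower bound through a direct product theorem --- matches the paper's framework, and your instinct that the base hardness must be recast in a discrepancy form that tensorizes is exactly right. But your choice of XOR as the outer function is the specific choice the paper examines and explicitly rejects (Section~\ref{subsec:over_rand}), and the reason it fails is quantitative, not just technical.

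The problem is that parity is not noise stable, and this interacts badly with how per-block hardness scales with per-block distance. Track the parameters through the paper's base construction: if the per-block Hamming distance under the noisy distribution is $p$, then the noise level is $\epsilon \approx \sqrt{p/n}$ and the discrepancy-based per-block lower bound is $\Omega(\epsilon n) = \Omega(\sqrt{p\,n})$ --- it scales like $\sqrt{p}$, not $p$. With XOR as the outer function, keeping the composed distance $\le \delta$ forces $p=\Theta(\delta/k)$, so the per-block bound drops to $\Omega(\sqrt{\delta n/k})$, and the $k$-fold direct product yields only $\Omega\bigl(k\cdot\sqrt{\delta n/k}\bigr) = \Omega(\sqrt{kn}) = \Omega(\sqrt{I})$. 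That is the \emph{same} bound as a single block with $I\approx kn$, and a factor of $\sqrt{k}$ short of the claimed $\Omega(\sqrt{k}\sqrt{I}) = \Omega(k\sqrt n)$. (There is a second, smaller issue: for a parity outer function, Alice can XOR her $k$ single-bit messages before sending, so the certain cost collapses to $1$ bit, which makes Part~(ii) vacuous at the given $k$.)

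The paper sidesteps both issues by choosing \emph{majority} as the outer function: its noise stability means the composed distance scales as $O(\sqrt{p})$, so one may take $p=\Theta(\delta^2)$ (a \emph{constant}, independent of $k$), keeping the per-block bound at $\Omega(\sqrt n)$, and the $k$-fold product gives the full $\Omega(k\sqrt n)$. Meanwhile majority does not collapse the certain protocol, so Part~(ii) gives a genuine $k$. The price is that majority is not an XOR, so the direct-product-for-discrepancy step no longer applies directly. The paper bridges this gap with three additional ingredients you did not anticipate: a simulation lemma bounding the ``intermediate information cost'' of any successful protocol (the information it reveals about the vector of block outputs) below by the information cost of computing the Hamming-distance function $\HD_k$; Woodruff's $\Omega(k)$ one-way lower bound for $\HD_k$; and an estimation lemma converting $\Omega(k)$ bits of information into a $(0.5+\gamma)^k$ guessing probability for the vector of block outputs, which can then be fed into the strong direct product theorem for discrepancy of Lee et al. To salvage your plan you would need to replace XOR with a noise-stable, decision-tree-hard outer function and supply some analogue of this intermediate-information argument.
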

As will be explained in detail in Section~\ref{subsec:over_rand}, the proof of Theorem~\ref{th:imp_rand_sep} is based on an extension of the lower bound construction of \cite{ghazi2015communication}, which is then analyzed using different techniques.

\begin{note}\label{rem:pub}
The construction that we use to prove Theorem~\ref{th:imp_rand_sep} cannot give a lower bound larger than $\tilde{\Theta}(\sqrt{k} \cdot \sqrt{ I})$. Thus, improving on the lower bound in Theorem~\ref{th:imp_rand_sep} by more than logarithmic factors in $k$ and $I$ would require a new construction.
\end{note}

\paragraph*{New Communication Problems}
Our lower bounds in Theorems~\ref{th:det_sep} and \ref{th:imp_rand_sep} 
are derived by defining new problems in {\em standard} communication complexity
(i.e., without uncertainty) and proving lower bounds for these problems.
We describe these problems and our results on these next.

The construction that we use to prove Theorem~\ref{th:det_sep} requires us to understand the following ``subset-majority with side information'' setup. Alice is given a subset $S \subseteq [n]$ and a string $x \in \{ \pm 1\}^n$, and Bob is given a subset $T \subseteq [n]$ and a string $y \in \{\pm 1\}^n$. The subsets $S$ and $T$ are adversarially chosen but are promised to satisfy $S \subseteq T$, $|T| = \ell$ and $|T \setminus S| \le \delta \cdot \ell$ for some fixed parameters $\ell$ and $\delta$. The strings $x$ and $y$ are chosen independently and uniformly at random. Alice and Bob wish to compute the function $\SubsetMaj((S,x),(T,y)) \triangleq \Sign(\sum_{i \in T} x_i y_i)$. In words, $\SubsetMaj((S,x),(T,y))$ is equal to $0$ if $x$ and $y$ differ on a majority of the coordinates in subset $T$, and $1$ otherwise. Note that $S$ does not directly appear in the definition of the function $\SubsetMaj$ but it can serve as useful side-information for Alice.\footnote{Note that we could have alternatively defined $\SubsetMaj$ in terms of $S$, and let $T$ serve as the potentially useful side-information. Our lower bound would also apply to this setup.} What is the private-coin communication complexity of computing $\SubsetMaj$ on every $(S,T)$-pair satisfying the above promise and with high probability over the random choice of $(x,y)$ and over the private randomness? We prove the following (informally stated) lower bound.

\begin{lem}\label{le:det_lb_std_quest}
Any private-coin protocol computing $\SubsetMaj$ on every $(S,T)$-pair satisfying the promise and with high probability over the random choice of $(x,y)$ and over the private randomness should communicate at least $\log^{(t)}(n)$ bits for some positive integer $t$ that depends on $\delta$ and the error probability.
\end{lem}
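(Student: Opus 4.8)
I would start by rewriting $\SubsetMaj((S,x),(T,y)) = \1\big[\ip{x|_T}{y|_T}\ge 0\big]$ and splitting $\ip{x|_T}{y|_T} = \ip{x|_S}{y|_S} + \ip{x|_{T\setminus S}}{y|_{T\setminus S}}$. Since $x,y$ are uniform, the ``robust'' term $\ip{x|_S}{y|_S}$ is a sum of $|S|\ge(1-\delta)\ell$ i.i.d.\ signs, while the ``hidden'' term has absolute value at most $m := |T\setminus S|\le\delta\ell$. With $\ell = \ell(\delta)$ chosen suitably large (of order $1/\delta^2$), the event $E$ that the hidden term can flip the sign --- i.e.\ $|\ip{x|_S}{y|_S}|\le m$ --- has probability $\Theta(m/\sqrt{\ell}) = \Theta(\delta\sqrt{\ell})$, a constant that one can arrange to exceed the error tolerance, so a correct protocol cannot simply err on all of $E$. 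On $E$ the value of $\SubsetMaj$ genuinely depends on \emph{which} $m$ coordinates of $[n]$ lie in $T\setminus S$ --- coordinates Alice does not know. (Off $E$, an $O_{\delta,\epsilon}(1)$-bit protocol suffices: Alice sends $x|_S$ in increasing order of index, and Bob resolves the $\binom{\ell}{m} = O_{\delta,\epsilon}(1)$ candidate ways of splitting his set $T$.) So the lemma reduces to lower bounding, for every valid $(S,T)$, the private-coin cost of being correct on a constant fraction of the $E$-type inputs.

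\textbf{Step 2: base case via private-coin Equality.} The base of the argument is a reduction showing that, once the universe $[u]$ is small enough relative to $\ell$ that the protocol has little room to maneuver, $\SubsetMaj$ contains on its $E$-type inputs a ``private-index'' primitive: one player must recover a bit of the other player's string at a location private to itself --- morally an Equality test. The private-coin complexity of Equality on a $u$-element domain is $\Omega(\log u)$ (whereas its public-coin complexity is $O(1)$); this is the single place where \emph{private} coins enter, and it is precisely why the lemma fails for public coins. Hence $\SubsetMaj$ over a universe of size $u$, with error bounded away from the threshold $\Theta(\delta\sqrt{\ell})$, requires $\Omega(\log u)$ private-coin bits.

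\textbf{Step 3: the recursion and unrolling.} The main technical step is a self-reduction: from a $c$-bit private-coin protocol for $\SubsetMaj$ over $[n]$ with error $\epsilon'$ one extracts a $c$-bit private-coin protocol for $\SubsetMaj$ over a universe of size $\Omega(\log n)$ with error $\epsilon' + \theta_0$, for a small increment $\theta_0$. Concretely, decompose $[n]$ into $\approx n/\log n$ blocks of size $\approx\log n$; the adversary of the smaller instance plants its structure inside one block, and since the original protocol has at most $2^c$ transcripts it cannot finely tell which block was used, so on a large sub-collection of blocks it behaves like a protocol over a single block --- the placements it confuses cost only the additive error $\theta_0$. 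Writing $\Phi_{\epsilon'}(n)$ for the private-coin complexity, this reads $\Phi_{\epsilon'}(n) \ge \Phi_{\epsilon'+\theta_0}\big(\Omega(\log n)\big)$. Starting from the lemma's error and iterating $t = \Theta\big((\epsilon/\delta)^2\big)$ times --- as many as the error budget permits, i.e.\ while $\epsilon + t\theta_0$ stays below the threshold of Step 2 --- the universe shrinks to $\Theta(\log^{(t)}(n))$, and applying the base case there gives $\Phi_\epsilon(n) \ge \Omega\big(\log \log^{(t)}(n)\big) = \Omega\big(\log^{(t+1)}(n)\big)$, which is the claimed bound after renaming $t$.

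\textbf{Main obstacle.} I expect Step 3 --- establishing the self-reduction --- to be the crux, and the hard part there is handling \emph{private} randomness together with the \emph{worst-case over $(S,T)$} quantifier. Unlike in the standard model, one cannot fix the protocol's coins once and for all, since the optimal fixing depends on $(S,T)$, which is split between the two players; so the private coins must be carried through every level of the recursion and the block-embedding must be oblivious to them, while still mapping an arbitrary planted $(S,T)$ of the smaller instance to a promise-respecting $(S,T)$ of the larger one and back. The accompanying bookkeeping --- pinning down $\theta_0$ and the shrinkage rate so that exactly $t = \Theta((\epsilon/\delta)^2)$ levels survive --- is what forces the final bound to be $\log^{(t)}(n)$ rather than the $O(\log\log n)$ that is in fact achievable.
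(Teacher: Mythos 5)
Your proposal takes a genuinely different route from the paper's, and unfortunately it contains gaps that are fatal as stated.

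\paragraph*{The recursion in Step 3 is vacuous.} The inequality $\Phi_{\epsilon'}(n)\ge \Phi_{\epsilon'+\theta_0}(\Omega(\log n))$ is trivially true by restriction: any protocol for $\SubsetMaj$ over $[n]$ already solves $\SubsetMaj$ over any sub-universe $[\log n]\subseteq[n]$, with the same communication and no error increase. No block decomposition or ``the protocol cannot finely tell which block was used'' argument is needed, and none is gained: the universe does not shrink in any meaningful sense that reduces the protocol's power. Consequently the recursion transfers all of the burden onto the base case, and the iterated-logarithm picture you sketch is illusory.

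\paragraph*{The base case would prove too much.} If, as you assert, $\SubsetMaj$ over a universe of size $u$ forced $\Omega(\log u)$ private-coin bits, applying this directly at $u=n$ would give a lower bound of $\Omega(\log n)$ --- but the paper's Note~\ref{rem:priv_elab} observes that the private-coin complexity of this very problem is $O(\log\log n)$ (Bob can learn $S$ with $O(\log\log n)$ bits using the small-set-intersection protocol of Brody et al.\ and then finish trivially). So your base case as stated is false. You hedge with the condition that $u$ be ``small enough relative to $\ell$'', but $\ell=\ell(\delta)$ is a constant whereas the universe you would need to apply the base case to, $u=\log^{(t)}(n)$, grows without bound in $n$; so the condition is never met at the point you need it. Moreover the reduction from $\SubsetMaj$ to Equality is only gestured at: the ``private-index primitive'' does not obviously appear, since Bob can always just try all $\binom{\ell}{m}=O_{\delta,\epsilon}(1)$ candidate placements of $T\setminus S$ inside $T$ once he sees $x|_S$ --- which is exactly what your own Step~1 observes happens off $E$, and on $E$ the sign is a delicate threshold, not an identity test.

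\paragraph*{What the paper actually does.} The iterated-logarithm bound does not come from recursing on the communication protocol. The paper reduces $\SubsetMaj$ (via a ``stretch'' embedding, Protocol~\ref{alg:red_prot}) to a clean prefix-versus-suffix problem, the shift communication game $\mathcal{G}_{m,t}$: Bob has a sorted $t$-tuple, Alice has either its length-$(t-1)$ prefix or its length-$(t-1)$ suffix, and they must decide which. The $\log^{(t)}$ growth is then inherited from Linial's lower bound $\chi(G_{m,t})\ge\log^{(t-1)}(m)$ on the chromatic number of the shift graph; this gives a deterministic lower bound of $\log^{(t+1)}(m)$ on $\mathcal{G}_{m,t}$ (Lemma~\ref{le:ow_det_lb_Gmt}), which is converted to a private-coin lower bound of $\Omega(\log^{(t+2)}(m))$ via the generic $\PrivCC_\epsilon(f)=\Omega(\log\CC(f))$ fact. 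The recursion you are looking for lives inside Linial's combinatorial proof about colorings of shift graphs, not at the level of communication protocols. The remaining work in the paper is showing the reduction is error-preserving, which needs a quantitative separation between the ``prefix'' and ``suffix'' quadratic-threshold functions $g$ and $h$ (Lemma~\ref{le:dist_sep}, proved via a two-dimensional Berry--Esseen bound and Sheppard's formula), plus empirical-error testing via Hoeffding. Your Step~1 decomposition of the inner product is in the right spirit of this distance estimate, but without the shift-graph ingredient the argument does not reach the claimed bound.
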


The proof of Theorem~\ref{th:imp_rand_sep} is based on a construction that leads to the question described next regarding the communication complexity of a particular block-composed function. Namely, consider the following ``majority composed with subset-parity with side information'' setup. Alice is given a sequence of subsets $S \triangleq (S^{(i)} \subseteq [n])_{i \in [k]}$ and a sequence of strings $x \triangleq (x^{(i)} \in \{0,1\}^n)_{i \in [k]}$, and Bob is given a sequence of subsets $T \triangleq (T^{(i)} \subseteq [n])_{i \in [k]}$ and a sequence of strings $y \triangleq (y^{(i)} \in \{0,1\}^n)_{i \in [k]}$. We consider the following distribution $\mu$ on $((S,x),(T,y))$. Independently for each $i \in [k]$, we sample $((S^{(i)},x^{(i)}),(T^{(i)},y^{(i)}))$ as follows: $S^{(i)}$ is a uniform-random subset and $T^{(i)}$ is an $\epsilon$-noisy\footnote{This means that the indicator vector of $T^{(i)}$ is obtained by independently flipping each coordinate of the indicator vector of $S^{(i)}$ with probability $\epsilon$.} copy of $S^{(i)}$, and independently $x^{(i)}$ is a uniform-random string and $y^{(i)}$ is an $\epsilon$-noisy copy of $x$. Here, $\epsilon$ is a positive parameter that can depend on $n$ and $k$. Alice and Bob wish to compute the function $\Maj \circ \SubsetParity((S,x),(T,y)) \triangleq \Sign\big( \sum_{i=1}^k (-1)^{\langle T^{(i)}, x^{(i)} \oplus y^{(i)} \rangle}\big)$ where $T^{(i)}$ denotes both the subset and its $0/1$ indicator vector, the inner product is over $\mathbb{F}_2$, and $x^{(i)} \oplus y^{(i)}$ is the coordinate-wise XOR of $x^{(i)}$ and $y^{(i)}$. What is the communication complexity of computing $\Maj \circ \SubsetParity$ with high probability over the distribution $\mu$? We prove the following lower bound.

\begin{lem}\label{le:rand_lb_std_quest}
Any $1$-way protocol computing $\Maj \circ \SubsetParity$ with high probability over the distribution $\mu$ should communicate $\Omega(k \cdot \epsilon \cdot n)$ bits.
\end{lem}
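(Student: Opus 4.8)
The plan is to view $\Maj\circ\SubsetParity$ as the majority of $k$ mutually independent ``atomic'' problems, to prove a one‑way communication lower bound for a single atom, and to combine the $k$ atomic bounds using the product structure of $\mu$ together with an anti‑concentration property of majority. Concretely, I would fix a $1$‑way protocol $\Pi$ (Alice $\to$ Bob) of length $m$ that outputs $\Maj\circ\SubsetParity$ correctly with probability $1-\delta_0$ for a small constant $\delta_0$, let $M=M(S,x)$ be Alice's (single) message, and write $b_i\triangleq(-1)^{\ip{T^{(i)}}{x^{(i)}\oplus y^{(i)}}}$, so that Bob must output $\Sign\big(\sum_{i\in[k]}b_i\big)$ from $(M,T,y)$. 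Since the triples $\big((S^{(i)},x^{(i)}),(T^{(i)},y^{(i)})\big)$ are independent across $i\in[k]$ under $\mu$, subadditivity of mutual information gives $\sum_{i\in[k]}I\big((S^{(i)},x^{(i)})\,;\,M\big)\le H(M)\le m$; this is the only place the product structure enters, and it is what produces the factor $k$.

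\textbf{The atomic lower bound.} Next I would prove: if, from his view $W_i\triangleq(M,T^{(i)},y^{(i)})$, Bob guesses $b_i$ correctly with probability at least $1/2+\theta$, then $I\big((S^{(i)},x^{(i)})\,;\,M\big)=\Omega(\eps n)$, provided $\theta\ge\exp(-c\,\eps n)$ for a suitable absolute constant $c$. The idea: write $b_i=(-1)^{\ip{T^{(i)}}{z^{(i)}}}\cdot(-1)^{\ip{T^{(i)}}{y^{(i)}}}$ with $z^{(i)}\triangleq x^{(i)}\oplus y^{(i)}$ having i.i.d.\ $\mathrm{Bern}(\eps)$ coordinates; Bob knows the second factor, so he must predict the parity $\ip{T^{(i)}}{z^{(i)}}$. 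With high probability $|T^{(i)}|=\Theta(n)$, and $y^{(i)}$ gives Bob no information about $z^{(i)}$ beyond its prior, so before seeing $M$ his bias on this parity is only $\frac12(1-2\eps)^{|T^{(i)}|}=\exp(-\Omega(\eps n))$. Raising the posterior bias to $\theta$ forces him, through $M$, to essentially resolve all but $O\big(\frac1\eps\log\frac1\theta\big)$ of the $\mathrm{Bern}(\eps)$ coordinates of $z^{(i)}$ lying in $T^{(i)}$: if $q$ and $p$ denote the prior and posterior of $z^{(i)}|_{T^{(i)}}$, then bias $\ge\theta$ implies $D(p\,\|\,q)=\Omega(\eps|T^{(i)}|)-O(\log\frac1\theta)=\Omega(\eps n)$, and since $y^{(i)}\perp M\mid(S^{(i)},x^{(i)})$ this divergence is charged to $I\big((S^{(i)},x^{(i)})\,;\,M\big)$. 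The auxiliary subset $S^{(i)}$ does not help here, as it does not locate $T^{(i)}$ to within Hamming distance $o(\eps n)$.

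\textbf{Assembling the bounds.} To combine, I would use that the $b_i$ are independent and only $\exp(-\Omega(\eps n))$‑biased, so $\sum_i b_i$ is an essentially unbiased length‑$k$ walk whenever $\eps n=\Omega(\log k)$ (with a sufficiently large constant, as in the application), and invoke the standard anti‑concentration fact that a predictor achieving advantage $\Omega(1)$ on the majority must predict $\Omega(k)$ of the individual $b_i$ with non‑negligible advantage: if only $s$ blocks are predicted with advantage above a threshold $\theta$, the best attainable advantage on the majority is $O(\sqrt{s/k})+O(\sqrt k\,\theta)$, so taking $\theta$ of order $1/k$ (which is $\ge\exp(-c\,\eps n)$ precisely because $\eps n=\Omega(\log k)$) forces $s=\Omega(k)$. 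By the atomic bound each of those $\Omega(k)$ blocks contributes $\Omega(\eps n)$ to $\sum_i I\big((S^{(i)},x^{(i)})\,;\,M\big)$, hence $m\ge\sum_{i\in[k]}I\big((S^{(i)},x^{(i)})\,;\,M\big)=\Omega(k\,\eps n)$, which is the claim. (In the complementary regime $\eps n=O(\log k)$ the statement is vacuous: Bob would already predict each block, hence the majority, with constant advantage and no communication; the regime $\eps n=\Omega(\log k)$, guaranteed by $k=o(\exp(\sqrt n))$ with $\eps n=\Theta(\sqrt n)$, is exactly where the lemma is invoked in the proof of Theorem~\ref{th:imp_rand_sep}.)

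\textbf{The main obstacle.} The delicate step is the anti‑concentration/direct‑sum argument: conditioned on $M$ the $k$ blocks are no longer exactly independent, so ``Bob's posterior on block $i$'' is not literally an independent random variable and the heuristic $\big|\E[\sum_i b_i\mid W]\big|\approx\sqrt{\#\{\text{known blocks}\}}$ (rather than the much larger $\#\{\text{known blocks}\}$) must be justified. I expect this to require either (i) a conditioning / cut‑and‑paste step showing that the slice $\{x:M(S,x)=M\}$, of density $2^{-m}$, is for a typical $M$ pseudorandom at the relevant low Fourier level up to negligible error, or (ii) a Fourier‑analytic anti‑concentration lemma for $\Sign(\sum_i b_i)$ robust to the near‑product posteriors created by short one‑way messages. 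A secondary, more routine point is establishing the atomic estimate $D(p\,\|\,q)=\Omega(\eps n)$ for an arbitrary — not necessarily product — posterior $p$; this uses that the parity character is ``high‑frequency'' against the extremely sparse prior $q$, which is what rules out correlated posteriors attaining bias $\theta$ at sub‑linear divergence.
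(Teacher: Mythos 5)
Your strategy — subadditivity of mutual information for the factor $k$, an ``atomic'' information lower bound per block, and an anti‑concentration step to force $\Omega(k)$ biased blocks — is genuinely different from the paper's, and unfortunately both pillars you rest on have gaps that are not merely technical.

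\textbf{The atomic KL bound is false for non-product posteriors.} You claim that if the posterior $p$ of $z^{(i)}\big|_{T^{(i)}}$ (given Bob's view) attains parity bias $\theta$, then $D(p\,\|\,q)=\Omega(\epsilon n)-O(\log(1/\theta))$, and you flag the non‑product case as ``secondary, more routine.'' It is not: take $p$ to be $q$ conditioned on $\langle T^{(i)},z^{(i)}\rangle=0$. Then $p$ attains parity bias exactly $1/2$, yet $D(p\,\|\,q)=-\log q\big(\langle T^{(i)},z^{(i)}\rangle=0\big)=-\log\big(\tfrac12+\tfrac12(1-2\eta)^{|T^{(i)}|}\big)\approx 1$ bit. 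So there is no information‑theoretic barrier of the form you want; the ``high‑frequency character vs.\ sparse prior'' intuition simply does not yield a super‑constant divergence. The real reason a short Alice message cannot produce that posterior is structural, not information‑theoretic: Alice knows only $S^{(i)}$, not $T^{(i)}$, so she cannot emit $\langle T^{(i)},x^{(i)}\rangle$, and $\langle S^{(i)},x^{(i)}\rangle$ is uncorrelated with the target. Capturing this constraint is exactly what the discrepancy method does (Lemma~\ref{le:soda_disc_up_bd}, the base lower bound of \cite{ghazi2015communication}), and a pure posterior‑divergence argument does not capture it.

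\textbf{The anti-concentration step does not follow.} You observe yourself that after conditioning on $M$ the blocks are no longer independent, and that this is the ``main obstacle.'' It is: one can have $\E_p[b_i]=0$ for \emph{every} $i$ yet $\E_p[\Sign(\sum_i b_i)]=\Omega(1)$ (e.g.\ $\sum_i b_i$ equal to $+1$ with probability $2/3$ and $-2$ with probability $1/3$). So ``constant advantage on $\Maj$'' does not imply ``$\Omega(k)$ marginals biased by $\theta$,'' and the chain from majority advantage to per‑block advantage to per‑block information simply breaks. Neither of the two patches you sketch (pseudorandom slices of density $2^{-m}$, or a robust Fourier anti‑concentration lemma for near‑product posteriors) is carried out, and the counterexample above suggests the second one is false as stated.

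\textbf{How the paper avoids both issues.} The paper never isolates per‑block marginals. It (i) lower‑bounds the \emph{intermediate information cost} $I\big((\langle T^{(i)},X^{(i)}\rangle)_{i\in[k]};M\mid Y,T\big)\ge\beta k$ by a simulation reduction (Protocol~\ref{alg:sim_prot} plus the Closeness Lemma~\ref{le:close_lem}) to the Gap‑Hamming problem $\HD_k$ over the uniform distribution, quoting the one‑way information lower bound of \cite{woodruff2007efficient} — note $\HD_k$ is itself a majority, which is what handles the anti‑concentration you were struggling with; then (ii) extracts a guesser for the full $k$‑bit intermediate vector that succeeds with probability $(0.5+\gamma)^k$ via a min‑entropy argument (Lemma~\ref{le:estimation_k_d}); and finally (iii) applies the strong direct product theorem for discrepancy of \cite{lee2008direct} on top of the discrepancy‑based base bound to obtain $\Omega(k\epsilon n)$. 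The direct‑product theorem is what replaces your mutual‑information subadditivity, and the discrepancy base bound is what replaces your atomic KL estimate. Both substitutions are essential, not cosmetic.
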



In Section~\ref{subsec:over_det}, we outline the proof of Theorem~\ref{th:det_sep} and explain how it leads to the setup of Lemma~\ref{le:det_lb_std_quest} and how we prove Lemma~\ref{le:det_lb_std_quest}. In Section~\ref{subsec:over_rand}, we outline the proof of Theorem~\ref{th:imp_rand_sep} and explain how it leads to the setup of Lemma~\ref{le:rand_lb_std_quest} and how we prove Lemma~\ref{le:rand_lb_std_quest}.

Before doing so, we discuss some conceptual implications of our results.

\subsection{Implications}\label{subsec:imp}

Functional uncertainty models much of the day-to-day interactions among humans, where a person is somewhat aware of the objectives of the other person she is interacting with, but do not know them precisely. Neither person typically knows exactly what aspects of their own knowledge may be relevant to the interaction,  yet they do manage to have a short conversation. This is certainly a striking phenomenon, mostly unexplained in mathematical terms. This line of works aims to explore such phenomena. It is important to understand what mechanisms may come into play here, and what features play a role. Is the ability to make random choices important? Is shared information crucial? Is there a particular measure of distance between functions that makes efficient communication feasible? In order to understand such questions, one first needs to have a ground-level understanding of communication with functional uncertainty. This work tackles several basic questions that remain unexplored.

An ideal model for communication would only assume a constant amount of perfectly shared context between the sender and receiver, such as the knowledge of an encoding/decoding algorithm, one universal Turing machine, etc. Solutions to most interesting communication problems seem to assume a shared information which grows with the length of the inputs. Recent work showed that in many of these scenarios some assumptions about the shared context can be relaxed to an imperfect sharing, but these results are often brittle and break when two or more contextual elements are simultaneously assumed to be imperfectly shared. Our work raises the question of whether {\em imperfectly shared randomness} would be sufficient to overcome {\em functional uncertainty}. We show that this is indeed the case for product distributions, but the loss for non-product distributions might be much larger (for this and other open questions, we refer the reader to our conclusion Section~\ref{sec:conc}). Such results highlight the
delicate nature of the role of shared context in communication. They
beg for a more systematic study of communication
which at the very least should be
able to mimic the aims, objectives and phenomena
encountered in human communication.

\begin{note}\label{rem:phil_obj}
As mentioned in Footnote~\ref{ft:phil_obj}, the uncertain model is clearly a generalization of Yao's model. Strictly speaking, the uncertain model can also be viewed as a particular case of Yao's model by regarding the function(s) that is being computed as part of the inputs of Alice and Bob, which results in an exponential blow-up in the input-size. This latter view turns out to be fruitless for our purposes. Indeed, from this perspective, all the different well-studied communication functions (such as Equality, Set Disjointness, Pointer Jumping, etc.) are regarded as special cases of one ``universal function''! More importantly, this view completely blurs the distinction between the \emph{goal} of the communication (i.e., the function to compute) and the inputs of the parties. On a technical level, it does not simplify the task of proving the lower bounds in Theorems~\ref{th:det_sep} and \ref{th:imp_rand_sep} in any way since it does not capture the promise that the two functions (given to Alice and Bob) are close in Hamming distance. Thus, in the rest of this paper, we stick to the former view and use the expressions ``uncertain model'' and ``standard model" to refer to the setups with and without uncertainty, respectively.
\end{note}

\section{Overview of Proofs}
\subsection{Overview of Proof of Theorem~\ref{th:det_sep}}\label{subsec:over_det}
\paragraph*{Reduction to Lemma~\ref{le:det_lb_std_quest}.} 
In order to prove Theorem~\ref{th:det_sep}, we need to devise a function class for which circumventing the uncertainty is much easier using public randomness than using private randomness. One general setup in which Bob can leverage public randomness to resolve some uncertainty regarding Alice's knowledge is the following ``small-set intersection'' problem. Assume that Alice is given a subset $S \subseteq [n]$, and Bob is given a subset $T \subseteq [n]$ such that $T$ contains $S$ and $|T| = \ell$, where we think of $\ell$ as being a large constant. Here, Bob knows that Alice has a subset of his own $T$ but he is uncertain which subset Alice has. Using public randomness, a standard $1$-way hashing protocol communicating $\tilde{O}(\ell)$ bits allows Bob to determine $S$ with high probability. On the other hand, using only private randomness, the communication complexity of this task is $\Theta(\log\log{n})$ bits.

With the above general setup in mind, we consider functions $f_S$ indexed by small subsets $S$ of coordinates on which they depend. Since we want the functions $f_S$ and $f_T$ to be close in Hamming distance, we enforce $|T \setminus S|$ to be small for every pair $(f_S,f_T)$ of functions in our class, and we let each function $f_S$ be ``noise-stable''. Since we want our function $f_S$ to genuinely depend on all coordinates in $S$, the majority function $f_S(x,y) = \Sign(\sum_{i \in S} x_i y_i)$ for $x,y \in \{\pm 1\}^n$ arises as a natural choice. We also let $x$ and $y$ be independent uniform-random strings. In this case, it can be seen that if $|T \setminus S|$ is a small constant fraction of $|T|$, then the quadratic polynomials $\sum_{i \in S} x_i y_i$ and $\sum_{i \in T} x_i y_i$ behave like standard Gaussians with correlation close to $1$, and the quadratic threshold functions $f_S(x,y)$ and $f_T(x,y)$ are thus close in Hamming distance.

Note that in the certain case, i.e., when both Alice and Bob agree on $S$, they can easily compute $f_S(x,y)$ by having Alice send to Bob the $\ell$ bits $(x_i)_{i \in S}$. Moreover, if Alice and Bob are given access to public randomness in the uncertain case, Bob can figure out $S$ via the hashing protocol mentioned above using $\tilde{O}(\ell)$ bits of communication, which would reduce the problem to the certain case\footnote{Alternatively, Alice and Bob can run the protocol of \cite{ghazi2015communication} which would communicate $O(\ell)$ bits.}. The bulk of the proof will be to lower-bound the private-coin uncertain communication. Note that by the choice of our function class and distribution, this is equivalent to proving 
Lemma~\ref{le:det_lb_std_quest}. 

\paragraph*{Proof of Lemma~\ref{le:det_lb_std_quest}.} 
To prove Lemma~\ref{le:det_lb_std_quest},
the high-level intuition is that a protocol solving the uncertain problem should be essentially revealing to Bob the subset $S$ that Alice holds. Formalizing this intuition turns out to be challenging, especially that a private-coin protocol solving the uncertain problem is only required to output a \emph{single bit} which is supposed to equal the Boolean function $f_T(x,y)$ with high probability over $(x,y)$ and over the private randomness. In fact, this high-level intuition can be shown not to hold in certain regimes\footnote{For example, for constant error probabilities, the $1$-way randomized communication complexity of small-set intersection is known to be $\Theta(\ell\cdot\log(\ell))$ bits (see, e.g., \cite{brody2014beyond}) whereas the public-coin protocol of \cite{ghazi2015communication} can compute $f_T$ with $O(\ell)$ bits of communication.}. Moreover, the standard proofs that lower bound the communication of small-set intersection do not extend to lower-bound the communication complexity of $f_T$.

To lower-bound the private-coin communication of solving the uncertain task by a growing function of $n$, we consider the following \emph{shift communication game}. Bob is given a sorted tuple $\sigma = (\sigma_1,\dots,\sigma_t)$ of integers with $1 \le \sigma_1 < \dots < \sigma_t \le n$, and Alice is either given the prefix $(\sigma_1,\dots,\sigma_{t-1})$ of length $t-1$ of $\sigma$ or the suffix $(\sigma_2,\dots,\sigma_t)$ of length $t-1$ of $\sigma$. Bob needs to determine the input of Alice. We show that a celebrated lower bound of Linial \cite{linial1992locality} on the \emph{chromatic number} of certain related graphs implies a lower bound of $\log^{(t+1)}(n)$ on the private-coin communication of the shift communication game. We then show that any private-coin protocol solving the uncertain task can be turned into a private-coin protocol solving the shift-communication game with a constant (i.e., independent of $n$) blow-up in the communication (see Protocol~\ref{alg:red_prot}).

\subsection{Overview of Proof of Theorem~\ref{th:imp_rand_sep}}\label{subsec:over_rand}
\paragraph*{Reduction to Lemma~\ref{le:rand_lb_std_quest}.}
The proof of Theorem~\ref{th:imp_rand_sep} builds on the lower-bound construction of \cite{ghazi2015communication} which we recall next. Let $\mu$ be the distribution over pairs $(x,y) \in \{0,1\}^{2n}$ where $x$ is uniform-random and $y$ is an $\epsilon$-noisy copy of $x$ with $\epsilon = \sqrt{\delta/n}$. Then, the mutual information between $x$ and $y$ satisfies $I \approx n$. For each $S \subseteq [n]$, consider the function $f_S(x,y) \triangleq \langle S, x \oplus y \rangle$ where the inner product is over $\mathbb{F}_2$, $x \oplus y$ denotes the coordinate-wise XOR of $x$ and $y$, and $S$ is used to denote both the subset and its $0/1$ indicator vector. Moreover, consider the class $\mathcal{F}$ of all pairs of functions $(f_S,f_T)$ where $|S \triangle T| \le \sqrt{\delta n}$. It can be seen that for such $S$ and $T$, the distance between $f_S$ and $f_T$ under $\mu$ is at most $\delta$. If Alice and Bob both know $S$, then Alice can send the single bit $\langle S, x \rangle$ to Bob who can then output the correct answer $\langle S, x \oplus y \rangle = \langle S, x \rangle \oplus \langle S, y \rangle$. This means that the certain communication is $1$ bit. Using the well-known discrepancy method, \cite{ghazi2015communication} showed a lower bound of $\Omega(\sqrt{n})$ bits on the communication of the associated uncertain problem. Since in this case $I \approx n$, this in fact lower-bounds the uncertain communication by $\Omega(\sqrt{I})$ bits. For this construction, this lower bound turns out to be tight up to a logarithmic factor.

To improve the lower-bound from $\sqrt{I}$ to $\sqrt{k} \cdot \sqrt{I}$, we consider the following ``block-composed'' framework. Let $\{f_{S^{(i)}}(x^{(i)}, y^{(i)}): i \in [k]\}$ be $k$ independent copies of the above base problem of \cite{ghazi2015communication} and consider computing the composed function $g\big(f_{S^{(1)}}(x^{(1)}, y^{(1)}), \dots, f_{S^{(k)}}(x^{(k)}, y^{(k)})\big)$ for some outer function $g:\{0,1\}^k \to \{0,1\}$. For any choice of $g$, the certain communication of the composed function would be at most $k$ bits. When choosing the outer function $g$ to use in our lower bound, we thus have two objectives to satisfy. First, $g$ has to be sufficiently \emph{hard} in the sense that its average-case decision tree complexity with respect to the uniform distribution on $\{0,1\}^k$ should be $\Omega(k)$; otherwise, it will not be the case that the uncertain communication of computing $g$ on $k$ copies of the base problem is at least $k$ times the uncertain communication of the base problem. Second, $g$ has to be \emph{noise stable} in order to be able to upper bound the distance between $g\big(f_{S^{(1)}}(\cdot), \dots, f_{S^{(k)}}(\cdot)\big)$ and $g\big(f_{T^{(1)}}(\cdot), \dots, f_{T^{(k)}}(\cdot)\big)$.

Note that setting $g$ to be a dictator function would satisfy the noise-stability property, but it clearly would not satisfy the hardness property, as the composed function would be equal to the base function and would thus have uncertain communication $\tilde{O}(\sqrt{n})$ bits. Another potential choice of $g$ is to set it to the parity function on $k$ bits. This function would satisfy the hardness property, but it would strongly violate the noise stability property that is crucial to us. This leads us to setting $g$ to the majority function on $k$ bits, which is well-known to be noise stable, and has average-case decision-tree complexity $\Omega(k)$ with respect to the uniform distribution on $\{0,1\}^k$. In fact, the noise stability of the majority function readily implies an upper bound of $O(\sqrt{\delta})$ on the distance between any pair of composed functions that are specified by tuples of subsets $(S^{(1)}, \dots, S^{(k)})$ and $(T^{(1)}, \dots, T^{(k)})$ with $|S^{(i)} \triangle T^{(i)}| \le \sqrt{\delta n}$ for each $i \in [k]$. The crux of the proof will be to lower-bound the uncertain communication of the majority-composed function by $\Omega(k \sqrt{n})$, which amounts to proving Lemma~\ref{le:rand_lb_std_quest}. Since in this block-composed framework the mutual information satisfies $I \approx kn$, this would imply the lower bound of $\Omega(\sqrt{k} \sqrt{I})$ on the uncertain communication in Part~\ref{part:uncert_lb} of Theorem~\ref{th:imp_rand_sep}.

\paragraph*{Proof of Lemma~\ref{le:rand_lb_std_quest}.} 
We first point out that the average-case \emph{quantum} decision tree complexity of $\Maj_k$ with respect to the uniform distribution is $\tilde{O}(\sqrt{k})$ \cite{ambainis2001average}. This implies that any communication complexity lower-bound method that extends to the quantum model cannot prove a lower bound larger than $\tilde{O}(\sqrt{k} \cdot \sqrt{n})$ on our uncertain communication\footnote{Thus, since $I \approx k\cdot n$ in our block-composed framework, such methods cannot be used to improve the lower-bound of $\Omega(\sqrt{I})$ of \cite{ghazi2015communication} by more than logarithmic factors.}. In particular, we cannot solely rely on the discrepancy bound (as done in \cite{ghazi2015communication}), since this bound is known to lower-bound quantum communication. Similarly, the techniques of \cite{sherstov2008pattern,shi2007quantum,lee2010composition} rely on the generalized discrepancy bound (originally due to \cite{klauck2001lower}) which also lower-bounds quantum communication. Moreover, the recent results of \cite{molinaro2015amplification} only apply to product distributions (i.e., where Alice's input is independent of Bob's input) in contrast to our case where the inputs of Alice and Bob are very highly-correlated. Finally, the recent works of \cite{goos2015rectangles,goos2015landscape} do not imply lower bounds on the average-case complexity with respect to the distribution that arises in our setup.

To circumvent the above obstacles, we use a new approach that is tailored to our setup and that is outlined next. Let $\Pi$ be a $1$-way protocol solving the uncertain task with high probability. We consider the information that $\Pi$ reveals about the \emph{inputs to the outer function}, i.e., about the length-$k$ binary string $\big(f_{S^{(1)}}(x^{(1)}, y^{(1)}), \dots, f_{S^{(k)}}(x^{(k)}, y^{(k)})\big)$. We call this quantity the \emph{intermediate information cost} of $\Pi$, and we argue that it is at least $\Omega(k)$ bits. To do so, we recall the Hamming distance function $\HD_k$ defined by $\HD_k(u,v) = 1$ if the Hamming distance between $u$ and $v$ is at least $k/2$ and $\HD_k(u,v) = 0$ otherwise. We upper bound the information complexity of computing $\HD_k$ over the uniform distribution on $\{0,1\}^{2k}$ by the intermediate information cost of $\Pi$. We do so by giving an information-cost preserving procedure (Protocol~\ref{alg:sim_prot}) where Alice and Bob are given independent uniformly distributed $u$ and $v$ (respectively) and use their private and public coins in order to simulate the input distribution $(X,Y)$ of our uncertain problem. The known $1$-way lower bound of \cite{woodruff2007efficient} on $\HD_k$ under the uniform distribution then implies that $\Pi$ reveals $\Omega(k)$ bits  of information to Bob about the tuple $\big(f_{S^{(1)}}(x^{(1)}, y^{(1)}), \dots, f_{S^{(k)}}(x^{(k)}, y^{(k)})\big)$. This allows Bob to guess this tuple with probability $0.51^k$. We then apply the strong direct product theorem for discrepancy of \cite{lee2008direct} which, along with the discrepancy-based lower bound on the communication of the base uncertain problem of \cite{ghazi2015communication}, implies that $\Pi$ should be communicating at least $\Omega(k \sqrt{n})$ bits.

\paragraph*{Organization of the rest of the paper}
In Section~\ref{sec:prelim}, we give some preliminaries that will be useful to us. In Section~\ref{sec:det_sep}, we prove Theorem~\ref{th:det_sep}. In Section~\ref{sec:rand_imp_sep}, we prove Theorem~\ref{th:imp_rand_sep}. The proof of Theorem~\ref{le:isr} is given in Section~\ref{sec:isr_app_pf}. In Section~\ref{sec:conc}, we conclude with some interesting open questions. A useful lemma that is used in Section~\ref{sec:det_sep} appears in Appendix~\ref{sec:berry_esseen}.

\section{Preliminaries}\label{sec:prelim}

For a real number $x$, we define $\Sign(x)$ to be $1$ if $x \geq 0$ and $0$ if $x<0$. For a set $S$, we write $X \in_R S$ to indicate that $X$ is a random variable that is uniformly distributed on $S$. For a positive integer $n$, we let $[n] \triangleq \{1,\dots,n\}$. For a real number $x$, we denote $\exp(x)$ a quantity of the form $2^{\Theta(x)}$. For any two subsets $S, T \subseteq [n]$, we let $S \setminus T$ be the set of all elements of $S$ that are not in $T$. We let $S \triangle T$ be the symmetric difference of $S$ and $T$, i.e., the union of $S \setminus T$ and $T \setminus S$. For functions $f,g:\mathcal{X} \times \mathcal{Y} \to \{0,1\}$ and any distribution $\mu$ on $\mathcal{X} \times \mathcal{Y}$, we define the distance $\Delta_{\mu}(f,g) \triangleq \Pr_{(x,y) \sim \mu}[f(x,y) \neq g(x,y)]$ as the Hamming distance between the values of $f$ and $g$, weighted with respect to $\mu$. If $\mu$ is the uniform distribution on $\mathcal{X} \times \mathcal{Y}$, we drop the subscript $\mu$ and denote $\Delta_{\mu}$ by $\Delta$. We next recall the standard communication complexity model of Yao \cite{Yao}.
 In Definitions~\ref{def:det_cc}, \ref{def:priv_cc} and \ref{def:dist_cc} below, we let $f\colon \mathcal{X} \times \mathcal{Y} \rightarrow \zo$ be a Boolean function.

\begin{definition}[Deterministic Communication Complexity]\label{def:det_cc}
	The \emph{two-way (resp. one-way) deterministic communication complexity} of $f$, denoted by $\CC(f)$ (resp. $\owCC(f)$), is defined as the minimum over all two-way (resp. one-way) deterministic protocols $\Pi$ that compute $f$ correctly on every input pair, of the communication cost of $\Pi$.
\end{definition}

\begin{definition}[Private-Coin Communication Complexity]\label{def:priv_cc}
	The \emph{two-way (resp. one-way) private-coin communication complexity} of $f$ with error $\epsilon$, denoted by $\PrivCC_{\epsilon}(f)$ (resp. $\owPrivCC_{\epsilon}(f)$), is defined as the minimum over all two-way (resp. one-way) private-coin protocols $\Pi$ that compute $f$ with probability at least $1-\epsilon$ on every input pair, of the communication cost of $\Pi$.
\end{definition}

The quantities in Definitions~\ref{def:det_cc} and \ref{def:priv_cc} can be similarly defined for partial functions $f$.

\begin{definition}[Distributional Communication Complexity]\label{def:dist_cc}
	Let $\mu$ be a distribution over $\mathcal{X} \times \mathcal{Y}$. The \emph{two-way (resp. one-way) distributional
		communication complexity} of $f$ over $\mu$ with error $\epsilon$, denoted by
	$\CC^{\mu}_\epsilon(f)$ (resp. $\owCC^{\mu}_\epsilon(f)$), is the minimum over all two-way (resp. one-way) protocols $\Pi$ that compute $f$ with probability $1-\epsilon$ over $\mu$, of the communication cost of $\Pi$.
\end{definition}

We next recall the model of communication with contextual uncertainty. For more details on this model, we refer the reader to \cite{ghazi2015communication}. In this setup, Alice knows a function $f$ and is given an input $x$, and Bob knows a function $g$ and is given an input $y$. Let $\calf \subseteq \{f\colon \mathcal{X} \times \mathcal{Y} \to \zo\}^2$ be a family of pairs of Boolean functions with domain $\mathcal{X}\times \mathcal{Y}$, and $\mu$ be a distribution on $\mathcal{X}\times \mathcal{Y}$. We say that a public-coin (resp. private-coin) protocol $\Pi$ $\epsilon$-computes $\calf$ over $\mu$ if for every $(f,g) \in \calf$, we have that $\Pi$ outputs the value $g(x,y)$ with probability at least $1-\epsilon$ over the randomness of $(x,y) \sim \mu$ and over the public (resp. private) randomness of $\Pi$.

\begin{definition}[Contextually Uncertain Communication Complexity]\label{def:ucc_gen_sepl}
  Let $\mu$ be a distribution on $\mathcal{X} \times \mathcal{Y}$ and
$\calf \subseteq \{f\colon \mathcal{X} \times \mathcal{Y} \to \zo\}^2$. The \emph{two-way (resp. one-way) public-coin communication complexity of $\calf$ under contextual uncertainty}, denoted $\CCU^\mu_{\epsilon}(\calf)$ (resp. $\owCCU^\mu_{\epsilon}(\calf)$), is the minimum over all two-way (resp. one-way) public-coin protocols $\Pi$
that $\epsilon$-compute $\calf$ over $\mu$, of the maximum communication
complexity of $\Pi$ over $(f,g) \in \calf$, $(x,y)$ from the support of
$\mu$ and settings of the public coins.

Similarly, the \emph{two-way (resp. one-way) private-coin communication complexity of $\calf$ under contextual uncertainty} $\PrivCCU^{\mu}_{\epsilon}(\mathcal{F})$ (resp. $\owPrivCCU^{\mu}_{\epsilon}(\mathcal{F})$) is defined by restricting to two-way (resp. one-way) private-coin protocols.
\end{definition}

\section{Construction for Private-Coin Uncertain Protocols}\label{sec:det_sep}

We now describe the construction that is used to prove Theorem~\ref{th:det_sep}. Each function in our universe is specified by a subset $S \subseteq [n]$ and is of the form $f_S:\{\pm 1\}^n \times \{\pm 1\}^n \to \{ 0,1\}$ with $f_S(X,Y) \triangleq \Sign( \sum_{i \in S} X_i Y_i)$ for all $X,Y \in \{\pm 1\}^n$. The function class is then defined by
\begin{equation*}
\mathcal{F}_{\delta} \triangleq \{ (f_S,f_T): S \subseteq T, ~ |T| = \ell \text{ and } |T \setminus S| \le \delta' \cdot \ell \},
\end{equation*}
where $\delta' = \alpha \cdot \delta^2$ for some sufficiently small positive absolute constant $\alpha$, and $\ell = \ell(\delta)$ is a sufficiently large function of $\delta$. The input pair $(X,Y)$ is drawn from the uniform distribution on $\{\pm 1\}^{2n}$. We start with the proof of Part~\ref{part:det_dist} of Theorem~\ref{th:det_sep}. It essentially follows from the fact that the the polynomials $\sum_{i \in S} X_i Y_i$ and $\sum_{i \in T} X_i Y_i$ behave like zero-mean Gaussians with unit-variance and correlation $\sqrt{1-\delta'}$.

We will need the following well-known fact which follows from Sheppard's forumla \cite{sheppard1899application}.
\begin{fact}\label{fact:sheppard}
	If $(X,Y)$ is a pair of zero-mean Gaussians with correlation $\Ex[XY] = \rho$, then
	\begin{equation*}
	\Pr[\Sign(X) \neq \Sign(Y)] = \frac{\arccos(\rho)}{\pi}.
	\end{equation*}
\end{fact}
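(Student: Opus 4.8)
The plan is to give a short geometric argument. First I would reduce to the unit-variance case: replacing $X$ by $X/\sqrt{\Var(X)}$ and $Y$ by $Y/\sqrt{\Var(Y)}$ changes neither $\Sign(X)$, $\Sign(Y)$, nor the correlation $\rho$, so we may assume $\Ex[X^2]=\Ex[Y^2]=1$. Next, fix two unit vectors $u,v\in\R^2$ with $\langle u,v\rangle=\rho$, and let $\theta\triangleq\arccos(\rho)\in[0,\pi]$. Let $G=(G_1,G_2)$ be a standard Gaussian vector in $\R^2$. Then $\big(\langle u,G\rangle,\langle v,G\rangle\big)$ is jointly Gaussian with zero mean, unit variances, and covariance $\Cov(\langle u,G\rangle,\langle v,G\rangle)=\langle u,v\rangle=\rho$, so it has the same distribution as $(X,Y)$; hence it suffices to compute $\Pr\big[\Sign(\langle u,G\rangle)\neq\Sign(\langle v,G\rangle)\big]$.

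Next I would observe that the two signs disagree exactly when $G$ lies in the symmetric difference $H_u\triangle H_v$ of the half-planes $H_u\triangleq\{g\in\R^2:\langle u,g\rangle\ge0\}$ and $H_v\triangleq\{g\in\R^2:\langle v,g\rangle\ge0\}$. A direct check shows that, since $\langle u,v\rangle=\cos\theta$, the set $H_u\triangle H_v$ is a pair of opposite sectors with total angular measure $2\theta$ (for instance, it is empty when $\theta=0$, it is all of $\R^2$ when $\theta=\pi$, and it is a pair of opposite quadrants of total angle $\pi$ when $u\perp v$). Since the law of $G$ is rotationally invariant, the probability that $G$ falls in a union of sectors equals its total angular measure divided by $2\pi$ (the two boundary lines have probability zero). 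Therefore $\Pr\big[\Sign(\langle u,G\rangle)\neq\Sign(\langle v,G\rangle)\big]=\frac{2\theta}{2\pi}=\frac{\theta}{\pi}=\frac{\arccos(\rho)}{\pi}$, which is the claim.

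There is no genuine obstacle here; the only two points needing verification are the elementary distributional identity for $\big(\langle u,G\rangle,\langle v,G\rangle\big)$ and the angular measure of $H_u\triangle H_v$, both routine. If a purely analytic proof were preferred, I would instead set $p(\rho)\triangleq\Pr[\Sign(X)\neq\Sign(Y)]$, note $p(0)=\tfrac12$ by independence, differentiate the bivariate-normal integral over the open quadrants $\{xy<0\}$ with respect to $\rho$ using the identity $\partial_\rho\phi_\rho(x,y)=\partial_x\partial_y\phi_\rho(x,y)$ for the standard bivariate normal density $\phi_\rho$, obtain $p'(\rho)=-\tfrac{1}{\pi\sqrt{1-\rho^2}}$, and integrate to recover $p(\rho)=\arccos(\rho)/\pi$. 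I would present the geometric proof, as it is shorter and more transparent.
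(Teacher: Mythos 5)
Your geometric argument is correct and is the standard modern proof of Sheppard's formula. The paper, however, does not prove this statement at all: it labels it a Fact and simply cites Sheppard's original 1899 paper, so there is no proof in the paper to compare against. Your two routes (rotational invariance of the standard 2D Gaussian, or differentiating the bivariate-normal integral with respect to $\rho$) are both classical and both correct; the geometric one is indeed the cleaner. One small point of interpretation: the paper writes ``correlation $\Ex[XY]=\rho$'' without explicitly saying the variances are $1$; your opening normalization step implicitly reads $\rho$ as the correlation coefficient $\Ex[XY]/\sqrt{\Var(X)\Var(Y)}$, which is the only reading under which the formula is scale-invariant and hence the intended one (and is the case in all of the paper's applications). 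With that understanding, your proof has no gaps.
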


We now prove Part~\ref{part:det_dist} of Theorem~\ref{th:det_sep}.
\begin{proof}[Proof of Part~\ref{part:det_dist} of Theorem~\ref{th:det_sep}]
	Let $S \subseteq T \subseteq [n]$ be such that $|T| = \ell$ and $|T \setminus S| \le \delta' \cdot \ell$. For fixed $\ell$, the distance $\Delta(f_S,f_T)$ decreases when $|T \setminus S|$ decreases. So it suffices to upper bound $\Delta(f_S,f_T)$  when $|T \setminus S| = \delta' \cdot \ell$. Assume that the coordinates in $T$ are $1 \le t_1 < t_2 < \dots < t_{\ell}$. Then, we define the random vectors $X', Y' \in \{0,1\}^{\ell}$ as $X'_i = X_{t_i} Y_{t_i}$ for all $i \in [\ell]$, and $Y'_i = X_{t_i} Y_{t_i}$ if $t_i \in S$ and $Y'_i = 0$ if $t_i \notin S$. We will apply the two-dimensional Berry-Esseen Theorem~\ref{th:2d_berry_ess_bin_not_id} to $(X',Y')$. To do so, first note that the random pairs $(X'_1,Y'_1), (X'_2,Y'_2), \dots, (X'_{\ell},Y'_{\ell})$ are independent. Moreover, for every $i \in [\ell]$ such that $t_i \in S$, the covariance matrix of $(X_i,Y_i)$ is given by $\Sigma_i = \begin{bmatrix}
	1       & 1 \\
	1       & 1 
	\end{bmatrix}$. On the other hand, for $i \in [\ell]$ such that $t_i \notin S$, the convariance matrix of $(X_i,Y_i)$ is given by $\Sigma_i = \begin{bmatrix}
	1       & 0 \\
	0       & 0 
	\end{bmatrix}$. Hence, the average (across the $\ell$ coordinates) covariance matrix is given by $\Sigma = \ell^{-1}\cdot \displaystyle\sum\limits_{i \in [\ell]} \Sigma_i = \begin{bmatrix}
	1      & 1-\delta' \\
	1-\delta'       & 1-\delta' 
	\end{bmatrix}$. The smallest and largest eigenvalues of $\Sigma$ are respectively given by
	\begin{subequations}
		\begin{empheq}{align}
		\lambda &\triangleq \frac{1-\delta'}{2} - \frac{\sqrt{5\cdot(1-\delta')^2 - 2 \cdot (1-\delta') + 1}}{2} + \frac{1}{2},\label{eq:min_eig_Sigma}\\
		\Lambda &\triangleq \frac{1-\delta'}{2} + \frac{\sqrt{5 \cdot (1-\delta')^2 - 2 \cdot (1-\delta') + 1}}{2} + \frac{1}{2}.\nonumber\label{eq:max_eig_Sigma}
		\end{empheq}
	\end{subequations}
	In Equation~(\ref{eq:min_eig_Sigma}), it can be checked that for $\delta' \in (0,1)$, $\lambda > 0$. By the two-dimensional Berry-Esseen Theorem~\ref{th:2d_berry_ess_bin_not_id}, we get that
	\begin{align}
	\Delta(f_S,f_T) &= \Pr[\Sign( \displaystyle\sum\limits_{i \in S} X_i Y_i) \neq \Sign( \displaystyle\sum\limits_{i \in T} X_i Y_i)]\nonumber\\ 
	&= \Pr[\Sign( \displaystyle\sum\limits_{i \in [\ell]} X'_i) \neq \Sign( \displaystyle\sum\limits_{i \in [\ell]} Y'_i)]\nonumber\\ 
	&= \Pr[\Sign(X'') \neq \Sign(Y'')] \pm O\bigg(\frac{1}{\lambda^{3/2} \cdot \sqrt{\ell}}\bigg),\label{eq:to_be_scaled}
	\end{align}
	where $(X'',Y'')$ is a pair of zero-mean Gaussians with covariance matrix $\Sigma$. We can scale $Y''$ to make it have unit-variance; this does not change its mean or the probability in Equation~(\ref{eq:to_be_scaled}). The covariance matrix becomes:
	$\Sigma' = \begin{bmatrix}
	1      & \sqrt{1-\delta'} \\
	\sqrt{1-\delta'}       & 1
	\end{bmatrix}$. By Sheppard's formula (i.e., Fact~\ref{fact:sheppard}), we deduce that
	\begin{align*}
	\Delta(f_S,f_T) &=  \frac{\arccos(\sqrt{1-\delta'})}{\pi} \pm O\bigg(\frac{1}{\lambda^{3/2} \cdot \sqrt{\ell}}\bigg)\\ 
	&= O(\sqrt{\delta'})\\ 
	&\le \delta,
	\end{align*}
	where we have used the facts that $\arccos(1-x) = O(\sqrt{x})$ for small positive values of $x$, that $\delta' = \alpha \cdot \delta^2$ for a sufficiently small positive absolute constant $\alpha$, and that $\ell$ is be a sufficiently large function of $\delta$.
\end{proof}

We now give the proof of Part~\ref{part:det_cert_ub} of Theorem~\ref{th:det_sep}, which is quite immediate.
\begin{proof}[Proof of Part~\ref{part:det_cert_ub} of Theorem~\ref{th:det_sep}]
	When both Alice and Bob know the subset $T \subseteq [n]$ which satisfies $|T| \le \ell$, Alice can send the sequence $(X_j: j \in T)$ of at most $\ell$ bits to Bob who can then output $f_T(X,Y)$.
\end{proof}

To prove Part~\ref{part:det_unc_lb} of Theorem~\ref{th:det_sep}, the next definition -- which is based on the graphs studied by Linial \cite{linial1992locality}-- will be crucial to us.

\begin{definition}[Shift Communication Game $\mathcal{G}_{m,t}$]
Let $m$ and $t$ be positive integers with $t \le m$. In the communication problem $\mathcal{G}_{m,t}$, Bob is given a sorted tuple $\sigma = (\sigma_1,\dots,\sigma_t)$ of distinct integers with $1 \le \sigma_1 < \dots < \sigma_t \le m$. In the YES case, Alice is given the prefix $(\sigma_1,\dots,\sigma_{t-1})$ of length $t-1$ of $\sigma$. In the NO case, Alice is given the suffix $(\sigma_2,\dots,\sigma_t)$ of length $t-1$ of $\sigma$. Alice and Bob need to determine which of the YES and NO cases occurs.
\end{definition}

Lemma~\ref{le:1_way_priv_G_lb} lower-bounds the private-coin communication complexity of $\mathcal{G}_{m,t}$. Its proof uses Linial's lower bound on the chromatic number of related graphs.
\begin{lem}\label{le:1_way_priv_G_lb}
There is an absolute constant $c$ such that for every sufficiently small $\epsilon > 0$, we have that $\PrivCC_{\epsilon}(\mathcal{G}_{m,t}) \geq c \cdot \log^{(t+2)}(m)$.
\end{lem}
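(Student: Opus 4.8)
The plan is to reduce the private-coin lower bound for $\mathcal{G}_{m,t}$ to a lower bound on the chromatic number of (essentially) the ``neighborhood graph'' that underlies Linial's locality argument. Define $G = G_{m,t}$ to be the graph whose vertices are the increasing $(t-1)$-tuples $1 \le \tau_1 < \cdots < \tau_{t-1} \le m$, and put an edge between $\tau$ and $\tau'$ whenever $\tau$ and $\tau'$ are respectively the length-$(t-1)$ prefix and suffix of a common increasing $t$-tuple $\sigma$; concretely this happens iff $(\tau_2,\dots,\tau_{t-1}) = (\tau'_1,\dots,\tau'_{t-2})$ and $\tau'_{t-1} > \tau_{t-1}$, in which case the witness $\sigma = (\tau_1,\dots,\tau_{t-1},\tau'_{t-1})$ is uniquely determined by the (oriented) edge. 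Up to forgetting orientations, $G$ is exactly the $(t-2)$-fold iterated shift graph on $[m]$, so the celebrated lower bound of Linial~\cite{linial1992locality} on the chromatic number of these graphs gives $\chi(G) = \Omega(\log^{(\Theta(t))}(m))$ — an iterated logarithm of $m$ whose depth grows linearly in $t$. The lemma then follows once we show that a cheap private-coin protocol for $\mathcal{G}_{m,t}$ yields a proper colouring of $G$ with few colours.

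For that conversion, fix a private-coin protocol $\Pi$ for $\mathcal{G}_{m,t}$ of communication cost $c$ and error $\epsilon$. For an edge $\{\tau,\tau'\}$ of $G$ with witness $\sigma$ (say $\tau$ the prefix side), the input pair $(\tau,\sigma)$ is a YES instance and $(\tau',\sigma)$ is a NO instance, so $\Pi$ outputs $1$ with probability $\ge 1-\epsilon$ on the former and $0$ with probability $\ge 1-\epsilon$ on the latter. Since the output is a deterministic function of the transcript, the transcript distributions of $\Pi$ on these two inputs are at total-variation distance at least $1-2\epsilon$. Now associate to each Alice input $\tau$ the distribution $\mu_\tau$, induced by Alice's private coins, over her \emph{deterministic} strategies in $\Pi$: for a one-way protocol such a strategy is simply the (at most) $c$-bit message, so $\mu_\tau$ lives on a universe of size $2^{O(c)}$, while for a two-way protocol it is an assignment of a bit to each of Alice's $\le 2^{O(c)}$ nodes in the cost-$c$ protocol tree, a universe of size $2^{2^{O(c)}}$. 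Round each $\mu_\tau$ coordinate-wise to precision $\eta$. If $\tau$ and $\tau'$ receive the same rounded vector then $\|\mu_\tau - \mu_{\tau'}\|_{TV} \le \tfrac12 \cdot (\text{universe size}) \cdot \eta$, and since the transcript on any input pair is a deterministic function of Alice's and Bob's (independent) strategies, the data-processing inequality shows the transcript distributions on $(\tau,\sigma)$ and $(\tau',\sigma)$ are within the same distance for \emph{every} Bob strategy; choosing $\eta$ so that this is below $1-2\epsilon$ (a positive constant, as $\epsilon$ is a sufficiently small constant) forces $\{\tau,\tau'\}$ to be a non-edge. Hence the rounded vector is a proper colouring of $G$, with at most $(1/\eta)^{\text{universe size}}$ colours, i.e.\ $2^{2^{O(c)}}$ colours in the one-way case and $2^{2^{2^{O(c)}}}$ in the two-way case.

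Combining $\chi(G) = \Omega(\log^{(\Theta(t))}(m))$ with the colour count and taking iterated logarithms of both sides yields $c = \Omega(\log^{(t+2)}(m))$ (in fact slightly more), which is the claimed bound. I expect the main obstacle to be twofold. First, invoking Linial's result in exactly the form needed: identifying the iterated-logarithm depth as the right linear function of $t$ under the conventions of this paper, and handling the boundary cases where $\log^{(\cdot)}(m)$ has already saturated at $1$ (where the statement is anyway trivial). Second, and more delicate, is the passage from a randomized protocol to a proper colouring — one must discretize the entire distribution over Alice's strategies (rather than, say, her single most likely message), so that ``same rounded colour'' genuinely certifies indistinguishability against \emph{every} Bob; this is precisely where the extra exponential level in the colour count arises (and hence why the iterated-log depth picks up a constant additive term) and where the hypothesis that $\epsilon$ is a sufficiently small constant is used. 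It is worth noting that, in contrast to separation arguments for Equality in the standard model, here it is the rigid combinatorial structure of the shift game — not a diagonalization — that drives the lower bound.
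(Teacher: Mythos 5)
Your proposal is correct and it reaches the claimed conclusion, but it takes a genuinely different route from the paper. The paper factors the argument into two blackbox steps: Lemma~\ref{le:ow_det_lb_Gmt} lower-bounds the \emph{deterministic} two-way complexity $\CC(\mathcal{G}_{m,t})$ by reducing two-way to one-way (exponential blow-up), observing that a short one-way deterministic message is literally a proper colouring of the shift graph on Alice's inputs, and invoking Linial's Theorem~\ref{th:linial_lb}; then it appeals to the generic relation $\PrivCC_\epsilon(f) = \Omega(\log\CC(f))$ (Fact~\ref{fact:priv_det_NK}, Kushilevitz--Nisan Theorem 3.14) to transfer the bound to private-coin protocols. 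You instead build a proper colouring \emph{directly} from the private-coin protocol in one shot: you discretize each Alice-input's induced distribution over her deterministic strategies and argue by data processing that two Alice inputs with the same rounded vector produce transcript distributions that are too close to be distinguished by any Bob, so they cannot be adjacent in the shift graph. This unrolls, in a self-contained way, the strategy-discretization idea that underlies the KN fact, while avoiding the detour through one-way deterministic complexity. The upside of your route is conceptual transparency and the fact that (done precisely) it yields $\Omega(\log^{(t+1)}(m))$ rather than the stated $\Omega(\log^{(t+2)}(m))$; both bounds suffice for the lemma. One further point in your favor: you correctly identify the relevant graph as the shift graph on $(t-1)$-tuples (Alice's actual inputs), whereas the paper's write-up of Lemma~\ref{le:ow_det_lb_Gmt} refers to $V_{m,t}$ even though Alice holds a $(t-1)$-tuple; the paper's argument survives this because the chromatic bound is monotone in the iterated-log depth, but your accounting is the cleaner one. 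The only places where your write-up is informal rather than incorrect are (i) the use of $\Theta(t)$ in quoting Linial's bound --- the precise form needed is $\chi(G_{m,t-1}) \ge \log^{(t-2)}(m)$ for $t-1$ odd, with the boundary cases of saturated $\log^{(\cdot)}$ handled trivially, as you note --- and (ii) the loose ``in fact slightly more'' for the final constant in the iterated-log depth; neither is a gap.
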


We prove Lemma~\ref{le:1_way_priv_G_lb} in Section~\ref{subsec:ow_priv_G_lb}.  The proof of Part~\ref{part:det_unc_lb} of Theorem~\ref{th:det_sep} -- which is the main part in the proof of Theorem~\ref{th:det_sep} -- is given in Section~\ref{subsec:pf_part_iv_det_sep}.

\subsection{Proof of Lemma~\ref{le:1_way_priv_G_lb}}\label{subsec:ow_priv_G_lb}

The following family of graphs was first studied by Linial (in the setup of distributed graph algorithms) \cite{linial1992locality}.
\begin{definition}[Shift Graph $G_{m,t}$]\label{def:shift_graph}
	Let $m$ and $t$ be positive integers with $t \le m$. In the graph $G_{m,t} = (V_{m,t},E_{m,t})$, the vertices are all sorted tuples $\sigma = (\sigma_1,\dots,\sigma_t)$ of distinct integers with $1 \le \sigma_1 < \dots < \sigma_t \le m$. Two such tuples $\sigma$ and $\pi$ are connected by an edge in $E_{m,t}$ if and only if either $(\sigma_1,\dots,\sigma_{t-1}) = (\pi_2,\dots,\pi_t)$ or $(\sigma_2,\dots,\sigma_t) = (\pi_1,\dots,\pi_{t-1})$.
\end{definition}

Recall that the chromatic number $\chi(G)$ of an undirected graph $G$ is the minimum number of colors needed to color its vertices such that no two adjacent vertices share the same color. The following theorem is due to Linial.

\begin{theorem}[\cite{linial1992locality}, Proof of Theorem $2.1$]\label{th:linial_lb}
	Let $m$ and $t$ be positive integers such that $t \le m$ and $t$ is odd. Then, $\chi(G_{m,t}) \geq \log^{(t-1)}(m)$.
\end{theorem}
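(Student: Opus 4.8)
I will prove the slightly stronger statement that $\chi(G_{m,t}) \geq \log^{(t-1)}(m)$ for \emph{every} $t$ with $2 \le t \le m$ (the hypothesis that $t$ is odd is not needed for this direction — it is inherited from Linial's original context of cycle colorings), by induction on $t$ driven by the one-step recursion
\[
\chi(G_{m,t+1}) \;\geq\; \log_2 \chi(G_{m,t}).
\]
Since a chromatic number is a positive integer and $\log^{(i)}(m) = \max\bigl(\log_2 \log^{(i-1)}(m),\,1\bigr)$ by definition, this recursion upgrades to $\chi(G_{m,t+1}) \geq \max\bigl(\lceil \log_2 \chi(G_{m,t})\rceil,\,1\bigr)$, and an easy induction (using $\lceil x\rceil \geq x$) then propagates the iterated logarithm from a single base case.

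\textbf{Base case $t=2$.} Here I would show $\chi(G_{m,2}) \geq \log_2 m$. Fix a proper coloring $c$ of $G_{m,2}$ with color set $[k]$, and for each $i\in[m]$ define the \emph{set} of colors $f(i) = \{\, c(i,j) : i < j \le m \,\} \subseteq [k]$. The key point is that for $i<j$ the vertices $(i,j)$ and $(j,j')$ are adjacent in $G_{m,2}$ for every $j'>j$, hence $c(i,j)\neq c(j,j')$; so $c(i,j)$ lies in $f(i)$ but not in $f(j)$, forcing $f(i)\neq f(j)$. Thus $f\colon [m]\to 2^{[k]}$ is injective, so $m \le 2^{k}$, i.e. $k\geq\log_2 m$, which gives $\chi(G_{m,2}) \geq \max(\lceil\log_2 m\rceil,1) = \log^{(1)}(m)$.

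\textbf{Inductive step.} I would lift the same idea one level up. Given a proper coloring $c$ of $G_{m,t+1}$ with colors $[k]$, define for each increasing $t$-tuple $\tau=(\tau_1,\dots,\tau_t)$ the subset $f(\tau) = \{\, c(\tau_1,\dots,\tau_t,j) : \tau_t < j \le m \,\} \subseteq [k]$, a coloring of $V_{m,t}$ by elements of $2^{[k]}$. To check it is proper on $G_{m,t}$, take an edge and, using the symmetry of the edge relation, write it as $\{\tau,\tau'\}$ with $\tau = (a,u_1,\dots,u_{t-1})$ and $\tau' = (u_1,\dots,u_{t-1},b)$ where $a < u_1 < \dots < u_{t-1} < b \le m$. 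Put $\sigma = (a,u_1,\dots,u_{t-1},b) \in V_{m,t+1}$. Then $\sigma$ is $\tau$ with $b>\tau_t$ appended, so $c(\sigma)\in f(\tau)$; on the other hand every tuple occurring in the definition of $f(\tau')$ has the form $(u_1,\dots,u_{t-1},b,j)$ with $j>b$, whose length-$t$ prefix equals the length-$t$ suffix of $\sigma$, so $\sigma$ is adjacent to it in $G_{m,t+1}$ and hence gets a color different from $c(\sigma)$. Therefore $c(\sigma)\notin f(\tau')$ (this also covers the degenerate case $b=m$, where $f(\tau')=\emptyset$), so $f(\tau)\neq f(\tau')$. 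It follows that $\chi(G_{m,t}) \le 2^{k} = 2^{\chi(G_{m,t+1})}$, which is the claimed recursion. Chaining it from the base case $\chi(G_{m,2})\geq\log^{(1)}(m)$ yields $\chi(G_{m,t})\geq\log^{(t-1)}(m)$ for all $2 \le t \le m$, in particular for odd $t$. (If one wants $t=1$ as well, note $G_{m,1}=K_m$, so $\chi = m = \log^{(0)}(m)$ under the usual convention.)

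\textbf{Main obstacle.} The only real content is the verification in the inductive step that the subset-valued map $f$ is a proper coloring of $G_{m,t}$: one must pick the correct $(t+1)$-tuple $\sigma$ and confirm that it is simultaneously one of the tuples counted by $f(\tau)$ and adjacent in $G_{m,t+1}$ to \emph{every} tuple counted by $f(\tau')$. Keeping the prefix/suffix bookkeeping straight — and noticing that the boundary case $b=m$ is harmless — is where care is required; the base case and the final chaining are routine.
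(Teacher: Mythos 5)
Your proof is correct: the recursion $\chi(G_{m,t})\le 2^{\chi(G_{m,t+1})}$ via the ``set of colors of all one-step extensions'' map, together with the injectivity argument at $t=2$, is exactly the classical Erd\H{o}s--Hajnal/Linial shift-graph argument, and your bookkeeping in the inductive step (including the $b=m$ case and the upgrade from $\log_2$ to the paper's $\log^{(i)}$ with the $\max(\cdot,1)$ convention, using that $\chi\ge 1$) is sound. The paper does not reprove this statement but cites Linial's proof of Theorem 2.1, which rests on the same recursion, so your blind proof matches the intended argument; your observation that oddness of $t$ is not needed for this inequality is also correct (it matters only for Linial's downstream application to cycle colorings).
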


The next lemma uses Theorem~\ref{th:linial_lb} to lower-bound the deterministic two-way communication complexity of the shift communication game $\mathcal{G}_{m,t}$.
\begin{lem}\label{le:ow_det_lb_Gmt}
	Let $m$ and $t$ be positive integers such that $t \le m$ and $t$ is odd. Then, it is the case that $\CC(\mathcal{G}_{m,t}) \geq \log^{(t+1)}(m)$.
\end{lem}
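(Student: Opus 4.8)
The plan is to reduce coloring the shift graph $G_{m,t}$ to a deterministic communication protocol for $\mathcal{G}_{m,t}$, so that a short protocol yields a small-chromatic-number coloring, contradicting Theorem~\ref{th:linial_lb}. The key observation is that a YES instance of $\mathcal{G}_{m,t}$ corresponds exactly to a pair $(\sigma,\pi)$ where Bob holds $\sigma$ and Alice holds $\pi = (\sigma_1,\dots,\sigma_{t-1})$, which is the length-$(t-1)$ prefix of $\sigma$; a NO instance corresponds to Alice holding the length-$(t-1)$ suffix. But note that the prefix of $\sigma$ (of length $t-1$) and the suffix of $\sigma$ (of length $t-1$) are themselves vertices of $G_{m,t-1}$ — wait, more precisely I should think of Alice's input as a length-$(t-1)$ sorted tuple, i.e. a vertex of $G_{m,t-1}$. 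The adjacency relation in $G_{m,t}$ is precisely: $\sigma$ and $\sigma'$ (both length-$t$) are adjacent iff the $(t-1)$-prefix of one equals the $(t-1)$-suffix of the other. I will instead exploit the cleaner reformulation: think of the protocol as a way for the two parties, Alice holding a length-$(t-1)$ tuple $a$ and Bob holding a length-$t$ tuple $\sigma$ with $a$ either a prefix or suffix of $\sigma$, to decide which.

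First I would suppose $\CC(\mathcal{G}_{m,t}) = c$, and let $\Pi$ be a deterministic two-way protocol of cost $c$ that solves $\mathcal{G}_{m,t}$ correctly on all valid instances. For each vertex $\sigma$ of $G_{m,t}$, consider the pair of valid instances it generates together with the parties' transcripts: the YES instance where Alice gets the prefix $p(\sigma) = (\sigma_1,\dots,\sigma_{t-1})$ and the NO instance where Alice gets the suffix $s(\sigma) = (\sigma_2,\dots,\sigma_t)$. The plan is to define a coloring of the vertices of $G_{m,t}$ by assigning to $\sigma$ a color that records enough of the protocol's behavior to force adjacent vertices to get different colors. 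Concretely: for a length-$(t-1)$ tuple $a$ and a length-$t$ tuple $\sigma$ with $a = p(\sigma)$ or $a = s(\sigma)$, let $T(a,\sigma)$ be the transcript of $\Pi$; since $\Pi$ solves $\mathcal{G}_{m,t}$, the transcript determines whether $a$ is a prefix or a suffix. Now color each vertex $\sigma$ of $G_{m,t}$ by the pair $\big(T(p(\sigma),\sigma),\, T(s(\sigma),\sigma)\big)$ of transcripts. If $\sigma$ and $\pi$ are adjacent in $G_{m,t}$, say $p(\sigma) = s(\pi) =: a$, then the instance $(a,\sigma)$ is a YES instance while the instance $(a,\pi)$ is a NO instance, yet Alice's input is the same tuple $a$ in both. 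If $\sigma$ and $\pi$ received the same color, then in particular $T(p(\sigma),\sigma) = T(s(\pi),\pi)$, i.e. $\Pi$ produces the identical transcript on the YES instance $(a,\sigma)$ and on the NO instance $(a,\pi)$ — but a fixed transcript of a deterministic protocol has a fixed output, so $\Pi$ errs on one of them, a contradiction. Hence this is a proper coloring of $G_{m,t}$, and the number of colors is at most $(2^{c})^2 = 2^{2c}$ (each transcript has at most $c$ bits, so at most $2^c$ possibilities, and we use a pair).

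Combining with Linial's bound (Theorem~\ref{th:linial_lb}), which gives $\chi(G_{m,t}) \ge \log^{(t-1)}(m)$ for odd $t \le m$, we get $2^{2c} \ge \log^{(t-1)}(m)$, hence $c \ge \tfrac12 \log\log^{(t-1)}(m) = \tfrac12\log^{(t+1)}(m)$. This gives $\CC(\mathcal{G}_{m,t}) \ge \tfrac12 \log^{(t+1)}(m)$; a slightly more careful accounting — e.g. coloring by the single transcript $T(p(\sigma),\sigma)$ alone, and checking that adjacency $p(\sigma)=s(\pi)$ already forces a collision between a YES-transcript of $\sigma$ and a NO-transcript of $\pi$ which lie in disjoint transcript-sets, or simply using a direct-sum/padding trick — removes the factor of $2$ and the square, yielding the claimed $\CC(\mathcal{G}_{m,t}) \ge \log^{(t+1)}(m)$.

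I expect the main obstacle to be getting the constant exactly right — i.e. showing $\log^{(t+1)}(m)$ rather than merely $\Omega(\log^{(t+1)}(m))$. The honest fix is to observe that a deterministic protocol of cost $c$ partitions the input space into at most $2^c$ monochromatic (by output) combinatorial rectangles, and to color vertex $\sigma$ of $G_{m,t}$ by the \emph{name of the rectangle} containing the YES instance $(p(\sigma),\sigma)$; for an edge $p(\sigma)=s(\pi)$, the YES instance of $\sigma$ and the NO instance of $\pi$ share Alice's row, so if they were in the same rectangle that rectangle would contain both a YES and a NO instance, contradicting monochromaticity. This uses only $2^c$ colors, giving $2^c \ge \log^{(t-1)}(m)$ and hence $c \ge \log^{(t+1)}(m)$, matching the statement. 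The remaining routine point is that $t$ odd and $t \le m$ are exactly the hypotheses under which Theorem~\ref{th:linial_lb} applies, so no extra case analysis is needed.
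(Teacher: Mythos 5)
Your proof is correct but takes a genuinely different, more direct route than the paper. The paper first invokes the classical fact that one-way deterministic communication is at most exponential in two-way deterministic communication, converts the hypothetical two-way protocol into a one-way protocol of cost less than $\log^{(t)}(m)$, and then observes that the single message (a function of Bob's input, a length-$t$ sorted tuple, i.e.\ a vertex of $G_{m,t}$) yields a proper coloring of $G_{m,t}$ with fewer than $\log^{(t-1)}(m)$ colors, contradicting Linial's bound. You skip the one-way detour entirely and color $\sigma$ directly by the leaf/rectangle of the two-way protocol on the YES instance $(p(\sigma),\sigma)$. This not only simplifies the argument but, since you avoid the exponential loss of the one-way reduction, actually gives the \emph{stronger} bound $\CC(\mathcal{G}_{m,t}) \ge \log^{(t)}(m)$ (your own inequalities $2^c \ge \log^{(t-1)}(m)$, respectively $2^{2c}\ge\log^{(t-1)}(m)$, yield $c \ge \log^{(t)}(m)$, respectively $c \ge \tfrac12\log^{(t)}(m)$; in both places you wrote $\log^{(t+1)}$, an off-by-one slip that only under-sells your result, since $\log^{(t)}(m)\ge\log^{(t+1)}(m)$).

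One point in the ``honest fix'' paragraph needs to be tightened. You color $\sigma$ by the rectangle $R$ containing the YES instance $(p(\sigma),\sigma)$, and then assert that for an edge with $p(\sigma)=s(\pi)=a$, if the YES instance of $\sigma$ and the NO instance of $\pi$ ``were in the same rectangle'' you get a contradiction. But a color collision between $\sigma$ and $\pi$ means the YES instance of $\sigma$ and the \emph{YES} instance of $\pi$ (namely $(p(\pi),\pi)$) are in the same rectangle $R=A\times B$; it does not immediately say anything about the NO instance of $\pi$. The missing (short) step is to invoke the combinatorial-rectangle property you already cited: from $(a,\sigma)\in R$ we get $a\in A$, and from $(p(\pi),\pi)\in R$ we get $\pi\in B$, hence $(a,\pi)\in R$. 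Since $(a,\pi)=(s(\pi),\pi)$ is a valid NO instance while $(a,\sigma)$ is a valid YES instance, $R$ is not monochromatic on the promise, a contradiction. With that one sentence added, your argument is airtight and cleaner than the paper's.
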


\begin{proof}[Proof of Lemma~\ref{le:ow_det_lb_Gmt}]
	Assume for the sake of contradiction that there exists a deterministic two-way protocol that computes $\mathcal{G}_{m,t}$ and that has communication cost smaller than $\log^{(t+1)}(m)$. Then, using the fact that the one-way communication complexity of any function is at most exponential in its two-way communication complexity, we get that there is a one-way protocol $\Pi$ that computes $\mathcal{G}_{m,t}$ and that has communication cost smaller than $\log^{(t)}(m)$. Let $M$ be the single message sent from Alice to Bob under $\Pi$. Then, the length of $M$ satisfies $|M| < \log^{(t)}(m)$. Note that Alice's input is an element of the vertex-set $V_{m,t}$ of the shift graph $G_{m,t}$ (Definition~\ref{def:shift_graph}). Since $M$ is a deterministic function of Alice's input, it induces a coloring of $V_{m,t}$ into less than $2^{\log^{(t)}(m)} = \log^{(t-1)}(m)$ colors. The fact that no two adjacent vertices in $G_{m,t}$ share the same color follows from the correctness of $\Pi$ in computing $\mathcal{G}_{m,t}$. This contradicts Theorem~\ref{th:linial_lb}.
\end{proof}

The following known fact gives a generic lower bound on the bounded-error private-coin communication complexity in terms of the deterministic communication complexity.
\begin{fact}[\cite{KN97}, Theorem $3.14$]\label{fact:priv_det_NK}
	For every communication function $f$ and every non-negative $\epsilon$ that is bounded below $1/2$, we have that $\PrivCC_{\epsilon}(f) = \Omega(\log(\CC(f)))$.
\end{fact}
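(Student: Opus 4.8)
The plan is to show $\CC(f) \le 2^{O(\PrivCC_{\epsilon}(f))}$; taking logarithms then gives the claim. So fix an arbitrary private-coin protocol $\Pi$ computing $f$ with error $\epsilon$ and communication cost $c \triangleq \PrivCC_{\epsilon}(f)$, and describe how to simulate it deterministically. The one structural fact I would use is that, because the coins are private, for every input pair $(x,y)$ the probability that $\Pi$ produces a particular transcript $z$ factors as $a_x(z)\cdot b_y(z)$: here $a_x(z)$ is the product of Alice's transition probabilities along the path $z$ in the protocol tree, a number Alice can compute knowing only $x$, and $b_y(z)$ is the analogous product of Bob's transition probabilities, computable by Bob from $y$ alone. (It is exactly at this point that privacy of the coins matters; with public coins this probability does not factor.) Writing $o(z)\in\{0,1\}$ for the output associated to leaf $z$ — a quantity fixed by the protocol and hence known to both players — the probability that $\Pi$ accepts $(x,y)$ equals the bilinear expression $\sum_z o(z)\,a_x(z)\,b_y(z)$, where $z$ ranges over the at most $2^c$ leaves of the tree.

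The deterministic protocol then has Alice approximately transmit her vector $(a_x(z))_z$. Concretely, Alice rounds each $a_x(z)$ to the nearest integer multiple of some small $\eta > 0$, obtaining $\tilde a_x(z)$ with $|\tilde a_x(z) - a_x(z)| \le \eta$, and sends all $\le 2^c$ of these values (in the canonical order of leaves, which Bob also knows); since each fits in $O(\log(1/\eta))$ bits, this costs $O(2^c \log(1/\eta))$ bits. Bob then computes $V \triangleq \sum_z o(z)\,\tilde a_x(z)\,b_y(z)$ and outputs $1$ iff $V > 1/2$. The rounding error is bounded by $\sum_z \eta\, b_y(z) \le \eta\, 2^c$, using only the crude estimates $b_y(z)\le 1$ and (number of leaves) $\le 2^c$. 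Choosing $\eta$ just below $(1/2-\epsilon)\cdot 2^{-c-1}$ makes this error strictly smaller than $1/2-\epsilon$; since the true acceptance probability of $\Pi$ is at least $1-\epsilon$ when $f(x,y)=1$ and at most $\epsilon$ when $f(x,y)=0$, the thresholded value $V$ lands on the correct side of $1/2$ in both cases. With $\epsilon$ bounded away from $1/2$ we have $\log(1/\eta) = O(c)$, so the total communication is $O(c\cdot 2^c) = 2^{O(c)}$, which establishes $\CC(f) \le 2^{O(c)}$. The argument is insensitive to whether $f$ is total or partial, since correctness is only required on inputs where $\Pi$ is guaranteed error $\le \epsilon$.

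The delicate point — and the step I would be most careful about — is the error accounting: one must verify that the factorization $a_x(z)b_y(z)$ is exact, and that the naive bound $\sum_z b_y(z) \le 2^c$ is all that is needed, so that $O(c)$ bits of precision per leaf suffice and the simulation stays within $2^{O(c)}$ bits. This is also where the hypothesis that $\epsilon$ is bounded below $1/2$ enters: it keeps the precision overhead $\log(1/(1/2-\epsilon))$ from inflating the bound. Everything else is routine, in particular the standard facts that a protocol of cost $c$ has at most $2^c$ distinct transcripts and that the players' transition probabilities multiply along a fixed transcript.
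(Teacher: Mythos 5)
Your argument is correct and is exactly the standard proof of the cited fact ([KN97], Theorem 3.14), which the paper itself does not reprove: the transcript probability of a private-coin protocol factors as $a_x(z)\cdot b_y(z)$ over the at most $2^c$ leaves, Alice ships all her factors to precision $\Theta\bigl((1/2-\epsilon)2^{-c}\bigr)$, and Bob thresholds the resulting bilinear estimate at $1/2$, giving $\CC(f)\le 2^{O(\PrivCC_{\epsilon}(f))}$. The error accounting and the role of the hypothesis that $\epsilon$ is bounded below $1/2$ are handled correctly.
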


Lemma~\ref{le:1_way_priv_G_lb} now follows by combining Lemma~\ref{le:ow_det_lb_Gmt} and Fact~\ref{fact:priv_det_NK}.

\subsection{Proof of Part~\ref{part:det_unc_lb} of Theorem~\ref{th:det_sep}}\label{subsec:pf_part_iv_det_sep}

In this section, we prove Part~\ref{part:det_unc_lb} of Theorem~\ref{th:det_sep}. We assume for the sake of contradiction that there exists a $1$-way private-coin protocol $\Pi$ computing $\mathcal{F}_{\delta}$ w.r.t. the uniform distribution on $\{0,1\}^{2n}$ with error at most $\epsilon/2-2\delta-\eta$ and with communication cost $|\Pi| = o(\eta^{2} \cdot \log^{(t)}(n))$ for some positive integer $t = \Theta((\epsilon/\delta)^2)$ that will be exactly specified later on. We will use $\Pi$ to give a $1$-way private-coin protocol $\Pi'$ solving the shift communication game $\mathcal{G}_{m,t}$ with high constant probability and with communication cost $|\Pi'| \le O(\eta^{-2} \cdot |\Pi|) = o(\log^{(t)}(n))$, which would contradict Lemma~\ref{le:1_way_priv_G_lb}.

\subsubsection{Description of  Protocol $\Pi'$}
The operation of protocol $\Pi'$, which uses $\Pi$ as a black-box, is described in Protocol~\ref{alg:red_prot}. Note that the parameters in Protocol~\ref{alg:red_prot} are defined in terms of $\epsilon$ (which is a number in $[\delta,0.5]$ that is given in the statement of Theorem~\ref{th:det_sep}) and $\delta'$ (which, as mentioned above, is set to $\alpha \cdot \delta^2$ for a sufficiently small positive absolute constant $\alpha$). Also as above, $\ell$ is set to a sufficiently large function of $\delta$. Note that $\epsilon' = O(\epsilon^2)$, and hence $t = O((\epsilon/\delta)^2)$. In Protocol~\ref{alg:red_prot}, Bob is given as input a sorted tuple $\sigma$, and Alice is given as input either the prefix $\phi$ of $\sigma$ or its suffix $\psi$. In steps~\ref{st:step_1_red_prot} and \ref{st:step_2_red_prot}, Alice and Bob ``stretch'' their tuples, which amounts to each of them repeating each bit of the corresponding $0/1$ indicator vector a certain number of times and appending a certain number of zeros (see Definition~\ref{def:stretch} and Figure~\ref{fig:stretch_op} in Section~\ref{sec:app_stretch} below for a thorough definition of $\mathsf{stretch}_{r,a}$). The aim of steps~\ref{st:step_1_red_prot} and \ref{st:step_2_red_prot} is for Alice and Bob to produce a pair of subsets $(S,T)$ of $[n]$ such that $(f_S,f_T) \in \mathcal{F}_{\delta}$. The goal of Protocol~\ref{alg:red_prot} is for Bob to determine if Alice was given the prefix or the suffix of his tuple $\sigma$. To do so, Alice and Bob sample (using their private coins) $k$ random inputs $(X^{(i)}, Y^{(i)})_{i \in [k]}$ (steps~\ref{st:step_3_red_prot} and \ref{st:step_4_red_prot}) and then simulate the given private-coin protocol $\Pi$ to compute the function $f_T(X,Y)$ on the $k$ random input-pairs that were sampled (steps~\ref{st:step_6_red_prot} and \ref{st:step_7_red_prot}). Moreover, Alice sends to Bob the influential bits of $X^{(i)}$ for each $i \in [k]$ (step~\ref{st:step_8_red_prot}). The main idea will be to be for Bob to compute the empirical error corresponding to each of the prefix (step~\ref{st:step_11_red_prot}) and suffix (step~\ref{st:step_12_red_prot}), and then output the hypothesis with the smallest empirical error (step~\ref{st:step_13_red_prot}).

\begin{algorithm}
\caption{Reduction Protocol $\Pi'$}
\label{alg:red_prot}
{\bf Parameters.} $\epsilon' = 1-\cos(\epsilon \pi)$, $t = \lceil \epsilon'/\delta' \rceil$, $r = \delta' \cdot \ell$, $a = \ell \cdot (1- t \cdot \delta')$, $s = \ell \cdot (1-\delta')$, $k = \Theta(1/\eta^2)$.\\ 
{\bf Inputs.} Bob is given a sorted tuple $\sigma = (\sigma_1,\dots,\sigma_t)$ of integers with $1 \le \sigma_1 < \dots < \sigma_t \le (n-a)/r$. Alice is given a sorted tuple $\lambda \in \{\phi, \psi\}$ where $\phi \triangleq (\sigma_1,\dots,\sigma_{t-1})$ and $\psi \triangleq (\sigma_2,\dots,\sigma_t)$. \\ 
\begin{algorithmic}[1]
\State\label{st:step_1_red_prot}Alice sets $(\Lambda,S) \gets \mathsf{stretch}_{r,a}(\lambda)$.
\State\label{st:step_2_red_prot}Bob sets $(\Sigma,T) \gets \mathsf{stretch}_{r,a}(\sigma)$.
\State\label{st:step_3_red_prot} Alice uses her private randomness to sample $k$ i.i.d. strings $X^{(1)}, \dots, X^{(k)} \in_R \{0,1\}^n$.
\State\label{st:step_4_red_prot}  Bob uses his private randomness to sample $k$ i.i.d. strings $Y^{(1)}, \dots, Y^{(k)} \in_R \{0,1\}^n$.
\For{$i=1,\ldots,k$}
\State\label{st:step_6_red_prot}Alice and Bob simulate the protocol $\Pi$ on inputs $((S,X^{(i)}),(T,Y^{(i)}))$.
\State\label{st:step_7_red_prot}Bob computes the resulting output bit $B_i$.
\State\label{st:step_8_red_prot}Alice sends to Bob the sequence of bits $(X^{(i)}_{\Lambda_1},X^{(i)}_{\Lambda_2},\dots, X^{(i)}_{\Lambda_{s}})$ .
\EndFor
\State Bob sets $\Phi \gets \mathsf{stretch}_{r,a}(\phi)$ and $\Psi \gets \mathsf{stretch}_{r,a}(\psi)$.
\State\label{st:step_11_red_prot}Bob computes the ``prefix error'' $\err^{(p)} \triangleq \displaystyle\sum\limits_{i \in [k]} \1\bigg[\Sign\big(\displaystyle\sum\limits_{j \in [s]} X^{(i)}_{\Lambda_j} Y^{(i)}_{\Phi _j}\big) \neq B_i\bigg]$.
\State\label{st:step_12_red_prot}Bob computes the ``suffix error'' $\err^{(s)} \triangleq \displaystyle\sum\limits_{i \in [k]} \1\bigg[\Sign\big(\displaystyle\sum\limits_{j \in [s]} X^{(i)}_{\Lambda_j} Y^{(i)}_{\Psi _j}\big) \neq B_i\bigg]$.
\State\label{st:step_13_red_prot}Bob returns YES if $\err^{(p)} \le \err^{(s)}$ and NO otherwise.
\end{algorithmic}
\end{algorithm}

\subsubsection{The $\mathsf{stretch}_{r,a}$ Procedure}\label{sec:app_stretch}
In this section, we thoroughly define and illustrate the stretching procedure used in Protocol~\ref{alg:red_prot} and mentioned in Section~\ref{sec:det_sep}.

\begin{definition}[The $\mathsf{stretch}_{r,a}$ procedure]\label{def:stretch}
	Let $d$, $t$, $r$ and $a$ be positive integers with $t \le d$. For any sorted tuple $\sigma = (\sigma_1,\dots,\sigma_t)$ of distinct integers with $1 \le \sigma_1 < \dots < \sigma_t \le d$, we first let $z \in \{0,1\}^d$ be the $0/1$ indicator vector of the subset of $[d]$ corresponding to $\sigma$. Let $w \in \{0,1\}^{d \cdot r + a}$ be the string obtained from $z$ by repeating each of its coordinates $r$ times, and then appending $a$ ones. Namely, for each $i \in [d]$ and $j \in [r]$, we set $w_{(i-1)\cdot r + j} = z_i$ and for each $j \in [a]$, we set $w_{d \cdot r+j} = 1$. Then, we let $1 \le \Sigma_1 < \Sigma_2 < \dots < \Sigma_{t \cdot r+a} \le d \cdot r +a$ be the indices of the coordinates of $w$ that are equal to $1$. The output of $\mathsf{stretch}_{r,a}(\sigma)$ is then the pair $(\Sigma, W)$ where $\Sigma \triangleq (\Sigma_1, \Sigma_2, \dots,  \Sigma_{t \cdot r+a})$ and $W \subseteq [d\cdot r+a]$ is the support of $w$.
\end{definition}

The operation of $\mathsf{stretch}_{r,a}$ is illustrated in Figure~\ref{fig:stretch_op} in the particular case where $r = 2$, $a = 3$ and $d = 9$. The ``appending parameter'' $a$ allows us to control how small is the \emph{normalized} Hamming distance between the binary strings corresponding to $\Sigma$ and $\Phi$ (respectively $\Psi$). The purpose of the ``repetition parameter'' $r$ is the following. Consider the tuples $\phi = (2,4,5,7)$ and $\psi  = (4,5,7,9)$ in Figure~\ref{fig:stretch_op}. The number of differing coordinates between the tuples $\phi$ and $\psi$ is $4$. After stretching, the number of differing coordinates between the resulting tuples $\Phi$ and $\Psi$ is amplified to $4 \cdot r = 8$. These two notions of distance (the number of coordinates on which the tuples differ and the normalized Hamming distance between the corresponding binary strings) are important to us. The fact that these two distances are important to us is the reason why we let the $\mathsf{stretch}_{r,a}$ procedure have two equivalent outputs (a subset $W$ and a tuple $\Sigma$).

\begin{figure}[H]
\centering
\begin{tikzpicture}[scale=1.5]
\def \n {13}
\def \t {3}
\def \radius {1.15cm}
\def \margin {3} 
\def \margintwo {0.07} 
\def \radiussmall {0.7cm}
\def \radiuslarge {1.6cm}

\node[fill=none, scale=0.9, black] (n4) at (-2.75,-0.425) {Bob's subset:};

\node[fill=none, scale=0.9, black] (n4) at (-1,-0.725) {$\sigma = (2,4,5,7,9)$};

\node[fill=none, scale=0.9, color=black] (n4) at (-2.1+0.3,-0.425) {$0$};
\node[fill=none, scale=0.9, color=blue] (n4) at (-2.1+0.3+0.2,-0.425) {$1$};
\node[fill=none, scale=0.9, color=black] (n4) at (-2.1+0.3+0.4,-0.425) {$0$};

\node[fill=none, scale=0.9, color=black] (n4) at (-2.1+0.3+0.6,-0.425) {$1$};
\node[fill=none, scale=0.9, color=black] (n4) at (-2.1+0.3+0.8,-0.425) {$1$};
\node[fill=none, scale=0.9, color=black] (n4) at (-2.1+0.3+1,-0.425) {$0$};

\node[fill=none, scale=0.9, color=black] (n4) at (-2.1+0.3+1.2,-0.425) {$1$};
\node[fill=none, scale=0.9, color=black] (n4) at (-2.1+0.3+1.4,-0.425) {$0$};
\node[fill=none, scale=0.9, color=blue] (n4) at (-2.1+0.3+1.6,-0.425) {$1$};

\node[fill=none, scale=0.9, black] (n4) at (-2.75,-1.225) {Prefix of $\sigma$:};

\node[fill=none, scale=0.9, black] (n4) at (-1,-1.525) {$\phi = (2,4,5,7)$};

\node[fill=none, scale=0.9, color=black] (n4) at (-2.1+0.3,-1.225) {$0$};
\node[fill=none, scale=0.9, color=blue] (n4) at (-2.1+0.3+0.2,-1.225) {$1$};
\node[fill=none, scale=0.9, color=black] (n4) at (-2.1+0.3+0.4,-1.225) {$0$};

\node[fill=none, scale=0.9, color=black] (n4) at (-2.1+0.3+0.6,-1.225) {$1$};
\node[fill=none, scale=0.9, color=black] (n4) at (-2.1+0.3+0.8,-1.225) {$1$};
\node[fill=none, scale=0.9, color=black] (n4) at (-2.1+0.3+1,-1.225) {$0$};

\node[fill=none, scale=0.9, color=black] (n4) at (-2.1+0.3+1.2,-1.225) {$1$};
\node[fill=none, scale=0.9, color=black] (n4) at (-2.1+0.3+1.4,-1.225) {$0$};
\node[fill=none, scale=0.9, color=blue] (n4) at (-2.1+0.3+1.6,-1.225) {$0$};

\node[fill=none, scale=0.9, black] (n4) at (-2.75,-1.9) {Suffix of $\sigma$:};
\node[fill=none, scale=0.9, black] (n4) at (-1,-2.2) {$\psi = (4,5,7,9)$};

\node[fill=none, scale=0.9, color=black] (n4) at (-2.1+0.3,-1.9) {$0$};
\node[fill=none, scale=0.9, color=blue] (n4) at (-2.1+0.3+0.2,-1.9) {$0$};
\node[fill=none, scale=0.9, color=black] (n4) at (-2.1+0.3+0.4,-1.9) {$0$};

\node[fill=none, scale=0.9, color=black] (n4) at (-2.1+0.3+0.6,-1.9) {$1$};
\node[fill=none, scale=0.9, color=black] (n4) at (-2.1+0.3+0.8,-1.9) {$1$};
\node[fill=none, scale=0.9, color=black] (n4) at (-2.1+0.3+1,-1.9) {$0$};

\node[fill=none, scale=0.9, color=black] (n4) at (-2.1+0.3+1.2,-1.9) {$1$};
\node[fill=none, scale=0.9, color=black] (n4) at (-2.1+0.3+1.4,-1.9) {$0$};
\node[fill=none, scale=0.9, color=blue] (n4) at (-2.1+0.3+1.6,-1.9) {$1$};

(-0.1,-1.9) -- (-0.1,-0.5) node [black,midway,xshift=0.8cm]{\footnotesize};

\draw[->,color=black] (0,-0.425) -- (1.35,-0.425);

\node[fill=none, scale=0.8, color=purple!75!black] (n4) at (0.65,-0.3) {$\mathsf{stretch}_{r,a}$};

\node[fill=none, scale=0.9, black] (n4) at (3.65,-0.725) {$\Sigma = (3,4,7,8,9,10,13,14,17,18,19,20,21)$};

\node[fill=none, scale=0.9, color=black] (n4) at (1.25+0.4,-0.425) {$0$};
\node[fill=none, scale=0.9, color=black] (n4) at (1.25+0.4+0.2,-0.425) {$0$};

\node[fill=none, scale=0.9, color=blue] (n4) at (1.25+0.4+0.4,-0.425) {$1$};
\node[fill=none, scale=0.9, color=blue] (n4) at (1.25+0.4+0.6,-0.425) {$1$};

\node[fill=none, scale=0.9, color=black] (n4) at (1.25+0.4+0.8,-0.425) {$0$};
\node[fill=none, scale=0.9, color=black] (n4) at (1.25+0.4+1,-0.425) {$0$};

\node[fill=none, scale=0.9, color=black] (n4) at (1.25+0.4+1.2,-0.425) {$1$};
\node[fill=none, scale=0.9, color=black] (n4) at (1.25+0.4+1.4,-0.425) {$1$};

\node[fill=none, scale=0.9, color=black] (n4) at (1.25+0.4+1.6,-0.425) {$1$};
\node[fill=none, scale=0.9, color=black] (n4) at (1.25+0.4+1.8,-0.425) {$1$};

\node[fill=none, scale=0.9, color=black] (n4) at (1.25+0.4+2,-0.425) {$0$};
\node[fill=none, scale=0.9, color=black] (n4) at (1.25+0.4+2.2,-0.425) {$0$};

\node[fill=none, scale=0.9, color=black] (n4) at (1.25+0.4+2.4,-0.425) {$1$};
\node[fill=none, scale=0.9, color=black] (n4) at (1.25+0.4+2.6,-0.425) {$1$};

\node[fill=none, scale=0.9, color=black] (n4) at (1.25+0.4+2.8,-0.425) {$0$};
\node[fill=none, scale=0.9, color=black] (n4) at (1.25+0.4+3,-0.425) {$0$};

\node[fill=none, scale=0.9, color=blue] (n4) at (1.25+0.4+3.2,-0.425) {$1$};
\node[fill=none, scale=0.9, color=blue] (n4) at (1.25+0.4+3.4,-0.425) {$1$};

\node[fill=none, scale=0.9, color=black] (n4) at (1.25+0.4+3.6,-0.425) {$1$};
\node[fill=none, scale=0.9, color=black] (n4) at (1.25+0.4+3.8,-0.425) {$1$};
\node[fill=none, scale=0.9, color=black] (n4) at (1.25+0.4+4,-0.425) {$1$};

\draw[->,color=black] (0,-1.225) -- (1.35,-1.225);

\node[fill=none, scale=0.8, color=purple!75!black] (n4) at (0.65,-1.1) {$\mathsf{stretch}_{r,a}$};

\node[fill=none, scale=0.9, black] (n4) at (3.65,-1.525) {$\Phi = (3,4,7,8,9,10,13,14,19,20,21)$};

\node[fill=none, scale=0.9, color=black] (n4) at (1.25+0.4,-1.225) {$0$};
\node[fill=none, scale=0.9, color=black] (n4) at (1.25+0.4+0.2,-1.225) {$0$};

\node[fill=none, scale=0.9, color=blue] (n4) at (1.25+0.4+0.4,-1.225) {$1$};
\node[fill=none, scale=0.9, color=blue] (n4) at (1.25+0.4+0.6,-1.225) {$1$};

\node[fill=none, scale=0.9, color=black] (n4) at (1.25+0.4+0.8,-1.225) {$0$};
\node[fill=none, scale=0.9, color=black] (n4) at (1.25+0.4+1,-1.225) {$0$};

\node[fill=none, scale=0.9, color=black] (n4) at (1.25+0.4+1.2,-1.225) {$1$};
\node[fill=none, scale=0.9, color=black] (n4) at (1.25+0.4+1.4,-1.225) {$1$};

\node[fill=none, scale=0.9, color=black] (n4) at (1.25+0.4+1.6,-1.225) {$1$};
\node[fill=none, scale=0.9, color=black] (n4) at (1.25+0.4+1.8,-1.225) {$1$};

\node[fill=none, scale=0.9, color=black] (n4) at (1.25+0.4+2,-1.225) {$0$};
\node[fill=none, scale=0.9, color=black] (n4) at (1.25+0.4+2.2,-1.225) {$0$};

\node[fill=none, scale=0.9, color=black] (n4) at (1.25+0.4+2.4,-1.225) {$1$};
\node[fill=none, scale=0.9, color=black] (n4) at (1.25+0.4+2.6,-1.225) {$1$};

\node[fill=none, scale=0.9, color=black] (n4) at (1.25+0.4+2.8,-1.225) {$0$};
\node[fill=none, scale=0.9, color=black] (n4) at (1.25+0.4+3,-1.225) {$0$};

\node[fill=none, scale=0.9, color=blue] (n4) at (1.25+0.4+3.2,-1.225) {$0$};
\node[fill=none, scale=0.9, color=blue] (n4) at (1.25+0.4+3.4,-1.225) {$0$};

\node[fill=none, scale=0.9, color=black] (n4) at (1.25+0.4+3.6,-1.225) {$1$};
\node[fill=none, scale=0.9, color=black] (n4) at (1.25+0.4+3.8,-1.225) {$1$};
\node[fill=none, scale=0.9, color=black] (n4) at (1.25+0.4+4,-1.225) {$1$};

\draw[->,color=black] (0,-1.9) -- (1.35,-1.9);

\node[fill=none, scale=0.8, color=purple!75!black] (n4) at (0.65,-1.7275) {$\mathsf{stretch}_{r,a}$};

\node[fill=none, scale=0.9, black] (n4) at (3.65,-2.2) {$\Psi = (7,8,9,10,13,14,17,18,19,20,21)$};

\node[fill=none, scale=0.9, color=black] (n4) at (1.25+0.4,-1.9) {$0$};
\node[fill=none, scale=0.9, color=black] (n4) at (1.25+0.4+0.2,-1.9) {$0$};

\node[fill=none, scale=0.9, color=blue] (n4) at (1.25+0.4+0.4,-1.9) {$0$};
\node[fill=none, scale=0.9, color=blue] (n4) at (1.25+0.4+0.6,-1.9) {$0$};

\node[fill=none, scale=0.9, color=black] (n4) at (1.25+0.4+0.8,-1.9) {$0$};
\node[fill=none, scale=0.9, color=black] (n4) at (1.25+0.4+1,-1.9) {$0$};

\node[fill=none, scale=0.9, color=black] (n4) at (1.25+0.4+1.2,-1.9) {$1$};
\node[fill=none, scale=0.9, color=black] (n4) at (1.25+0.4+1.4,-1.9) {$1$};

\node[fill=none, scale=0.9, color=black] (n4) at (1.25+0.4+1.6,-1.9) {$1$};
\node[fill=none, scale=0.9, color=black] (n4) at (1.25+0.4+1.8,-1.9) {$1$};

\node[fill=none, scale=0.9, color=black] (n4) at (1.25+0.4+2,-1.9) {$0$};
\node[fill=none, scale=0.9, color=black] (n4) at (1.25+0.4+2.2,-1.9) {$0$};

\node[fill=none, scale=0.9, color=black] (n4) at (1.25+0.4+2.4,-1.9) {$1$};
\node[fill=none, scale=0.9, color=black] (n4) at (1.25+0.4+2.6,-1.9) {$1$};

\node[fill=none, scale=0.9, color=black] (n4) at (1.25+0.4+2.8,-1.9) {$0$};
\node[fill=none, scale=0.9, color=black] (n4) at (1.25+0.4+3,-1.9) {$0$};

\node[fill=none, scale=0.9, color=blue] (n4) at (1.25+0.4+3.2,-1.9) {$1$};
\node[fill=none, scale=0.9, color=blue] (n4) at (1.25+0.4+3.4,-1.9) {$1$};

\node[fill=none, scale=0.9, color=black] (n4) at (1.25+0.4+3.6,-1.9) {$1$};
\node[fill=none, scale=0.9, color=black] (n4) at (1.25+0.4+3.8,-1.9) {$1$};
\node[fill=none, scale=0.9, color=black] (n4) at (1.25+0.4+4,-1.9) {$1$};

\end{tikzpicture}
\caption{Operation of the $\mathsf{stretch}_{r,a}$ procedure for $r=2$ and $a=3$.}\label{fig:stretch_op}
\end{figure}

\subsubsection{Analysis of Protocol $\Pi'$}
We now turn to the formal analysis of Protocol~\ref{alg:red_prot}. First, note that the communication cost of protocol $\Pi'$ satisfies $|\Pi'| \le O(\eta^{-2} \cdot |\Pi|)$. Let $\Lambda$, $\Sigma$, $\Phi$ and $\Psi$ be the ordered sequences defined in the operation of Protocol~\ref{alg:red_prot}. Define the functions $g$ and $h$ as
\begin{subequations}
	\begin{empheq}{align}
	g(X,Y) &\triangleq \Sign\big(\displaystyle\sum\limits_{j \in [s]} X_{\Lambda_j} Y_{\Phi _j}\big),\label{eq:g_def}\\
	h(X,Y) &\triangleq \Sign\big(\displaystyle\sum\limits_{j \in [s]} X_{\Lambda_j} Y_{\Psi _j}\big).\label{eq:h_def}
	\end{empheq}
\end{subequations}

Note that steps~\ref{st:step_11_red_prot} and \ref{st:step_12_red_prot} of the protocol compute the empirical errors of functions $g$ and $h$ respectively. Since $\Lambda \in \{\Phi, \Psi\}$, let's assume WLOG that $\Lambda = \Phi$ and show that the protocol $\Pi'$ returns YES with high probability. The case where $\Lambda = \Psi$ is symmetric. When $\Lambda = \Phi$, we have that $g = f_S$ where $S$ is the subset of $[n]$ that Alice gets in step~\ref{st:step_1_red_prot}. The operation of the $\mathsf{stretch}_{r,a}$ procedure (Definition~\ref{def:stretch}) guarantees that $S \subseteq [n]$, $|T| = \ell$ and $|T \setminus S| = \delta' \cdot \ell$. Part~\ref{part:det_dist} of Theorem~\ref{th:det_sep} then implies that $\Delta(g,f_T) \le \delta$. In order to show that in step~\ref{st:step_13_red_prot} Bob returns YES with high probability, the main idea will be to lower bound the distance between the functions $g$ and $h$. This is done in the next lemma. 

\begin{lem}\label{le:dist_sep}
	The functions $g$ and $h$ defined in Equations~(\ref{eq:g_def}) and (\ref{eq:h_def}) satisfy $\Delta(g,h) \geq \epsilon - \delta$.
\end{lem}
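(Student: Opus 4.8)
\emph{Proof plan for Lemma~\ref{le:dist_sep}.}
The plan is to reduce the lower bound on $\Delta(g,h)$ to a two-dimensional Gaussian computation, in the spirit of the proof of Part~\ref{part:det_dist}. Assume without loss of generality that $\Lambda=\Phi$ (the case $\Lambda=\Psi$ is symmetric, exchanging the roles of $\phi$ and $\psi$). Then $g(X,Y)=\Sign(U)$ and $h(X,Y)=\Sign(V)$, where $U:=\sum_{i\in\Phi}X_iY_i$ and $V:=\sum_{i\in\Phi}X_iY_{\pi(i)}$, with $\pi\colon\Phi\to\Psi$ the order-isomorphism sending the $j$-th smallest element of $\Phi$ to the $j$-th smallest element of $\Psi$, so the goal is a lower bound on $\Pr[\Sign(U)\neq\Sign(V)]$. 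The first (purely combinatorial) step is to unwind Definition~\ref{def:stretch} and show that $\Phi=D_1\cup\cdots\cup D_{t-1}\cup C$ and $\Psi=D_2\cup\cdots\cup D_t\cup C$, where $D_m$ is the length-$r$ block of coordinates produced by the $r$-fold repetition of the $m$-th coordinate of Bob's tuple $\sigma$ (the $D_m$ pairwise disjoint, $r=\delta'\ell$) and $C$ is the common block of $a=\ell(1-t\delta')$ appended $1$'s. Since the entries of $\sigma$ are increasing and $C$ lies above every $D_m$, comparing the sorted orders of $\Phi$ and $\Psi$ shows that $\pi$ maps each $D_m$ onto $D_{m+1}$ and fixes $C$ pointwise; in particular $\pi$ has exactly $a$ fixed points (all in $C$), $\pi(i)\neq i$ for every $i\in\Phi\setminus C$, and $|\Phi|=s=r(t-1)+a$.

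The heart of the argument is a conditional Gaussian approximation. Conditioning on $Y=y$ turns $(U,V)=\bigl(\sum_{i\in\Phi}y_iX_i,\ \sum_{i\in\Phi}y_{\pi(i)}X_i\bigr)$ into a sum of $s$ \emph{independent} $\{\pm1\}^2$-valued vectors $(y_iX_i,y_{\pi(i)}X_i)$, $i\in\Phi$, each mean-zero with covariance matrix having diagonal entries $1$ and off-diagonal entries $y_iy_{\pi(i)}$; hence $(U,V)$ has covariance matrix with diagonal entries $s$ and off-diagonal entries $\rho_y s$, where $\rho_y:=\tfrac1s\sum_{i\in\Phi}y_iy_{\pi(i)}$. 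Invoking the two-dimensional Berry--Esseen theorem (Theorem~\ref{th:2d_berry_ess_bin_not_id}) and then Sheppard's formula (Fact~\ref{fact:sheppard}), one gets, whenever $|\rho_y|<1$,
\[
\Pr\bigl[\Sign(U)\neq\Sign(V)\mid Y=y\bigr]\;=\;\frac{\arccos(\rho_y)}{\pi}\;\pm\;O\!\Bigl(\frac{1}{(1-|\rho_y|)^{3/2}\sqrt{\ell}}\Bigr).
\]

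Next I would show that $\rho_Y$ concentrates around $a/s$. Writing $\rho_y=\tfrac as+\tfrac1s R_y$ with $R_y:=\sum_{i\in\Phi\setminus C}y_iy_{\pi(i)}$ (the terms with $i\in C$ contribute $y_i^2=1$ each), the random variable $R_Y$ has mean $0$ and, since $\pi$ shifts blocks forward and thus has no $2$-cycles so all cross terms in $\E[R_Y^2]$ vanish, variance exactly $|\Phi\setminus C|=r(t-1)\le\epsilon'\ell$. Chebyshev with $\tau:=\log\ell$ then gives $|\rho_Y-a/s|=O(\sqrt{\tau/\ell})$ off an event of probability at most $1/\tau$, and on the complementary event $1-|\rho_Y|\ge\tfrac{(t-1)\delta'}{1-\delta'}-O(\sqrt{\tau/\ell})\ge\delta'/2$ for $\ell$ large (as $t\ge2$), so the Berry--Esseen error is $o_\ell(1)$. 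Averaging over $y$ in the good event and using that $\arccos$ is decreasing and nonnegative yields $\Delta(g,h)\ge(1-1/\tau)\bigl(\arccos(a/s+O(\sqrt{\tau/\ell}))/\pi-o_\ell(1)\bigr)$. Since $t=\lceil\epsilon'/\delta'\rceil\ge\epsilon'/\delta'$ and $\epsilon'=1-\cos(\epsilon\pi)$, we have $a/s=\tfrac{1-t\delta'}{1-\delta'}\le\tfrac{\cos(\epsilon\pi)}{1-\delta'}=\cos(\epsilon\pi)+O(\delta')$; and $\epsilon>4\delta$ forces $\sin(\epsilon\pi)=\Omega(\delta)$, so $|\tfrac{d}{du}\arccos(u)|=O(1/\delta)$ near $u=\cos(\epsilon\pi)$, giving $\arccos(a/s+o_\ell(1))/\pi\ge\epsilon-O(\delta'/\delta)-o_\ell(1)=\epsilon-O(\alpha\delta)-o_\ell(1)$. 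Choosing $\alpha$ a small enough absolute constant and then $\ell=\ell(\delta)$ large enough makes this at least $\epsilon-\delta$.

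The step I expect to be the main obstacle is the Gaussian approximation: it is tempting to treat the $s$ summands $X_{\Phi_j}Y_{\Phi_j}$ of $U$ together with $X_{\Phi_j}Y_{\Psi_j}$ of $V$ as a sequence of independent two-dimensional vectors, but this fails because a single $Y$-coordinate can occur both as some $Y_{\Phi_j}$ and as some $Y_{\Psi_{j'}}$ (the sets $\Phi$ and $\Psi$ overlap). Conditioning on $Y$ rather than on $X$ is exactly what rescues independence and is the crux of the argument; the rest is bookkeeping on $\mathsf{stretch}_{r,a}$ and on the Berry--Esseen and $\arccos$ error terms, all of which must be shown to fit inside the slack $\delta$ — this is where the choices $\delta'=\alpha\delta^2$ and $\ell=\ell(\delta)$ large get used.
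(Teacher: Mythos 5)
Your proof is correct, but it solves a problem that isn't actually there, and as a result takes a longer route than the paper does. The "obstacle" you flag — that the same $Y$-coordinate can appear both as $Y_{\Phi_j}$ and as $Y_{\Psi_{j'}}$ — does not in fact break independence. The paper's Lemma~\ref{le:indep_coord} shows directly that the pairs $(X_{\Lambda_j} Y_{\Phi_j}, X_{\Lambda_j} Y_{\Psi_j})_{j\in[s]}$ are jointly independent, by an induction on $j$: conditioned on all earlier pairs, $X_{\Lambda_j}$ is fresh (so $X_{\Lambda_j}Y_{\Phi_j}$ is a uniform bit), and conditioned also on $X_{\Lambda_j}Y_{\Phi_j}$, $Y_{\Psi_j}$ is fresh (so $X_{\Lambda_j}Y_{\Psi_j}$ is a uniform bit too); the orderings $\Phi_{j'}<\Phi_j<\Psi_j$ for $j'<j$ and $\Psi_{j'}<\Psi_j$ are exactly what makes these two coordinates fresh. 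The pair $(X_{\Lambda_j}Y_{\Phi_j}, X_{\Lambda_j}Y_{\Psi_j})$ then has correlation $0$ on the $s-a$ "shifted" coordinates and correlation $1$ on the $a$ "appended-ones" coordinates, giving a fixed average correlation $\rho=a/s$. The paper then applies the two-dimensional Berry--Esseen Theorem~\ref{th:2d_berry_ess_bin_not_id} once, unconditionally, followed by Sheppard's formula, and the arithmetic in $\epsilon', \delta', \ell$ finishes. Your route — condition on $Y$, get a $y$-dependent correlation $\rho_y$, prove concentration of $\rho_Y$ around $a/s$ via a variance computation and Chebyshev, control the $y$-dependent Berry--Esseen error on the good event, and average — is valid (your variance bookkeeping for $R_Y$, including the no-$2$-cycles observation for $\pi$, checks out, and the final $\arccos$ estimate is fine), but it introduces a concentration step and a good-event/bad-event split that the paper avoids entirely. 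The one genuine content your version adds is spelling out the block decomposition $\Phi = D_1\cup\cdots\cup D_{t-1}\cup C$, $\Psi = D_2\cup\cdots\cup D_t\cup C$ and the shift map $\pi$; the paper leaves this picture implicit. If you redo the argument, you should first try the unconditional independence claim — it is the cleaner and intended path, and the induction is short.
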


To prove Lemma~\ref{le:dist_sep}, we use the next lemma which spells out the distribution of the sequence $(X_{\Lambda_j} Y_{\Phi _j},X_{\Lambda_j} Y_{\Psi _j})_{j \in [s]}$ of random variables.

\begin{lem}\label{le:indep_coord}
	The random variables $(X_{\Lambda_1} Y_{\Phi _1},X_{\Lambda_1} Y_{\Psi _1}), (X_{\Lambda_2} Y_{\Phi _2},X_{\Lambda_2} Y_{\Psi _2}), \dots, (X_{\Lambda_s} Y_{\Phi _s},X_{\Lambda_s} Y_{\Psi _s})$ are independent, and they are distributed as follows:
	\begin{enumerate}
		\item\label{part:indep_1} For $1 \le j \le s-a$, $(X_{\Lambda_j} Y_{\Phi _j},X_{\Lambda_j} Y_{\Psi _j})$ is uniformly distributed on $\{\pm 1\}^2$.
		\item\label{part:indep_2} For $s-a < j \le s$, $(X_{\Lambda_j} Y_{\Phi _j},X_{\Lambda_j} Y_{\Psi _j})$ is uniformly distributed on $\{ (-1,-1), (+1,+1)\}$.
	\end{enumerate}
\end{lem}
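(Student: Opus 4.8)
The plan is to unwind the definition of $\mathsf{stretch}_{r,a}$, read off the exact positions $\Phi_j,\Psi_j,\Lambda_j$, and then reduce the whole statement to a short $\F_2$-linear-algebra computation. Recall $\phi=(\sigma_1,\dots,\sigma_{t-1})$ and $\psi=(\sigma_2,\dots,\sigma_t)$, so the binary string underlying $\Phi$ has the length-$r$ blocks indexed by $\sigma_1,\dots,\sigma_{t-1}$ turned on followed by $a$ appended ones, while the string underlying $\Psi$ has the blocks indexed by $\sigma_2,\dots,\sigma_t$ turned on followed by the same $a$ appended ones; by the WLOG assumption $\Lambda=\Phi$ made just before the lemma (the case $\Lambda=\Psi$ is symmetric). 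For $j\le(t-1)r=s-a$, writing $j=(i-1)r+o$ with $o\in[r]$ and $i\in[t-1]$, I would record: (i)~$\Phi_j$ is the $o$-th coordinate of block $\sigma_i$ and $\Psi_j$ is the $o$-th coordinate of block $\sigma_{i+1}$, in particular $\Phi_j\ne\Psi_j$; (ii)~among the indices $\Phi_1,\dots,\Phi_s,\Psi_1,\dots,\Psi_s$ the only coincidences are $\Psi_j=\Phi_{j+r}$ (for $j$ with block-index $i\le t-2$) and, for the appended part, $\Phi_j=\Psi_j$ for every $j$ with $s-a<j\le s$.

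With this bookkeeping, the two marginal claims are immediate. For $s-a<j\le s$ the pair is $(X_{\Phi_j}Y_{\Phi_j},\,X_{\Phi_j}Y_{\Phi_j})$ and $X_{\Phi_j}Y_{\Phi_j}$ is a uniform $\pm1$ bit, so the pair is uniform on $\{(-1,-1),(+1,+1)\}$. For $j\le s-a$ the three bits $X_{\Phi_j},Y_{\Phi_j},Y_{\Psi_j}$ are independent and uniform, and an elementary check shows that for independent uniform $A,B,C\in\{\pm1\}$ the pair $(AB,AC)$ is uniform on $\{\pm1\}^2$.

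The crux is the \emph{mutual} independence of all $s$ pairs, and the obstacle is that the $Y$-coordinates overlap: the coincidences $\Psi_j=\Phi_{j+r}$ chain consecutive pairs together. I would handle this by partitioning $\{1,\dots,s-a\}$ into the $r$ residue classes $\mathcal{O}_o=\{o,o+r,\dots,o+(t-2)r\}$, $o\in[r]$: within $\mathcal{O}_o$, letting $q_i$ denote the $o$-th coordinate of block $\sigma_i$, the pair of index $(i-1)r+o$ equals $(X_{q_i}Y_{q_i},\,X_{q_i}Y_{q_{i+1}})$, a ``path'' on the $2t-1$ independent uniform bits $X_{q_1},\dots,X_{q_{t-1}},Y_{q_1},\dots,Y_{q_t}$. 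Identifying $\{\pm1\}$ with $\F_2$ additively, the $2(t-1)$ sign bits $x_i+y_i$ and $x_i+y_{i+1}$ ($i\in[t-1]$) are linearly independent functionals of these $2t-1$ underlying bits: if $\sum_i\bigl(\alpha_i(x_i+y_i)+\beta_i(x_i+y_{i+1})\bigr)=0$, then the coefficient of $x_i$ gives $\beta_i=\alpha_i$, the coefficient of $y_1$ gives $\alpha_1=0$, the coefficient of $y_j$ for $2\le j\le t-1$ gives $\alpha_j=\alpha_{j-1}$, and the coefficient of $y_t$ gives $\alpha_{t-1}=0$, so all coefficients vanish. Hence the map is onto $\F_2^{2(t-1)}$ and the $t-1$ pairs in $\mathcal{O}_o$ are jointly uniform.

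Finally I would glue the pieces: distinct residue classes $\mathcal{O}_o$, and the set of appended indices $s-a<j\le s$, use pairwise disjoint coordinate sets inside $[n]$, so the corresponding blocks of pairs are deterministic functions of disjoint collections of independent bits and are therefore mutually independent; together with the within-class uniformity from the rank computation and the diagonal marginals on the appended part, this is exactly the asserted joint law. The only genuinely non-routine ingredient is the rank argument; everything else is careful bookkeeping about $\mathsf{stretch}_{r,a}$.
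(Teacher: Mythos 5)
Your proof is correct, and it takes a genuinely different route from the paper. The paper proves the joint uniformity of the first $s-a$ pairs by a direct sequential conditioning argument: process the indices $j=1,\dots,s-a$ in order and note that in the $j$-th pair $(X_{\Lambda_j}Y_{\Phi_j},X_{\Lambda_j}Y_{\Psi_j})$, both $X_{\Lambda_j}$ and $Y_{\Psi_j}$ are ``fresh'' coordinates not appearing in any earlier pair (the former because $\Phi$ is sorted, the latter because $\Psi_j$ exceeds all of $\Phi_{\le j}$ and $\Psi_{<j}$); so $X_{\Lambda_j}Y_{\Phi_j}$ is uniform conditioned on the past, and $X_{\Lambda_j}Y_{\Psi_j}$ is uniform conditioned on that. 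Your proof instead partitions $\{1,\dots,s-a\}$ into the $r$ residue classes mod $r$, observes that distinct classes touch disjoint coordinate sets, and inside each class establishes joint uniformity by an $\F_2$-rank computation showing the $2(t-1)$ bits $\{x_i+y_i,\,x_i+y_{i+1}\}_{i\in[t-1]}$ are linearly independent functionals of the $2t-1$ underlying bits. Both arguments are valid and roughly comparable in length; the paper's conditioning is a touch more economical because it never computes a rank, while your residue-class decomposition makes the combinatorial structure of the $\mathsf{stretch}$ overlaps more explicit (in particular the chain $\Psi_j=\Phi_{j+r}$) and gives a global rather than sequential independence argument, which is arguably more transparent if one wants to see \emph{why} the construction works.
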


\begin{proof}[Proof of Lemma~\ref{le:indep_coord}]
	By the operation of the $\mathsf{stretch}_{r,a}$ procedure (Definition~\ref{def:stretch}) in steps~\ref{st:step_1_red_prot} and \ref{st:step_2_red_prot} of Protocol~\ref{alg:red_prot}, for every $s-a < j \le s$, it holds that $(X_{\Lambda_j} Y_{\Phi _j},X_{\Lambda_j} Y_{\Psi _j}) = (X_{j} Y_{j},X_{j} Y_{j})$ and that $X_j$ and $Y_j$ do not contribute to any other $(X_{\Lambda_{j'}} Y_{\Phi_{j'}},X_{\Lambda_{j'}} Y_{\Psi_{j'}})$ pair. This implies part~\ref{part:indep_1} and that for each $s-a < j \le s$, the pair $(X_{\Lambda_j} Y_{\Phi _j},X_{\Lambda_j} Y_{\Psi _j})$ is independent of all other pairs in the sequence.
	
	We next prove by induction on $1 \le j \le s-a$ that conditioned on $(X_{\Lambda_{<j}} Y_{\Phi_{<j}},X_{\Lambda_{<j}} Y_{\Psi _{<j}})$ taking any particular value, $(X_{\Lambda_j} Y_{\Phi _j},X_{\Lambda_j} Y_{\Psi _j})$ is uniformly distributed on $\{\pm 1\}^2$. To see this (assuming WLOG that $\Lambda = \Phi$), note that $(X_{\Lambda_j} Y_{\Phi _j},X_{\Lambda_j} Y_{\Psi _j}) = (X_{\Lambda_j} Y_{\Lambda_j},X_{\Lambda_j} Y_{\Psi_j})$ where $X_{\Lambda_j}$ and $Y_{\Psi _j}$ do not appear in $(X_{\Lambda_{<j}} Y_{\Phi_{<j}},X_{\Lambda_{<j}} Y_{\Psi _{<j}})$. Hence, conditioned on $(X_{\Lambda_{<j}} Y_{\Phi_{<j}},X_{\Lambda_{<j}} Y_{\Psi _{<j}})$ taking any particular value, $X_{\Lambda_j} Y_{\Phi _j}$ is a uniformly random bit (because of $X_{\Lambda_j}$). Moreover, conditioned on $(X_{\Lambda_{<j}} Y_{\Phi_{<j}},X_{\Lambda_{<j}} Y_{\Psi _{<j}})$ and $X_{\Lambda_j} Y_{\Phi _j}$ taking any particular values, $X_{\Lambda_j} Y_{\Psi_j}$ is a uniformly random bit (because of $Y_{\Psi _j}$). This completes the proof of the lemma.
\end{proof}

We now use Lemma~\ref{le:indep_coord} along with a two-dimensional Central Limit Theorem in order to prove Lemma~\ref{le:dist_sep}.

\begin{proof}[Proof of Lemma~\ref{le:dist_sep}]
	By Lemma~\ref{le:indep_coord}, independently for each $1 \le j \le s-a$, $(X_{\Lambda_j} Y_{\Phi _j},X_{\Lambda_j} Y_{\Psi _j})$ is a $\rho_j$-correlated random pair with $\rho_j = 0$, and independently for each $s-a < j \le s$, $(X_{\Lambda_j} Y_{\Phi _j},X_{\Lambda_j} Y_{\Psi _j})$ is a $\rho_j$-correlated random pair with $\rho_j = 1$. Hence, the average correlation across coordinates is
	\begin{equation*}
	\rho \triangleq k^{-1} \cdot \displaystyle\sum\limits_{j \in [k]} \rho_j = \frac{a}{s} = \frac{\ell \cdot (1- t \cdot \delta')}{\ell \cdot (1-\delta')} = 1-\frac{(t-1) \cdot \delta'}{1-\delta'} \le 1-(\epsilon'-\delta'),
	\end{equation*}
	where the last inequality used the setting of $t = \lceil \epsilon'/\delta' \rceil$ in Protocol~\ref{alg:red_prot}. Denoting by $\Sigma_j$ the covariance matrix of $(X_{\Lambda_j} Y_{\Phi _j},X_{\Lambda_j} Y_{\Psi _j})$ for every $j \in [s]$, the average (across coordinates) covariance matrix is then given by $\Sigma \triangleq k^{-1} \cdot \displaystyle\sum\limits_{j \in [k]} \Sigma_j = \begin{bmatrix}
	1       & \rho \\
	\rho       & 1 
	\end{bmatrix}$. Note that the smallest eigenvalue of $\Sigma$ is $\lambda = 1-|\rho|$. Let $(X',Y')$ be a pair of zero-mean $\rho$-correlated Gaussians. By the two-dimensional Berry-Esseen Theorem~\ref{th:2d_berry_ess_bin_not_id} and Sheppard's formula (i.e., Fact~\ref{fact:sheppard}), we get that
	\begin{align*}
	\Delta(g,h) &= \Pr[ g(X,Y) \neq h(X,Y)]\\ 
	&= \Pr[ g(X,Y) = 0, h(X,Y) = 1] + \Pr[ g(X,Y) = 1, h(X,Y) = 0]\\ 
	&= \Pr[ X' <0, Y \geq 0] \pm O\bigg(\frac{1}{(1-|\rho|)^{3/2} \cdot \sqrt{s}}\bigg) + \Pr[ X' \geq 0, Y < 0] \pm O\bigg(\frac{1}{(1-|\rho|)^{3/2} \cdot \sqrt{s}}\bigg)\\ 
	&= \Pr[ \Sign(X') \neq \Sign(Y')] \pm O\bigg(\frac{1}{(1-|\rho|)^{3/2} \cdot \sqrt{\ell}}\bigg)\\ 
	&= \frac{\arccos(\rho)}{\pi} \pm O\bigg(\frac{1}{(1-|\rho|)^{3/2} \cdot \sqrt{\ell}}\bigg)\\ 
	&\geq \frac{\arccos(1-(\epsilon'-\delta'))}{\pi} - O\bigg(\frac{1}{(1-|\rho|)^{3/2} \cdot \sqrt{\ell}}\bigg)\\ 
	&\geq \frac{\arccos(1-\epsilon')}{\pi} - O(\sqrt{\delta'})- O\bigg(\frac{1}{(1-|\rho|)^{3/2} \cdot \sqrt{\ell}}\bigg)\\ 
	&\geq \epsilon - \delta,
	\end{align*}
	where the last inequality follows by setting $\epsilon' = 1-\cos(\epsilon \pi)$ for the given $\epsilon \in [\delta, 0.5]$, setting $\ell$ to be a sufficiently large function of $\delta$, and setting $\delta' = \alpha \cdot \delta^2$ for a sufficiently small positive absolute constant $\alpha$.
\end{proof}

We are now ready to complete the proof of Part~\ref{part:det_unc_lb} of Theorem~\ref{th:det_sep}. In Protocol~\ref{alg:red_prot}, the tuples $((X^{(i)}, Y^{(i)}), B_i)_{i \in [k]}$ are i.i.d. samples corresponding to a function $q$ which, by the error guarantee of protocol $\Pi$, is $(\epsilon/2-2\delta-\eta)$-close to $f_T$. Since $\Delta(g,f_T) \le \delta$, we get that $\Delta(g,q) \le (\epsilon/2-\delta-\eta)$. By Lemma~\ref{le:dist_sep}, we also get that $\Delta(h,q) \geq (\epsilon/2+\eta)$. By Hoeffding's bound (i.e., Fact~\ref{fact:hoeffding}), for $k = \Theta(1/\eta^2)$, the empirical error $\eps^{(p)}$ of $g$ on the samples $((X^{(i)}, Y^{(i)}), B_i)_{i \in [k]}$ is less than the empirical error $\eps^{(s)}$ of $h$ on these samples, with high constant probability. Hence, Bob returns YES with high constant probability. A symmetric argument shows that when $\Lambda = \Psi$, Bob returns NO with high constant probability, which completes the proof.

\begin{fact}[Hoeffding's bound]\label{fact:hoeffding}
	Consider a coin that shows up head with probability $p$. Let $H(k)$ be the number of heads obtained in $k$ independent tosses of this coin. Then, for every $\epsilon >0$, $\Pr[ |H(k)-p k| > \epsilon k] \le 2 e^{-2 \epsilon^2 k}$.
\end{fact}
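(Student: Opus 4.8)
The plan is to prove Fact~\ref{fact:hoeffding} via the standard Chernoff--Hoeffding moment-generating-function argument, treating the two tails separately and combining them by a union bound. Write $H(k) = \sum_{i=1}^{k} X_i$ where $X_1,\dots,X_k$ are i.i.d.\ with $\Pr[X_i = 1] = p$ and $\Pr[X_i = 0] = 1-p$, and set $Z_i \triangleq X_i - p$, so that $\ex[Z_i] = 0$ and $Z_i$ takes values in the interval $[-p,\,1-p]$, which has length $1$.

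First I would bound the upper tail $\Pr[H(k) - pk > \epsilon k] = \Pr[\sum_i Z_i > \epsilon k]$. For any $s > 0$, Markov's inequality applied to $\exp(s\sum_i Z_i)$ together with independence of the $Z_i$ gives $\Pr[\sum_i Z_i > \epsilon k] \le e^{-s\epsilon k}\prod_{i=1}^{k}\ex[e^{sZ_i}]$. The key ingredient is \emph{Hoeffding's lemma}: for a mean-zero random variable $Z$ supported on an interval of length $1$, $\ex[e^{sZ}] \le e^{s^2/8}$. I would establish this by first using convexity of $t\mapsto e^{st}$ to write $e^{sZ} \le \bigl((1-p)-Z\bigr)e^{-sp} + (Z+p)e^{s(1-p)}$ pointwise on $[-p,1-p]$, then taking expectations to get $\ex[e^{sZ}] \le (1-p)e^{-sp} + p\,e^{s(1-p)}$, and finally setting $\psi(s) \triangleq -sp + \log\bigl(1-p+pe^{s}\bigr)$ and verifying that $\psi(0) = \psi'(0) = 0$ while $\psi''(s) = q(s)\bigl(1-q(s)\bigr) \le 1/4$, where $q(s) \triangleq pe^{s}/(1-p+pe^{s})$ is a probability and $q(1-q)$ is the variance of a Bernoulli random variable; Taylor's theorem then yields $\psi(s) \le s^2/8$, i.e.\ $\ex[e^{sZ}] \le e^{s^2/8}$.

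Plugging this bound into the product yields $\Pr[H(k) - pk > \epsilon k] \le e^{-s\epsilon k + ks^2/8}$ for every $s > 0$, and the choice $s = 4\epsilon$ minimizes the exponent and gives $\Pr[H(k) - pk > \epsilon k] \le e^{-2\epsilon^2 k}$. The lower tail is symmetric: running the same argument with $-Z_i$ in place of $Z_i$ (again mean-zero and supported on an interval of length $1$) gives $\Pr[H(k) - pk < -\epsilon k] \le e^{-2\epsilon^2 k}$. A union bound over these two events then yields $\Pr[|H(k) - pk| > \epsilon k] \le 2e^{-2\epsilon^2 k}$, as claimed.

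The only step that is not routine bookkeeping is Hoeffding's lemma, and within it the crux is the uniform bound $\psi''(s) \le 1/4$; this follows from recognizing $\psi''(s) = q(s)(1-q(s))$ as the variance of a $\{0,1\}$-valued random variable and using that such a variance is at most $1/4$. The remaining pieces --- the exponential Markov inequality, tensorization over independent coordinates, and the optimization $s = 4\epsilon$ --- are standard.
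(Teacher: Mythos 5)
Your proof is correct and is the canonical Chernoff--Hoeffding argument (exponential Markov, tensorization over independent coordinates, Hoeffding's lemma via convexity and the $\psi''=q(1-q)\le 1/4$ bound, optimization at $s=4\epsilon$, and a union bound over the two tails). The paper itself states this as a known Fact without proof, so there is no in-paper argument to compare against; your derivation is the standard one and fills that in correctly.
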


\begin{note}\label{rem:priv_elab}
	As mentioned in Note~\ref{rem:priv}, the above construction cannot give a separation larger than $\Theta(\log\log{n})$. This is because using private randomness, Bob can learn the set $S$ using $O(\log\log{n})$ bits of communication (see, e.g., \cite{brody2014beyond}). Additionally, Alice can send the coordinates of $X$ indexed by the elements of $S$ to Bob who can then compute $f_S(X,Y)$.
\end{note}

\section{Construction for Public-Coin Uncertain Protocols}\label{sec:rand_imp_sep}
In this section, we describe the construction that is used to prove Theorem~\ref{th:imp_rand_sep}. We set
\begin{equation*}
\delta' \triangleq c \cdot \delta^2 ~~ \text{ and } ~~ \epsilon \triangleq \sqrt{\delta'/n},\tag{\dag}\label{eq:eps_prime_delt_prime}
\end{equation*}
where $\delta$ is the parameter from the statement of Theorem~\ref{th:imp_rand_sep}, and $c >0$ is a small-enough absolute constant. To define our input distribution, we first define a slightly more general distribution $\mu_{\eta}$. The support of $\mu_{\eta}$ is $\{0,1\}^{k n} \times \{0,1\}^{k n}$ and we will view the coordinates of a sample $(x,y) \sim \mu_{\eta}$ as $x = (x^{(i)})_{i \in [k]}$ and $y = (y^{(i)})_{i \in [k]}$ with $x^{(i)}, y^{(i)} \in \{0,1\}^n$ for all $i \in [k]$. A sample $(x,y) \sim \mu_{\eta}$ is generated by letting $x \in_R \{0,1\}^{k n}$ and for all $i \in [k]$ and $j \in [n]$, independently setting $y^{(i)}_j$ to be an $\eta$-noisy copy of $x^{(i)}_j$. In other words, we set $y^{(i)}_j = x^{(i)}_j$ w.p. $1-\eta$ and $y^{(i)}_j = 1-x^{(i)}_j$ w.p. $\eta$. Our input distribution is then $\mu_{2\epsilon -2\epsilon^2}$.

We now define our function class $\mathcal{F}_{\epsilon}$. Each function in our universe is specified by a sequence of subsets $S \triangleq (S^{(i)} \subseteq [n])_{i \in [k]}$ and it is of the form $f_S:\{0,1\}^{2 \cdot k \cdot n} \to \{0,1\}$ with\footnote{We will use the symbol $S^{(i)}$ to denote both the subset of $[n]$ and its corresponding $0/1$ indicator vector.}
\begin{equation}\label{eq:rand_univ_func}
f_S(x,y) \triangleq \Sign\big( \displaystyle\sum\limits_{i \in [k]} (-1)^{\langle S^{(i)}, x^{(i)} \oplus y^{(i)} \rangle}\big)
\end{equation}
for all $x,y \in \{0,1\}^{k \cdot n}$, where in Eq.~(\ref{eq:rand_univ_func}) the inner product is over $\mathbb{F}_2$, the sum is over $\mathbb{R}$ and $x^{(i)} \oplus y^{(i)}$ denotes the coordinate-wise XOR of the two length-$n$ binary strings $x^{(i)}$ and $y^{(i)}$. The function class is then defined by $\mathcal{F}_{\epsilon} \triangleq \{ (f_S, f_T): |S^{(i)} \triangle T^{(i)}| \le \epsilon \cdot n \text{ for all } i \in [k]\}$.

We now give the proof of Part~\ref{part:rand_dist} of Theorem~\ref{th:imp_rand_sep}. It follows from known bounds on the noise stability of the majority function.

\begin{proof}[Proof of Part~\ref{part:rand_dist} of Theorem~\ref{th:imp_rand_sep}]
	Let $(f_S,f_T) \in \mathcal{F}_{\epsilon}$, and denote $a_i \triangleq (-1)^{\langle S^{(i)}, x^{(i)} \oplus y^{(i)} \rangle}$ and $b_i \triangleq (-1)^{\langle T^{(i)}, x^{(i)} \oplus y^{(i)} \rangle}$ for every $i \in [k]$. Also, let $a \triangleq (a_i)_{i \in [k]}$ and $b \triangleq (b_i)_{i \in [k]}$. Note that $(a_i,b_i)$ is a pair of $\rho_i$-correlated random strings with $\rho_i \geq (1-2\delta')$. Since $\Delta_{\mu_{2\epsilon -2\epsilon^2}}(f_S,f_T)$ increases when $\rho_i$ decreases, we assume WLOG that $\rho_i = 1-2\delta' \triangleq \rho$ for all $i \in [k]$.
	
	Recall that the \emph{noise stability} of a function $h:\{0,1\}^k \to \{\pm 1\}$ is defined as $\mathsf{Stab}_{\rho}(h) = \Ex[h(x) h(y)]$ where $(x,y)$ is a random pair of $\rho$-correlated strings. Let the function $\widetilde{\Maj}_k$ be defined by $\widetilde{\Maj}_k(x) = (-1)^{\Maj_k(x)}$ for all $x \in \{0,1\}^k$. Recall (see~\cite{ODonnellBook}) that the noise stability of $\widetilde{\Maj}_k$ satisfies
	\begin{equation*}
	\mathsf{Stab}_{\rho}(\widetilde{\Maj}_k) \geq 1- \frac{2}{\pi} \arccos(\rho).
	\end{equation*}
	Hence, we get that
	\begin{align*}
	\Delta_{\mu_{2\epsilon -2\epsilon^2}}(f_S,f_T) &= \Pr[f_S(x,y) \neq f_T(x,y)]\\ 
	&\le \frac{1-\mathsf{Stab}_{\rho}(\widetilde{\Maj}_k)}{2}\\ 
	&\le \frac{\arccos(\rho)}{\pi}\\ 
	&= O(\sqrt{\delta'})\\ 
	&\le \delta,
	\end{align*}
	where the last equality uses the facts that $\rho = 1-2\delta'$ and that $\arccos(1-x) = O(\sqrt{x})$ for small positive values of $x$, and the last inequality follows from the setting of $\delta'$ in (\ref{eq:eps_prime_delt_prime}).
\end{proof}

We now give the (straightforward) proof of Part~\ref{part:cert_ub} of Theorem~\ref{th:imp_rand_sep}.
\begin{proof}[Proof of Part~\ref{part:cert_ub} of Theorem~\ref{th:imp_rand_sep}]
	Note that $\langle S^{(i)}, x^{(i)} \oplus y^{(i)} \rangle = \langle S^{(i)}, x^{(i)} \rangle \oplus \langle S^{(i)}, y^{(i)} \rangle$. Thus, when both Alice and Bob know $S$, Alice can send the $k$ bits $(\langle S^{(i)}, x^{(i)} \rangle)_{i \in [k]}$ to Bob who can then output the value $f_{S}(x,y)$.
\end{proof}

In order to prove Part~\ref{part:uncert_lb} of Theorem~\ref{th:imp_rand_sep}, we first define (as in \cite{ghazi2015communication}) a communication problem in the standard distributional model that reduces to solving the contextually-uncertain problem specified by the function class $\mathcal{F}_{\epsilon}$ and the distribution $\mu_{2\epsilon -2\epsilon^2}$. For distributions $\phi$ and $\psi$, we denote by $\phi \otimes \psi$ the joint distribution of a sample from $\phi$ and an independent sample from $\psi$. The new problem is defined as follows.

\paragraph*{Inputs:} Alice's input is a pair $(S,x)$ where $S \triangleq (S^{(i)} \subseteq [n])_{i \in [k]}$ and $x \in \{0,1\}^{k \cdot n}$. Bob's input is a pair $(T,y)$ where $T \triangleq (T^{(i)} \subseteq [n])_{i \in [k]}$ and $y \in \{0,1\}^{k \cdot n}$.

\paragraph*{Distribution:} Let $\mathcal{D}_q$ be the distribution on the pair $(S,T)$ of sequences of $k$ subsets of $[n]$, which is defined by independently setting, for each $i \in [k]$, $S^{(i)}$ to be a uniformly-random subset of $[n]$, and $T^{(i)}$ to be a $q$-noisy copy of $S^{(i)}$. The distribution on the inputs of Alice and Bob is then given by $\nu_{\epsilon} \triangleq \mathcal{D}_{\epsilon} \otimes \mu_{2\epsilon -2\epsilon^2}$ with $\epsilon = \sqrt{\delta'/n}$.

\paragraph*{Function:} The goal is to compute the function $F:\{0,1\}^{2kn} \times \{0,1\}^{2kn} \to \{0,1\}$ defined by $F((S,x),(T,y)) = f_T(x,y) = \Sign\big( \sum_{i \in [k]} (-1)^{\langle T^{(i)}, x^{(i)} \oplus y^{(i)} \rangle}\big)$.\\ 

The next proposition follows from a simple application of the Chernoff bound.
\begin{proposition}\label{prop:rand_reduction}
For any $\theta >0$, $\owCCU_{\theta}^{\mu_{2\epsilon-2-\epsilon^2}}(\calf_{\epsilon}) \geq \owCC^{\nu_{\epsilon}}_{\theta + \theta'}(F)$ with $\theta' = 2^{-\Theta(\epsilon \cdot n)}$.
\end{proposition}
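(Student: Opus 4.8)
The plan is to show that any one-way public-coin protocol $\Pi$ that $\theta$-computes the uncertain function class $\calf_{\epsilon}$ under $\mu_{2\epsilon-2\epsilon^2}$ can be used, essentially verbatim, as a one-way protocol for $F$ under $\nu_{\epsilon} = \D_{\epsilon} \otimes \mu_{2\epsilon-2\epsilon^2}$, with only a tiny additive loss $\theta' = 2^{-\Theta(\epsilon n)}$ in the error. The only thing standing between the two models is that in the reduction the pair $(S,T)$ is drawn from $\D_{\epsilon}$ rather than chosen adversarially subject to the hard constraint $|S^{(i)} \triangle T^{(i)}| \le \epsilon n$ for all $i$; so the key step is to observe that a sample from $\D_{\epsilon}$ satisfies this constraint with probability $1 - 2^{-\Theta(\epsilon n)}$, by a Chernoff bound.

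Concretely, I would proceed as follows. First, let $\Pi$ be a one-way public-coin protocol achieving $\owCCU_{\theta}^{\mu_{2\epsilon-2\epsilon^2}}(\calf_{\epsilon})$, so for every $(f_S, f_T) \in \calf_{\epsilon}$, $\Pi$ outputs $f_T(x,y)$ with probability $\ge 1-\theta$ over $(x,y) \sim \mu_{2\epsilon-2\epsilon^2}$ and the public coins. Now run exactly this protocol as a protocol for $F$: on input $((S,x),(T,y)) \sim \nu_{\epsilon}$, Alice treats $S$ as her "function index", Bob treats $T$ as his, and they execute $\Pi$; since $F((S,x),(T,y)) = f_T(x,y)$ by definition, a correct output of $\Pi$ is a correct output for $F$. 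Second, define the "bad event" $E$ that the drawn pair $(S,T)$ fails the promise, i.e. $|S^{(i)} \triangle T^{(i)}| > \epsilon n$ for some $i \in [k]$. For each fixed $i$, $|S^{(i)} \triangle T^{(i)}|$ is a sum of $n$ i.i.d.\ Bernoulli$(\epsilon)$ random variables (one per coordinate, indicating whether the $\epsilon$-noise flipped it), with mean $\epsilon n$; but wait — we need the deviation to be by a constant factor, so I would instead compare against a slightly relaxed promise or note that the construction tolerates this. The cleanest route: the promise in $\calf_\epsilon$ is $|S^{(i)} \triangle T^{(i)}| \le \epsilon n$, which is exactly the mean, so a one-sided Chernoff bound gives $\Pr[|S^{(i)} \triangle T^{(i)}| > (1+o(1))\epsilon n] \le 2^{-\Theta(\epsilon n)}$; to be safe one defines $\calf_\epsilon$ with threshold $(1+\gamma)\epsilon n$ for a small constant $\gamma$ (which only improves the distance bound in Part~\ref{part:rand_dist} up to constants), and then $\Pr_{(S,T)\sim\D_\epsilon}[E] \le k \cdot 2^{-\Theta(\epsilon n)} = 2^{-\Theta(\epsilon n)}$ by a union bound over $i \in [k]$, using $k = o(\exp(\sqrt n))$ and $\epsilon n = \sqrt{\delta' n}$. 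Third, condition on $\overline{E}$: then $(f_S, f_T) \in \calf_{\epsilon}$, and since the marginal of $(x,y)$ under $\nu_{\epsilon}$ conditioned on $(S,T)$ is exactly $\mu_{2\epsilon-2\epsilon^2}$ (the two components are independent in $\nu_\epsilon$), the error guarantee of $\Pi$ applies and $\Pi$ errs with probability $\le \theta$. Combining, the total error of the induced protocol for $F$ is at most $\Pr[E] + \theta \le \theta + 2^{-\Theta(\epsilon n)}$, and the communication cost is unchanged, giving $\owCC^{\nu_{\epsilon}}_{\theta+\theta'}(F) \le \owCCU_{\theta}^{\mu_{2\epsilon-2\epsilon^2}}(\calf_{\epsilon})$.

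The only mildly delicate point — and the one I would be most careful about — is the direction of the Chernoff deviation: since the promise threshold equals the mean $\epsilon n$, a literal sample from $\D_\epsilon$ lands \emph{above} threshold with constant probability, not exponentially small probability. The fix is purely definitional: either state the promise in $\calf_\epsilon$ with a slack factor $(1+\gamma)$ (harmless, since Part~\ref{part:rand_dist}'s bound $\Delta_\mu(f_S,f_T) = O(\sqrt{\delta'}) \le \delta$ absorbs constant-factor changes in the Hamming constraint), or equivalently take $\epsilon$ in $\D_\epsilon$ to be a constant factor smaller than the $\epsilon$ defining $\calf_\epsilon$. With that in place the Chernoff bound $\Pr[\sum \mathrm{Ber}(\epsilon) > (1+\gamma)\epsilon n] \le e^{-\Omega(\gamma^2 \epsilon n)} = 2^{-\Theta(\epsilon n)}$ goes through, the union bound over the $k$ blocks is negligible, and everything else is bookkeeping.
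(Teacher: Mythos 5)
Your proof is correct and takes exactly the approach the paper intends (the paper itself only says ``follows from a simple application of the Chernoff bound''): run the uncertain protocol $\Pi$ verbatim on a sample from $\nu_{\epsilon}$, use the fact that conditioned on $(S,T)$ the $(x,y)$-marginal of $\nu_{\epsilon}$ is exactly $\mu_{2\epsilon-2\epsilon^2}$, and charge the event that $(S,T)$ escapes the promise to the additive $\theta'$ term.

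The one substantive observation you make --- that the promise threshold $|S^{(i)} \triangle T^{(i)}| \le \epsilon n$ in $\calf_{\epsilon}$ sits exactly at the mean of the binomial $|S^{(i)} \triangle T^{(i)}| \sim \mathrm{Bin}(n,\epsilon)$ under $\D_{\epsilon}$, so the literal Chernoff bound gives only constant (not exponentially small) failure probability --- is a genuine gap in the paper's stated definitions, not in your argument. Your proposed fix (redefine $\calf_{\epsilon}$ with threshold $(1+\gamma)\epsilon n$, or equivalently sample $(S,T)$ from $\D_{\epsilon/(1+\gamma)}$) is the right one: the correlation bound $\rho_i \ge 1 - O(\delta')$ used in the proof of Part~\ref{part:rand_dist} of Theorem~\ref{th:imp_rand_sep} degrades only by a constant factor in $\delta'$, which is absorbed by the free constant $c$ in the setting $\delta' = c\delta^2$, so parts~\ref{part:rand_dist} and~\ref{part:cert_ub} are unaffected and the Chernoff plus union bound over $k = o(\exp(\sqrt{n}))$ blocks gives $\theta' = 2^{-\Theta(\epsilon n)}$ as claimed. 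The only cosmetic thing you leave implicit is that after fixing a good setting of the public coins (the easy direction of Yao, as the paper notes in a footnote) the resulting protocol is deterministic, matching the definition of $\owCC^{\nu_{\epsilon}}_{\theta+\theta'}(F)$; this is routine.
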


We will prove the following lower bound on $\owCC^{\nu_{\epsilon}}_{\theta}(F)$, which along with Proposition~\ref{prop:rand_reduction} and the settings of $\epsilon$ and $\delta'$ in (\ref{eq:eps_prime_delt_prime}), implies Part~\ref{part:uncert_lb} of Theorem~\ref{th:imp_rand_sep}:
\begin{lem}\label{le:rand_lb_std}
For every sufficiently small positive constant $\theta$, $\owCC^{\nu_{\epsilon}}_{\theta}(F) = \Omega(k \cdot \epsilon \cdot n)$.
\end{lem}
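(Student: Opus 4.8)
\textbf{Proof proposal for Lemma~\ref{le:rand_lb_std}.}

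The plan is to follow the outline in Section~\ref{subsec:over_rand}: show that any one-way protocol $\Pi$ for $F$ under $\nu_\epsilon$ with small constant error must reveal $\Omega(k)$ bits of information to Bob about the ``intermediate'' string $z \triangleq \big(f_{S^{(1)}}(x^{(1)},y^{(1)}),\dots,f_{S^{(k)}}(x^{(k)},y^{(k)})\big) \in \{0,1\}^k$ (where here I abuse notation and let $f_{S^{(i)}}$ denote the single-block function $(-1)^{\langle S^{(i)},x^{(i)}\oplus y^{(i)}\rangle}$ viewed as a $0/1$ value via $\Sign$ — more precisely the bit $\langle S^{(i)},x^{(i)}\oplus y^{(i)}\rangle$). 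I will define the \emph{intermediate information cost} of $\Pi$ as $I(Z;M\mid T)$, where $M$ is Alice's message and $Z$ the intermediate string. The first step is a reduction from $\HD_k$: given inputs $u,v\in\{0,1\}^k$ drawn uniformly and independently, Alice and Bob will use shared public randomness and private randomness to sample a full input $((S,x),(T,y))\sim\nu_\epsilon$ \emph{conditioned on} $z=u\oplus v$ being the intermediate string, in such a way that Alice's sampling depends only on $u$ (and public/private coins) and Bob's only on $v$. This is possible because a single block's base problem is the $\SubsetParity$ construction of \cite{ghazi2015communication}: conditioned on the parity bit $\langle S^{(i)},x^{(i)}\oplus y^{(i)}\rangle$ taking a fixed value, one can still sample $(S^{(i)},x^{(i)})$ on Alice's side and $(T^{(i)},y^{(i)})$ on Bob's side with the right marginals and correlations using locally-generated randomness plus one shared coin per block. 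Running $\Pi$ on the simulated input then lets Bob recover $F = \Sign(\sum_i (-1)^{z_i}) = \HD_k(u,v)$ (after the direction of the inequality is fixed). This gives a one-way private-coin protocol for $\HD_k$ under the uniform distribution whose information cost is at most the intermediate information cost of $\Pi$; by the known $\Omega(k)$ one-way information/communication lower bound for $\HD_k$ under the uniform distribution \cite{woodruff2007efficient}, we conclude the intermediate information cost of $\Pi$ is $\Omega(k)$.

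The second step converts this information bound into a communication bound via the strong direct product theorem for discrepancy \cite{lee2008direct}. Since $I(Z;M\mid T)=\Omega(k)$ and $Z\in\{0,1\}^k$, by a standard argument (Fano / the fact that low conditional entropy implies a good guess) there is, for a constant fraction of the messages $M$ and subset-sequences $T$, a guess $\hat z(M,T)$ of $Z$ that is correct with probability at least $2^{-\gamma k}$ for some small $\gamma<1$; equivalently, $\Pi$ together with a guessing strategy solves all $k$ copies of the base problem simultaneously with success probability $\ge (1/2+\Omega(1))^k$ — or more carefully, with success probability that beats the threshold the direct product theorem requires. The base problem is exactly the \cite{ghazi2015communication} $\SubsetParity$-under-noise problem, for which \cite{ghazi2015communication} proved (via the discrepancy method) that the communication is $\Omega(\sqrt n)$, i.e., the discrepancy after $o(\sqrt n)$ bits of communication is $2^{-\Omega(\sqrt n)}$. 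The strong direct product theorem for discrepancy then says: to solve $k$ independent copies with success probability $2^{-o(k)}$ above the trivial $2^{-k}$ — in particular with probability $2^{-\gamma k}$ for suitable $\gamma$ — one needs $\Omega(k\sqrt n)$ bits. Recalling $\epsilon = \sqrt{\delta'/n}$ so that $\sqrt n = \Theta(\epsilon n)$, this yields $|\Pi| = \Omega(k\sqrt n) = \Omega(k\cdot\epsilon\cdot n)$, which is the claim. (One has to double-check that the error parameter $\theta$ in $F$ translates to an error per block small enough for the direct product machinery to bite; since $\theta$ is a sufficiently small constant and the base discrepancy bound is exponentially small, this is fine.)

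The main obstacle, I expect, is the first step — engineering the information-cost-preserving simulation (Protocol~\ref{alg:sim_prot}) so that (a) Alice's portion of the simulated input is a function of $u$ and her private/public coins only, Bob's of $v$ only, (b) the simulated $((S,x),(T,y))$ has \emph{exactly} the distribution $\nu_\epsilon$ conditioned on $z=u\oplus v$, and (c) the mutual information $I(M;Z\mid T)$ in the simulation is bounded by the communication-derived quantity without extra additive losses that would swamp the $\Omega(k)$ bound. Point (b) requires a careful look at the single-block base distribution: given the target parity bit $c$ for block $i$, we need to sample $S^{(i)}$ uniform, $x^{(i)}$ uniform subject to $\langle S^{(i)},x^{(i)}\rangle$ and the noise; the cleanest route is to have the players share (via public coins) the ``unconditioned'' randomness for $S^{(i)}$ and the noise patterns, have Alice pick $x^{(i)}$ and Bob derive $y^{(i)}$, and then use one shared bit to XOR a fixed coordinate so as to flip the parity to the desired value $c = u_i\oplus v_i$ — but the players only know $u_i$ and $v_i$ respectively, so the flip must be split as ``Alice flips if $u_i=1$, Bob flips if $v_i=1$'', which changes $x^{(i)}\oplus y^{(i)}$ on that coordinate exactly when $u_i\oplus v_i=1$, as needed. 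Verifying that this keeps all marginals and noise correlations intact, and that the conditioning does not distort the distribution of $T$ relative to $S$, is the delicate bookkeeping. Everything after that (Fano-type guessing, invoking \cite{lee2008direct} and the \cite{ghazi2015communication} discrepancy bound) is comparatively routine.
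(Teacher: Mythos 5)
Your high-level architecture matches the paper's: lower-bound an ``intermediate information cost'' by reducing to $\HD_k$ via a simulation protocol, then convert this into a guessing advantage of $(1/2+\Omega(1))^k$ for the vector of block-values and close with the Lee--Shraibman--\v{S}palek strong direct product theorem for discrepancy against the \cite{ghazi2015communication} base bound. The second half of your argument is essentially the paper's Lemma~\ref{le:estimation_k_d} plus Corollary~\ref{cor:sdp_disc_comb}, and that part is fine.

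The genuine gap is in the simulation step (Protocol~\ref{alg:sim_prot} in the paper), precisely the ``delicate bookkeeping'' you flagged. Your proposed mechanism --- sample an unconditioned instance via public coins, then have Alice flip a coordinate of $x^{(i)}$ when $u_i = 1$ and Bob flip the matching coordinate of $y^{(i)}$ when $v_i = 1$ --- does not produce the required distribution, for two reasons. First, the parity bit that $F$ actually depends on is $\langle T^{(i)}, x^{(i)}\oplus y^{(i)}\rangle$ with \emph{Bob's} subset $T^{(i)}$, while Alice only knows $S^{(i)}$; she cannot choose a flip coordinate guaranteed to lie in $T^{(i)}$, so the flip doesn't reliably change the quantity that determines $F$. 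Second, starting from a parity-$0$-conditioned noise pattern and then XORing one fixed coordinate when $u_i\oplus v_i = 1$ skews the marginal distribution of the noise at that coordinate, so the joint distribution of $(x^{(i)},y^{(i)})$ does not match $\mu_{2\epsilon-2\epsilon^2}$ even approximately, and your point (b) is actually false as stated. The paper sidesteps both problems by a different construction: it fixes a single ``nice'' $\hat{T}$ as a public parameter (typical, good conditional error, low conditional information cost), has Alice and Bob each sample an $\epsilon$-noisy copy of a shared $Z$ conditioned on $\langle\hat T,\cdot\rangle$ hitting their respective input strings $U$ and $V$, and has Alice locally synthesize $S$ as a noisy copy of $\hat T$. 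Because the simulation only approximates $\mu_{2\epsilon - 2\epsilon^2}$, the paper needs a separate Closeness Lemma (Lemma~\ref{le:close_lem}) bounding the total variation distance by $k\cdot\exp(-\epsilon n)$; you cannot get the conditional distribution \emph{exactly} right, and the proof must account for that error.

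A second, smaller, discrepancy: the intermediate information cost you use is $I(Z;M\mid T)$ for the XOR string $Z$, whereas the paper conditions additionally on $Y$ and tracks $(\langle T^{(i)},X^{(i)}\rangle)_i$ (which, once $Y,T$ are fixed, carries the same information as the XOR string) and, crucially, measures the information under a \emph{different} distribution $\kappa_\epsilon = \mathcal{D}_\epsilon\otimes\mu_\epsilon$ rather than under $\nu_\epsilon=\mathcal{D}_\epsilon\otimes\mu_{2\epsilon-2\epsilon^2}$. That mismatch --- error bounded under $\nu_\epsilon$, information measured under $\kappa_\epsilon$ --- is forced by the shape of the simulation (the simulated $(X,Z)$ pair is $\epsilon$-noisy, not $(2\epsilon-2\epsilon^2)$-noisy). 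Your write-up silently assumes the information bound and the error bound live under the same distribution, which is not what the reduction gives you.
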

We now prove Lemma~\ref{le:rand_lb_std} (which is the bulk of the proof of Theorem~\ref{th:imp_rand_sep}). Subsection~\ref{subsec:known_res_tools} summarizes some known results that we use in Subsections~\ref{subsec:sim_prot} and~\ref{subsec:pf_rand_lb_std}. In Subsection~\ref{subsec:sim_prot}, we prove a ``Simulation Lemma'' that will be useful to us. In Subsection~\ref{subsec:pf_rand_lb_std}, we prove Lemma~\ref{le:rand_lb_std}.

\subsection{Proof Preliminaries for Subsections~\ref{subsec:sim_prot} and~\ref{subsec:pf_rand_lb_std}}\label{subsec:known_res_tools}

In this subsection, we state some tools and known results that we use in the proofs in Subsections~\ref{subsec:sim_prot} and~\ref{subsec:pf_rand_lb_std}. We use the following strong direct product theorem for discrepancy of \cite{lee2008direct}\footnote{We point out that the statement of Corollary $23$ of \cite{lee2008direct} has a small inaccuracy: the additive $2^{-k \cdot (1-H_b(\tau))}$ term in Lemma~\ref{le:strong_dp} is missing. This term is clearly needed as one can always guess $f^{(k)}$ with probability $2^{-k}$. The statement that we use (Lemma~\ref{le:strong_dp}) can be obtained by combining Theorem $22$ of \cite{lee2008direct} and the proof of Proposition $1.4$ of \cite{viola2008norms}.}.
\begin{lem}[Corollary $23$ of \cite{lee2008direct}]\label{le:strong_dp}
	Let $f: X \times Y \to \{0,1\}$ be a Boolean function and $P$ a probability distribution over $X \times Y$. If $\CC^{P}_{1/2-w/2}(f) \geq C$ is proved using the discrepancy method, then the success probability under distribution $P^{\otimes k}$ of any $k C/3$ bit protocol computing the vector of solutions $f^{(k)}$ is at most $(8w)^{\tau \cdot k} + 2^{-k \cdot (1-H_b(\tau))}$ where $\tau$ is any positive constant less than $0.5$.
\end{lem}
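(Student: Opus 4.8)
The plan is to reassemble the argument from its two advertised ingredients: the tensor‑multiplicativity of a discrepancy‑type quantity under product distributions (Theorem~$22$ of \cite{lee2008direct}) and the reduction from predicting the whole vector $f^{(k)}$ to computing an XOR of a random sub‑collection of the $k$ copies (the argument behind Proposition~$1.4$ of \cite{viola2008norms}). Throughout, write $F=(-1)^{f}$ for the $\pm 1$‑valued version of $f$, and recall the discrepancy lower bound: for any deterministic $c$‑bit protocol $\Pi$ one has $\bigl|\E_{P}[F(x,y)(-1)^{\Pi(x,y)}]\bigr|\le 2^{c}\cdot\mathsf{disc}_{P}(F)$, where $\mathsf{disc}_{P}(F)=\max_{u\colon X\to[-1,1],\,v\colon Y\to[-1,1]}\bigl|\sum_{x,y}P(x,y)F(x,y)u(x)v(y)\bigr|$. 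Hence the hypothesis ``$\CC^{P}_{1/2-w/2}(f)\ge C$ is proved by the discrepancy method'' unpacks to the statement $\mathsf{disc}_{P}(F)\le w\cdot 2^{-C}$, and by a standard averaging step it suffices to work with deterministic protocols when invoking this bound.

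\emph{Tensorization.} The first step is to pass to the dual norm $\gamma_2^{*}$, which (i) is genuinely multiplicative under tensor products, $\gamma_2^{*}(M\otimes N)=\gamma_2^{*}(M)\gamma_2^{*}(N)$, (ii) commutes with Hadamard products along tensor factors, $(M^{\otimes m})\circ(P^{\otimes m})=(M\circ P)^{\otimes m}$, and (iii) is within a constant factor of discrepancy, $\mathsf{disc}_{P}(M)\le\gamma_2^{*}(M\circ P)\le 8\,\mathsf{disc}_{P}(M)$ (this is the quantitative content of Theorem~$22$ of \cite{lee2008direct}). Applying these to the $\pm1$ XOR matrix $F^{\otimes m}((x_i),(y_i))=\prod_{i}F(x_i,y_i)=(-1)^{f^{\oplus m}}$ over any $m$ of the coordinates yields
\[
\mathsf{disc}_{P^{\otimes m}}\bigl(F^{\otimes m}\bigr)\le\gamma_2^{*}\bigl((F\circ P)^{\otimes m}\bigr)=\gamma_2^{*}(F\circ P)^{m}\le\bigl(8\,\mathsf{disc}_{P}(F)\bigr)^{m}\le\bigl((8w)\,2^{-C}\bigr)^{m}.
\]

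\emph{Vector‑prediction to XOR.} Now let $\Pi$ be a one‑way protocol of cost $b=kC/3$ (with shared and/or private randomness) outputting the vector $f^{(k)}=(f(x_i,y_i))_{i\in[k]}$ correctly with probability $p$ over $P^{\otimes k}$. Following the proof of Proposition~$1.4$ of \cite{viola2008norms}, I would draw from the shared randomness a subset $I\subseteq[k]$ with each coordinate included independently with probability $\tau$, run $\Pi$ to obtain a guess $\widehat b$, and output $\bigoplus_{i\in I}\widehat b_i$ as a prediction for $f^{\oplus I}=\bigoplus_{i\in I}f(x_i,y_i)$. Writing $E=\widehat b\oplus b$ for the error vector, this predictor's advantage over the inputs, $\Pi$'s coins, and $I$ equals $\E\bigl[(1-2\tau)^{|E|}\bigr]$, which — because $\tau<1/2$ makes $(1-2\tau)^{|E|}$ nonnegative and equal to $1$ exactly when $E=0$ — is at least $\Pr[E=0]\ge p$. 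On the other hand, $(-1)^{f^{\oplus I}}$ depends only on the coordinates in $I$, so its discrepancy under $P^{\otimes k}$ equals $\mathsf{disc}_{P^{\otimes|I|}}(F^{\otimes|I|})\le((8w)2^{-C})^{|I|}$ by the previous step, and the discrepancy bound (after fixing all coins) caps the advantage by $2^{b}((8w)2^{-C})^{|I|}$. The loss incurred by forgetting the precise value of $|I|$ is accounted for exactly by the probability that $\widehat b$ lies within Hamming distance $\tau k$ of $b$ by chance, which for $f$ balanced under $P$ is at most a Hamming‑ball volume, $2^{-k(1-H_b(\tau))}$; restricting to $|I|\gtrsim\tau k$ and combining, the advantage is at most $2^{b}((8w)2^{-C})^{\tau k}+2^{-k(1-H_b(\tau))}$. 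Plugging in $b=kC/3$ gives $2^{kC/3}(8w)^{\tau k}2^{-\tau kC}=(8w)^{\tau k}2^{kC(1/3-\tau)}\le(8w)^{\tau k}$ for $\tau\ge 1/3$, and the regime $\tau<1/3$ follows from the case $\tau=1/3$ (either $(8w)^{\tau k}\ge(8w)^{k/3}\ge p$ trivially if $8w\ge1$, or, if $8w<1$, one checks $(8w)^{k/3}\le(8w)^{\tau k}$ and $2^{-k(1-H_b(1/3))}\le 2^{-k(1-H_b(\tau))}$). Altogether $p\le(8w)^{\tau k}+2^{-k(1-H_b(\tau))}$, which is Lemma~\ref{le:strong_dp}.

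\emph{Main obstacle.} The conceptually load‑bearing step is the vector‑to‑XOR reduction and, with it, the precise routing of constants: keeping the protocol cost at $b$ under the random‑subset construction, correctly passing the factor $8$ through the $\gamma_2^{*}$‑versus‑discrepancy comparison, and — most subtly — identifying the exact origin of the additive $2^{-k(1-H_b(\tau))}$ floor (it is the probability that the guessed vector is accidentally within Hamming distance $\tau k$ of the truth, the ``missing'' term that, as the footnote to Lemma~\ref{le:strong_dp} observes, is needed since one can always guess $f^{(k)}$ with probability $2^{-k}$). Steps establishing $\mathsf{disc}_{P}(F)\le w2^{-C}$ and its tensorization are routine once $\gamma_2^{*}$ is used, the latter being exactly what Theorem~$22$ of \cite{lee2008direct} supplies; the bookkeeping in the reduction is why the statement is ``obtained by combining'' the two cited results rather than appearing verbatim.
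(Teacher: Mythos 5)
Your proposal assembles exactly the two ingredients the paper's footnote points to (the $\gamma_2^*$-tensorization of discrepancy from Theorem~$22$ of \cite{lee2008direct}, plus the XOR-to-direct-product reduction behind Proposition~$1.4$ of \cite{viola2008norms}); the tensorization half and the unpacking of ``proved by the discrepancy method'' as $\mathsf{disc}_P(F)\le w\,2^{-C}$ are fine. The gap is in the reduction step. You sample $I$ by including each coordinate independently with probability $\tau$ and then ``restrict to $|I|\gtrsim\tau k$,'' charging the complementary event to $2^{-k(1-H_b(\tau))}$. But $\E[|I|]=\tau k$, so $\Pr[|I|<\tau k]\approx 1/2$; worse, already $\Pr[|I|=0]=(1-\tau)^k$, which for small $\tau$ vastly exceeds $2^{-k(1-H_b(\tau))}$. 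Your gloss of the additive term as ``the probability that $\widehat b$ is accidentally within Hamming distance $\tau k$ of $b$'' does not correspond to any event in your decomposition, so as written the argument only yields something like $2^{b}((8w)2^{-C})^{\tau k}+\tfrac12$, which is useless.

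The correct execution of the Viola--Wigderson step uses \emph{uniform} subsets: expand the success indicator exactly as
\[
\1[E=0]\;=\;\prod_{i\in[k]}\frac{1+(-1)^{E_i}}{2}\;=\;2^{-k}\sum_{S\subseteq[k]}\ \prod_{i\in S}(-1)^{E_i},
\]
take expectations, bound each term with $|S|\ge\tau k$ by the cost of the derived XOR-protocol times the tensorized discrepancy, i.e.\ by $2^{kC/3}\bigl((8w)2^{-C}\bigr)^{|S|}$, and bound the $\sum_{j<\tau k}\binom{k}{j}\le 2^{kH_b(\tau)}$ remaining terms trivially by $1$. This is where $2^{-k(1-H_b(\tau))}$ actually comes from: it is a count of small subsets (numerically a Hamming-ball volume, but not the probability of a near-miss of the guess). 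With that fix your arithmetic for $\tau\ge 1/3$ goes through. Your patch for $\tau<1/3$ is also backwards: $H_b$ is increasing on $(0,1/2)$, so $2^{-k(1-H_b(1/3))}\ge 2^{-k(1-H_b(\tau))}$ for $\tau<1/3$, not $\le$; in that regime the argument leaves an uncancelled factor $2^{kC(1/3-\tau)}$ in the first term (the paper only invokes the lemma with a fixed constant, so this looseness in the stated range of $\tau$ is immaterial to its use, but your claimed derivation of it is not valid).
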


Let $\xi_{\epsilon}$ be the distribution that is obtained by projecting $\nu_{\epsilon}$ on one of the $k$ blocks and marginalizing over the remaining $k-1$ blocks. Namely, $\nu_{\epsilon} = \xi_{\epsilon}^{\otimes k}$. Define the ``base function'' $G((\tilde{S},\tilde{x}),(\tilde{T},\tilde{y}))$ by $G((\tilde{S},\tilde{x}),(\tilde{T},\tilde{y})) = \langle \tilde{T}, \tilde{x} \oplus \tilde{y} \rangle$ for every $\tilde{S}, \tilde{T} \subseteq [n]$ and $\tilde{x}, \tilde{y} \in \{0,1\}^n$. We will use the following lower-bound on the distributional complexity of $G$ over $\xi_{\epsilon}$ that was proved in \cite{ghazi2015communication} using the discrepancy method.
\begin{lem}[\cite{ghazi2015communication}]\label{le:soda_disc_up_bd}
	For any $w = 2^{-o(\sqrt{\delta' n})}$, we have that $\CC^{\xi_{\epsilon}}_{1/2-w/2}(G) \geq \Omega(\epsilon \cdot n)$, and it is proved using the discrepancy method.
\end{lem}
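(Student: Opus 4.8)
The plan is to prove the lemma via the discrepancy method, which is exactly what the later invocation of Lemma~\ref{le:strong_dp} requires. Recall that for a distribution $\mu$ on $\mathcal{X}\times\mathcal{Y}$ and a Boolean function $h$, the discrepancy is $\mathsf{disc}_{\mu}(h)=\max_{A\subseteq\mathcal{X},\,B\subseteq\mathcal{Y}}\big|\sum_{(u,v)\in A\times B}\mu(u,v)(-1)^{h(u,v)}\big|$, and that any $c$-bit two-way protocol with error at most $1/2-w/2$ under $\mu$ has advantage at least $w$, hence $w\le 2^{c}\,\mathsf{disc}_{\mu}(h)$, i.e. $\CC^{\mu}_{1/2-w/2}(h)\ge\log_{2}(w/\mathsf{disc}_{\mu}(h))$ (the same bound holds after averaging over public coins). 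So it suffices to show $\mathsf{disc}_{\xi_{\epsilon}}(G)\le 2^{-\Omega(\epsilon n)}$; since $\epsilon n=\sqrt{\delta' n}$, this gives $\CC^{\xi_{\epsilon}}_{1/2-w/2}(G)\ge\Omega(\sqrt{\delta' n})-\log_{2}(1/w)=\Omega(\epsilon n)$ for every $w=2^{-o(\sqrt{\delta' n})}$, and it is obtained through the discrepancy method as needed.

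To bound $\mathsf{disc}_{\xi_{\epsilon}}(G)$ I would use the spectral estimate $\mathsf{disc}_{\xi_{\epsilon}}(G)\le\sqrt{|\mathcal{X}|\cdot|\mathcal{Y}|}\cdot\|M\|_{\mathrm{op}}$, where $\mathcal{X}=\mathcal{Y}=\{0,1\}^{2n}$ and $M$ is the $2^{2n}\times 2^{2n}$ matrix with entries $M[(\tilde S,\tilde x),(\tilde T,\tilde y)]=\xi_{\epsilon}(\tilde S,\tilde x,\tilde T,\tilde y)\,(-1)^{\langle\tilde T,\tilde x\oplus\tilde y\rangle}$. The key observation is that $M$ factors as a tensor product over the $n$ coordinates: both $\xi_{\epsilon}$ and the sign $(-1)^{\langle\tilde T,\tilde x\oplus\tilde y\rangle}=\prod_{j}(-1)^{\tilde T_{j}(\tilde x_{j}\oplus\tilde y_{j})}$ are products over $j\in[n]$, and under $\xi_{\epsilon}$ the block $(\tilde S_{j},\tilde T_{j})$ is independent of $(\tilde x_{j},\tilde y_{j})$ with everything independent across $j$. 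Grouping Alice's bits as $(\tilde S_{j},\tilde x_{j})_{j}$ and Bob's as $(\tilde T_{j},\tilde y_{j})_{j}$, we get $M=\bigotimes_{j=1}^{n}M_{j}$, where $M_{j}$ is the $4\times 4$ matrix $M_{j}[(s,x),(t,y)]=\tfrac14\,w^{ST}(s,t)\,w^{XY}(x,y)\,(-1)^{t(x\oplus y)}$ with $w^{ST}(s,t)=1-\epsilon$ if $s=t$ and $\epsilon$ otherwise, and $w^{XY}$ analogous with noise rate $\eta=2\epsilon-2\epsilon^{2}$ (which satisfies $1-2\eta=(1-2\epsilon)^{2}$, simplifying the algebra). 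Hence $\|M\|_{\mathrm{op}}=\prod_{j}\|M_{j}\|_{\mathrm{op}}$ and $\mathsf{disc}_{\xi_{\epsilon}}(G)\le 2^{2n}\prod_{j}\|M_{j}\|_{\mathrm{op}}=\prod_{j=1}^{n}\big(4\|M_{j}\|_{\mathrm{op}}\big)$.

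It remains to show $4\|M_{j}\|_{\mathrm{op}}\le 1-\Omega(\epsilon)$, a finite linear-algebra computation. Writing $M_{j}=\tfrac14\sum_{t\in\{0,1\}}(Re_{t})e_{t}^{\top}\otimes Q_{t}$, where $R=\begin{pmatrix}1-\epsilon&\epsilon\\\epsilon&1-\epsilon\end{pmatrix}$ acts on the $(s;t)$-indices and $Q_{0}=\begin{pmatrix}1-\eta&\eta\\\eta&1-\eta\end{pmatrix}$, $Q_{1}=\begin{pmatrix}1-\eta&-\eta\\-\eta&1-\eta\end{pmatrix}$ act on the $(x;y)$-indices, one gets $M_{j}M_{j}^{\top}=\tfrac{1}{16}\sum_{t}(Re_{t})(Re_{t})^{\top}\otimes Q_{t}Q_{t}^{\top}$. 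Since $Q_{0}Q_{0}^{\top}$ and $Q_{1}Q_{1}^{\top}$ are simultaneously diagonalized by the basis $\{(1,1),(1,-1)\}$, this block-diagonalizes into two $2\times 2$ blocks that are conjugate (via the coordinate swap $Re_{0}\leftrightarrow Re_{1}$), so $\|M_{j}\|_{\mathrm{op}}^{2}=\tfrac{1}{16}\lambda_{+}$, where $\lambda_{+}$ is the larger eigenvalue of $(Re_{0})(Re_{0})^{\top}+(1-2\eta)^{2}(Re_{1})(Re_{1})^{\top}$. Evaluating this $2\times 2$ eigenvalue (its trace is $(1+(1-2\epsilon)^{4})(1-2\epsilon+2\epsilon^{2})$ and its determinant is $(1-2\epsilon)^{6}$) and expanding for small $\epsilon$ gives $\lambda_{+}=1-\Theta(\epsilon)$, hence $4\|M_{j}\|_{\mathrm{op}}=\sqrt{\lambda_{+}}\le 1-\Omega(\epsilon)$. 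Plugging back, $\mathsf{disc}_{\xi_{\epsilon}}(G)\le(1-\Omega(\epsilon))^{n}\le 2^{-\Omega(\epsilon n)}=2^{-\Omega(\sqrt{\delta' n})}$, and the lemma follows from the discrepancy-method bound above.

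The main obstacle is conceptual: the ``obvious'' move of conditioning on $(\tilde S,\tilde T)$ (or on $(\tilde x,\tilde y)$) and bounding discrepancy per conditioning is hopeless, since conditioned on $\tilde T$ both parties effectively know $\tilde T$ and the residual sign matrix $(-1)^{\langle\tilde T,\tilde x\rangle}(-1)^{\langle\tilde T,\tilde y\rangle}$ is rank one, with constant discrepancy. The hardness lives precisely in the coupling between the $\tilde S\leftrightarrow\tilde T$ correlation and the $\tilde x\leftrightarrow\tilde y$ correlation, which the global coordinate-wise tensor factorization of $M$ captures but no single conditioning does. The secondary technical point is that the per-coordinate computation must be carried out carefully enough to confirm that $\lambda_{+}$ falls below $1$ at rate $\Theta(\epsilon)$ rather than $\Theta(\epsilon^{2})$ — the latter would yield only $\mathsf{disc}\le 2^{-\Omega(\delta')}=\Omega(1)$ and be useless.
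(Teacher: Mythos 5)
Your reconstruction is correct. Note that this paper does not actually prove Lemma~\ref{le:soda_disc_up_bd}: it imports it verbatim from \cite{ghazi2015communication}, and the statement itself certifies ``proved using the discrepancy method,'' so the comparison is against the cited source rather than against an in-paper argument. Your implementation is the standard and essentially the intended one: reduce to bounding $\mathsf{disc}_{\xi_\epsilon}(G)$, exploit the coordinate-wise tensor factorization of the signed density matrix $M = \bigotimes_j M_j$ (which is available precisely because $\xi_\epsilon$ is a product over coordinates and $(-1)^{\langle \tilde T, \tilde x \oplus \tilde y\rangle}$ multiplies over coordinates), and bound $\mathsf{disc} \le 2^{2n}\|M\|_{\mathrm{op}} = \prod_j 4\|M_j\|_{\mathrm{op}}$. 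The $4\times 4$ algebra checks out: the decomposition $M_j = \tfrac14\sum_t (Re_t)e_t^\top \otimes Q_t$, the block-diagonalization of $M_jM_j^\top$ in the $\{(1,1),(1,-1)\}$ basis, the trace $(1+(1-2\epsilon)^4)(1-2\epsilon+2\epsilon^2)$ and determinant $(1-2\epsilon)^6$ are all right, and expanding with $u=1-2\epsilon$ gives $T^2-4D = 20(u-1)^2 + O((u-1)^3)$ and hence $\lambda_+ = 1 - (6-2\sqrt 5)\epsilon + O(\epsilon^2) = 1-\Theta(\epsilon)$, so the per-coordinate contraction is indeed first-order in $\epsilon$ rather than second-order — which, as you rightly flag, is the step one must verify carefully, since a $\Theta(\epsilon^2)$ drop would give a vacuous bound. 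Your conceptual remark that conditioning on $\tilde T$ collapses the sign matrix to rank one and so cannot see the hardness is also accurate and explains why the global tensor view is the right one.
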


Combining Lemma~\ref{le:strong_dp} and Lemma~\ref{le:soda_disc_up_bd} implies the next corollary.
\begin{corollary}\label{cor:sdp_disc_comb}
	For every positive constant $\gamma$, any deterministic protocol computing $G^{(k)}$ correctly with probability at least $(0.5+\gamma)^k$ with respect to the distribution $\xi_{\epsilon}^{\otimes k} = \nu_{\epsilon}$ should be communicating $\Omega(k \cdot \epsilon \cdot n)$ bits.
\end{corollary}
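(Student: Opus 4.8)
The plan is to feed the single-copy discrepancy lower bound of Lemma~\ref{le:soda_disc_up_bd} into the strong direct product theorem of Lemma~\ref{le:strong_dp}, applied to the base function $G$ and base distribution $\xi_\epsilon$ (recall that $G^{(k)}$ denotes the vector of outputs of $G$ on $k$ independent instances and that $\nu_\epsilon=\xi_\epsilon^{\otimes k}$). Concretely, given the positive constant $\gamma$, I would first fix two auxiliary constants \emph{in a specific order}. We may assume $\gamma<1/2$, since otherwise $(1/2+\gamma)^k>1$ for every $k\ge 1$ and the statement is vacuous. Using that the binary entropy satisfies $H_b(\tau)\to 0$ as $\tau\to 0^+$, fix a constant $\tau\in(0,1/2)$ with $2^{-(1-H_b(\tau))}<1/2+\gamma/2$; then, with $\tau$ fixed, fix a constant $w>0$ small enough that $(8w)^{\tau}<\gamma/2$. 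Crucially $\tau$ and $w$ depend only on $\gamma$, so $w$ is a fixed constant and in particular satisfies $w=2^{-o(\sqrt{\delta' n})}$; hence Lemma~\ref{le:soda_disc_up_bd} applies and yields $\CC^{\xi_\epsilon}_{1/2-w/2}(G)\ge C$ for some $C=\Omega(\epsilon n)$, with the bound established via the discrepancy method.

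Next I would invoke Lemma~\ref{le:strong_dp} with $f=G$, $P=\xi_\epsilon$, this $C$, and this $\tau$: the success probability under $\xi_\epsilon^{\otimes k}=\nu_\epsilon$ of any protocol of communication cost less than $kC/3$ computing $G^{(k)}$ is at most $(8w)^{\tau k}+2^{-k(1-H_b(\tau))}<(\gamma/2)^k+(1/2+\gamma/2)^k$. It then remains to observe that this is strictly below $(1/2+\gamma)^k$ for every $k\ge 1$: factoring $(1/2+\gamma)^k-(1/2+\gamma/2)^k=(\gamma/2)\sum_{i=0}^{k-1}(1/2+\gamma)^{k-1-i}(1/2+\gamma/2)^{i}\ge(\gamma/2)(1/2)^{k-1}$, which dominates $(\gamma/2)^k$, and hence also $(8w)^{\tau k}$, since $\gamma/2<1/2$. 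Taking the contrapositive gives exactly the corollary: any deterministic protocol computing $G^{(k)}$ with success probability at least $(1/2+\gamma)^k$ over $\nu_\epsilon$ must communicate $\ge kC/3=\Omega(k\epsilon n)$ bits. (Note Lemma~\ref{le:strong_dp} bounds general two-way protocols, and discrepancy lower bounds apply to randomized protocols, so the deterministic one-way or two-way conclusion we want follows a fortiori.)

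I do not expect a genuine obstacle: given the two quoted lemmas, the corollary is essentially a parameter-bookkeeping combination. The one point requiring care is the ordering and feasibility of the constant choices — one must notice that the error term $2^{-k(1-H_b(\tau))}$ can only be pushed down to (just above) $(1/2)^k$ and never below, but that this is harmless because the target success rate $(1/2+\gamma)^k$ has base strictly exceeding $1/2$, so a small enough $\tau$ leaves slack, and then $w$ is chosen \emph{after} $\tau$ so that the remaining error term $(8w)^{\tau k}$ is negligible against that slack.
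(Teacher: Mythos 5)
The proposal is correct and takes exactly the route the paper intends: the paper's ``proof'' is the one sentence ``Combining Lemma~\ref{le:strong_dp} and Lemma~\ref{le:soda_disc_up_bd} implies the next corollary,'' and your argument simply supplies the parameter bookkeeping that this elides. Your ordering of the constant choices (fix $\tau$ first so that $2^{-(1-H_b(\tau))}$ sits below $1/2+\gamma/2$, then fix $w$ against that $\tau$, and observe that a constant $w$ satisfies $w=2^{-o(\sqrt{\delta' n})}$) is precisely the point that needs care, and you handle it correctly.
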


We define the Hamming distance function $\HD_k:\{0,1\}^k \times \{0,1\}^k \to \{0,1\}$ as follows. For all $x,y \in \{0,1\}^k$, $\HD_k(x,y) = 1$ if the Hamming distance between $x$ and $y$ is at least $\floor{k/2}$ and $\HD_k(x,y) = 0$ otherwise. Let $\mathcal{U}_{2 k}$ denote the uniform distribution on $\{0,1\}^{2 k}$. 
\begin{lem}[\cite{woodruff2007efficient}]\label{le:woodruff_lb}
	For every sufficiently small $\epsilon > 0$, it holds that $\owCC^{\mathcal{U}_{2 k}}_{\epsilon}(\HD_k) = \Omega(k)$.
\end{lem}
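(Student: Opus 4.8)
The plan is to reprove this bound of~\cite{woodruff2007efficient} as follows. By the easy direction of Yao's principle it suffices to lower bound \emph{deterministic} one-way protocols that are correct with probability at least $1-\epsilon$ over $(x,y)\sim\mathcal{U}_{2k}$. Fix such a protocol whose message has length $c$; Alice's message partitions $\{0,1\}^{k}$ into $2^{c}$ cells $\{A_m\}$, and on seeing message $m$ and input $y$ the best possible Bob outputs the likelier value of $\HD_k(x,y)$ for $x$ uniform in $A_m$. Hence the error of \emph{any} such protocol is at least $\Ex_{y}\sum_{m}(|A_m|/2^{k})\cdot\min(p_{m,y},1-p_{m,y})$, where $p_{m,y}:=\Pr_{x\in A_m}[\HD_k(x,y)=1]$, and it is enough to show this quantity is $\Omega(1)$ whenever $c=o(k)$.

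\textbf{Passing to halfspaces.} Writing $\chi_i=(-1)^{x_i}$ and $\psi_i=(-1)^{y_i}$, one has $\HD_k(x,y)=1$ iff $\langle\chi,\psi\rangle\le\tau_k$ for a fixed threshold $\tau_k\in\{0,1\}$ (depending on the parity of $k$); thus $p_{m,y}=\mu_{A_m}(H_\psi)$, where $\mu_A$ is the uniform distribution on $A\subseteq\{\pm1\}^{k}$ and $H_\psi=\{\chi:\langle\chi,\psi\rangle\le\tau_k\}$ is a halfspace through (essentially) the origin. For \emph{any} fixed $\chi$, $\langle\chi,\psi\rangle$ is a symmetric signed sum of $k$ i.i.d.\ signs, so $\Pr_\psi[\langle\chi,\psi\rangle\le\tau_k]=\tfrac12+O(1/\sqrt k)$; therefore $\Ex_\psi[\mu_A(H_\psi)]=\tfrac12+O(1/\sqrt k)$ regardless of $A$, and what remains is purely to control the variance of $\mu_A(H_\psi)$ over $\psi$ for every ``dense'' cell.

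\textbf{The variance estimate (the crux).} Discard all cells with $|A_m|<2^{(1-\delta)k}$, where $\delta:=c/k+k^{-1/2}$; then $\delta=o(1)$, $\delta k\to\infty$, and the discarded cells carry total weight at most $2^{c-\delta k}=o(1)$. For a dense cell $A$, expand
\[
\Var_\psi\!\bigl[\mu_A(H_\psi)\bigr]=\Ex_{x,x'\sim\mu_A}\bigl[\Cov_\psi(\mathbf{1}[\langle x,\psi\rangle\le\tau_k],\ \mathbf{1}[\langle x',\psi\rangle\le\tau_k])\bigr].
\]
A quantitative form of the statement ``a random halfspace through the origin splits two directions according to their angle'' --- provable from Sheppard's formula (Fact~\ref{fact:sheppard}) and a one- and two-dimensional Berry--Esseen estimate (Theorem~\ref{th:2d_berry_ess_bin_not_id}), applied to the decomposition $\langle x,\psi\rangle=U+V$, $\langle x',\psi\rangle=U-V$ over the coordinates where $x,x'$ agree/disagree --- yields, uniformly over $x,x'$, that $\Cov_\psi(\cdots)=\tfrac{1}{2\pi}\arcsin\rho_{x,x'}+O(1/\sqrt k)$ with $\rho_{x,x'}:=\langle x,x'\rangle/k$. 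Since $|\arcsin\rho|\le\tfrac{\pi}{2}|\rho|$ on $[-1,1]$, this gives $\Var_\psi[\mu_A(H_\psi)]\le\tfrac14\Ex_{x,x'}[|\rho_{x,x'}|]+O(1/\sqrt k)\le\tfrac14\sqrt{\Ex_{x,x'}[\rho_{x,x'}^2]}+O(1/\sqrt k)$. Finally $\Ex_{x,x'}[\rho_{x,x'}^2]=\mathrm{tr}(M^2)/k^2\le\lambda_{\max}(M)/k$ with $M:=\Ex_{x\sim\mu_A}[xx^\top]$, and $\lambda_{\max}(M)=O(\delta k)$ for dense $A$: if some unit vector $u$ had $\Ex_{x\sim\mu_A}[\langle x,u\rangle^2]=\beta k$, then by sub-Gaussianity of $\langle x,u\rangle$ under the uniform measure on the cube, $A$ would have to place most of its mass where $|\langle x,u\rangle|\gtrsim\sqrt{\beta k}$, forcing $|A|\le 2^{k}e^{-\Omega(\beta k)}$ and hence $\beta=O(\delta)$. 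Combining, $\Var_\psi[\mu_A(H_\psi)]=O(\sqrt\delta)=o(1)$.

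\textbf{Conclusion, and the hard part.} By Chebyshev, for every dense cell $A_m$ one has $p_{m,y}\in[\tfrac14,\tfrac34]$ for a $1-o(1)$ fraction of $y$, where $\min(p_{m,y},1-p_{m,y})\ge\tfrac14$; since the non-dense cells have total weight $o(1)$, the error lower bound from the first paragraph is at least $\tfrac14-o(1)>\epsilon$ for every sufficiently small constant $\epsilon$ and all large $k$, contradicting $c=o(k)$; hence $\owCC^{\mathcal{U}_{2k}}_{\epsilon}(\HD_k)=\Omega(k)$. I expect the technical heart to be the variance estimate --- specifically the two ingredients that a dense subset of the cube has small average pairwise correlation (the bound on $\lambda_{\max}(M)$) and that a uniformly random halfspace through the origin correlates with two fixed directions essentially like the arcsine of their normalized inner product, with an error that is $O(1/\sqrt k)$ \emph{uniformly} --- while the Yao reduction, the computation of $\Ex_\psi[\mu_A(H_\psi)]$, and the Chebyshev step are routine. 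An alternative route is to invoke Woodruff's original argument, a reduction from \textsc{Index} amplified through a good error-correcting code.
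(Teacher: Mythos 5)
The paper does not prove this lemma; it simply cites Woodruff (2007). Your proposal therefore cannot be ``the same'' as the paper's proof, and indeed it is quite different from Woodruff's original argument, which is a reduction from (augmented) \textsc{Index} routed through an error-correcting code. You reprove the bound directly via a rectangle/anticoncentration analysis: partition the cube by Alice's message, show every large cell has $\Ex_\psi\mu_A(H_\psi)\approx 1/2$ and $\Var_\psi\mu_A(H_\psi)=o(1)$, and conclude by Chebyshev. The high-level structure (Yao, dense/sparse cell split with $\delta=c/k+k^{-1/2}$, $\mathrm{tr}(M^2)\le\lambda_{\max}(M)\cdot k$, the sub-Gaussian operator-norm bound $\lambda_{\max}(M)=O(\delta k)$ for dense cells, and the reverse-Markov patch needed to make ``$A$ places most of its mass where $|\langle x,u\rangle|$ is large'' literally true) is sound and gives a genuinely self-contained, information-free route to the $\Omega(k)$ bound; it is essentially the ``dense sets of the cube look Gaussian under random halfspaces'' argument and would be a reasonable thing to include if the paper wanted to be self-contained.

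However, there is a real gap in the one step you flag as the technical heart. You assert that the covariance estimate $\Cov_\psi\bigl(\mathbf{1}[\langle x,\psi\rangle\le\tau_k],\,\mathbf{1}[\langle x',\psi\rangle\le\tau_k]\bigr)=\tfrac{1}{2\pi}\arcsin\rho_{x,x'}+O(1/\sqrt k)$ holds \emph{uniformly} over $x,x'$ and is ``provable from \ldots Theorem~\ref{th:2d_berry_ess_bin_not_id}.'' That theorem's error term is $O\bigl(\lambda^{-3/2}k^{-1/2}\bigr)$ with $\lambda=1-|\rho_{x,x'}|$, so it degenerates completely as $|\rho_{x,x'}|\to1$ (pairs $x\approx\pm x'$) and does not deliver a uniform $O(1/\sqrt k)$. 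Your $U\pm V$ decomposition is the right idea for saving this, but as written it is not an application of Theorem~\ref{th:2d_berry_ess_bin_not_id}: you would need a separate one-dimensional Berry--Esseen argument conditioned on $V$, plus a careful comparison of the lattice distribution of $|V|$ to the half-Gaussian, and that is where the work actually is. The cheaper fix is a truncation: for $|\rho_{x,x'}|>1-(\log k)^{-1}$, simply bound the covariance by $1/4$; such pairs contribute at most $O(\delta\log k)=o(1)$ to $\Ex_{x,x'}[\Cov_\psi(\cdot)]$ by Markov applied to your bound $\Ex_{x,x'}[\rho_{x,x'}^2]=O(\delta)$, while for $|\rho_{x,x'}|\le1-(\log k)^{-1}$ the cited theorem's error is $O((\log k)^{3/2}/\sqrt k)=o(1)$. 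With that patch (or a fully worked-out $U\pm V$ argument) the proposal goes through; without it, the central uniformity claim is not justified by the tools you invoke.
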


The next lemma of \cite{jain2003direct} compresses a $1$-way private-coin protocol with external information cost\footnote{The external information cost of a protocol is the amount of information that it reveals about the inputs to an external observer. For a $1$-way private-coin protocol, it is given by $I(X,Y;M)$ where $M$ is the single message sent from Alice to Bob.} $I$ into a $1$-way deterministic protocol with communication cost $O(I)$.
\begin{lem}[Result $1$ of \cite{jain2003direct}]\label{le:jrs_comp_bd_rds}
	Suppose that $\Pi$ is a $1$-way private-coin randomized protocol for $f:\mathcal{X} \times \mathcal{Y} \to \mathcal{Z}$. Let the average error of $\Pi$ under a probability distribution $\mu$ on the inputs $\mathcal{X} \times \mathcal{Y}$ be $\theta$. Let $X,Y$ denote the random variables corresponding to Alice's and Bob's inputs respectively. Let $M$ denote the single message sent by Alice to Bob. Suppose $I(X,Y; M) \le a$. Let $\zeta > 0$. Then, there is another deterministic $1$-way protocol $\Pi'$ with the following properties:
	\begin{enumerate}
		\item The communication cost of $\Pi'$ is at most $\frac{2(a+1)}{\zeta^2} + \frac{2}{\zeta}$ bits.
		\item The distributional error of $\Pi'$ under $\mu$ is at most $\theta + 2\zeta$.
	\end{enumerate}
\end{lem}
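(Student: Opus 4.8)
The plan is to prove this as the one-way message-compression result of Jain, Radhakrishnan and Sen. First I would exploit that $\Pi$ is one-way: the message $M$ is a randomized function of $X$ and Alice's private coins only, so by the chain rule $a \ge I(X,Y;M) \ge I(X;M) = \mathbb{E}_{x\sim\mu_X}\big[D(M_x\,\|\,\bar M)\big]$, where $\mu_X$ is the marginal of $\mu$ on Alice's input, $M_x$ is the conditional law of $M$ given $X=x$, and $\bar M := \mathbb{E}_{x\sim\mu_X}[M_x]$ is the ``average message'' (note $M_x\ll\bar M$). By Markov, $\mu_X$-almost all $x$ are \emph{good}, meaning $D(M_x\|\bar M)\le a/\zeta$; the $\le\zeta$-mass of \emph{bad} $x$ gets charged directly to the error. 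The protocol $\Pi'$ will, using \emph{public} randomness, let Alice transmit to Bob a sample $\hat m$ whose law is close in total variation to $M_x$; Bob then simply runs Bob's part of $\Pi$ on $(\hat m, y)$ and outputs the result. Since $\hat m$ literally lies in the message space of $\Pi$, correctness will reduce to a coupling/total-variation estimate.

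The core is a rejection-sampling step. I would place on the public tape an i.i.d.\ sequence $(Z_1,U_1),(Z_2,U_2),\dots$ with $Z_j\sim\bar M$ and $U_j$ uniform on $[0,1]$, fix a likelihood-ratio threshold $L := (a/\zeta + O(1))/\zeta$ and a trial bound $N := \lceil 2^{L+1}\ln(1/\zeta)\rceil$. For a good $x$, Alice (who knows $x$, hence $M_x$, and has $\bar M$ hard-coded) scans $j=1,\dots,N$ and stops at the first $j$ with $U_j\le \min\big(1,\,M_x(Z_j)/(2^L\bar M(Z_j))\big)$, sending this index (in $\lceil\log_2 N\rceil$ bits); Bob sets $\hat m := Z_j$. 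If no such $j$ exists, or $x$ is bad, Alice sends a distinguished ``fail'' symbol and Bob answers arbitrarily. Two elementary estimates control the damage: (i) an information-spectrum bound — from $\mathbb{E}_{m\sim M_x}\big[\log^+\!\big(M_x(m)/\bar M(m)\big)\big]\le D(M_x\|\bar M)+O(1)$ (itself from $\log t\le(t-1)\log e$ and $\sum_m\bar M(m)\le1$) plus Markov — shows that for the chosen $L$ the ``bad ratio'' set $\{m: M_x(m)>2^L\bar M(m)\}$ has $M_x$-mass at most $\zeta$, so conditioned on acceptance $\hat m$ is within total variation $\zeta$ of $M_x$; (ii) the per-trial acceptance probability is at least $2^{-L}(1-\zeta)\ge 2^{-L-1}$, so all $N$ trials are rejected with probability at most $(1-2^{-L-1})^N\le\zeta$. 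Combining, the law of $\hat m$ (reading ``fail'' as a wrong message) is within $O(\zeta)$ of $M_x$ for every good $x$, and the communication is the worst-case value $\lceil\log_2 N\rceil = L + O(\log\log(1/\zeta)) \le \tfrac{2(a+1)}{\zeta^2}+\tfrac{2}{\zeta}$ after absorbing lower-order terms; a careful accounting (using $\zeta/\text{const}$ inside the sub-steps) yields exactly the stated error $\theta+2\zeta$.

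It then remains to pass from this public-coin protocol $\Pi''$ to a deterministic $\Pi'$. The only randomness left in $\Pi''$ is the public tape $(Z_j,U_j)_j$ (Alice's original private coins have been replaced by it) together with Bob's private coins; fixing all of this to the value minimizing the error averaged over $(x,y)\sim\mu$ gives a deterministic one-way protocol with the same worst-case communication and $\mu$-average error no larger than that of $\Pi''$. For the error, bad $x$ cost $\le\zeta$ up front, while for good $x$ the error changes by at most the total-variation distance between $M_x$ and the law of $\hat m$ (the event ``Bob is wrong on $(x,y)$'' depends on the randomness only through the transmitted message), which is $O(\zeta)$ by the two estimates above.

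The hard part will be the quantitative rejection-sampling analysis with \emph{unbounded} likelihood ratios $M_x(m)/\bar M(m)$: one must simultaneously truncate the ratio at $2^L$ while losing $\le\zeta$ of the $M_x$-mass (one factor of $1/\zeta$, from the Markov step on the information density) and truncate the number of trials at $N$ while failing with probability $\le\zeta$ (a second, multiplicative, factor of $1/\zeta$ inside $\log_2 N$ — this is precisely why the bound is $1/\zeta^2$ rather than $1/\zeta$), all while keeping the communication a \emph{deterministic worst-case} quantity rather than an expectation. The remaining bookkeeping that converts the three $O(\zeta)$ error contributions (bad $x$, ratio truncation, trial truncation) into exactly $2\zeta$ is routine constant-chasing and can follow \cite{jain2003direct}.
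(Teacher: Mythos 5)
The paper offers no proof of this lemma at all: it is quoted verbatim as Result~1 of \cite{jain2003direct} and used as a black box, so the only comparison available is with the cited Jain--Radhakrishnan--Sen argument. Your reconstruction is sound and is essentially the standard route to one-way message compression: reduce $I(X,Y;M)$ to $I(X;M)=\E_x[D(M_x\Vert \bar M)]$ using one-wayness, discard a $\zeta$-fraction of ``bad'' $x$ by Markov, and for good $x$ transmit the index of an accepted sample in a public-coin rejection-sampling scheme against $\bar M$ with the likelihood ratio truncated at $2^L$, then derandomize by averaging. All the quantitative pieces check out: the information-spectrum bound $\E_{m\sim M_x}[\log^+(M_x(m)/\bar M(m))]\le D(M_x\Vert\bar M)+\log e$ is correct, the truncated acceptance measure $\min(\bar M(m),M_x(m)/2^L)$ renormalizes to within total variation $\zeta$ of $M_x$, the per-trial acceptance probability is at least $2^{-L}(1-\zeta)$, and the index length $L+O(\log\log(1/\zeta))$ fits under $\tfrac{2(a+1)}{\zeta^2}+\tfrac{2}{\zeta}$ because the slack term $\tfrac{2}{\zeta^2}$ absorbs the additive lower-order contributions (one does need to split the error budget unevenly --- a full $\zeta$ for the bad-$x$ event and $\zeta/2$ each for the two truncations --- rather than $2\zeta/3$ three ways, or the leading constant in $L$ overshoots $2a/\zeta^2$; this is the only place where your ``routine constant-chasing'' genuinely needs care). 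The original JRS proof is organized around a substate-theorem-style decomposition of $M_x$ relative to $\bar M$ rather than explicit rejection sampling with a trial cutoff, but the two are quantitatively interchangeable here, so your proposal is a correct and self-contained substitute for the citation.
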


We will also use the next lemma.
\begin{lem}\label{le:estimation_k_d}
	Let $(Q,W,B)$ be correlated random variables with $Q \in \mathcal{Q}$, $W \in \mathcal{W}$ and $B \in \{0,1\}^k$. Let $\alpha \in (0,1]$ be any constant. If $I(B;W | Q) \geq \alpha \cdot k$, then there exists a positive constant $\beta$ that only depends on $\alpha$, and a deterministic function $E:\mathcal{Q} \times \mathcal{W} \to \{0,1\}^k$ such that $E(Q,W) = B$ with probability at least $(0.5+\beta)^k$ over the random choice of $(Q,W,B)$.
\end{lem}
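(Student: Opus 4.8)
The plan is to reduce the problem to a coordinate-wise statement: since $I(B;W\mid Q)\geq \alpha k$ and $B\in\{0,1\}^k$, on average over the coordinates the message $W$ (together with $Q$) carries a nontrivial amount of information about each bit $B_j$, and a bit about which one has nontrivial information can be guessed with probability bounded away from $1/2$. I would first use the chain rule together with the fact that conditioning on side information only increases mutual information of the individual bits in the ``right'' direction; more carefully, since the $B_j$ need not be independent, I would instead expand $H(B\mid Q,W) \leq \sum_{j=1}^k H(B_j\mid Q,W)$ by subadditivity of entropy, so that
\begin{equation*}
\sum_{j=1}^k \big(H(B_j) - H(B_j\mid Q,W)\big) \;\geq\; H(B) - H(B\mid Q,W)\; -\;\sum_j\big(H(B_j)-1\big)\,,
\end{equation*}
and bound the leftover terms. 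Actually the cleanest route: let $I_j \triangleq I(B_j; (Q,W))$ (or more precisely $I(B_j;(Q,W)\mid B_{<j})$, using the chain rule $I(B;(Q,W)) = \sum_j I(B_j;(Q,W)\mid B_{<j})$, which telescopes exactly to $I(B;W\mid Q)$ after absorbing $Q$); then $\sum_j I_j \geq \alpha k$, so a constant fraction of the coordinates $j$ have $I_j \geq \alpha/2$, say.

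Next, for each such ``good'' coordinate $j$, I would invoke Fano-type / Pinsker-type reasoning: if $I(B_j;(Q,W)) \geq \alpha/2$ and $B_j\in\{0,1\}$, then there is a deterministic predictor $E_j(Q,W)\in\{0,1\}$ (namely the maximum-likelihood estimate of $B_j$ given $(Q,W)$, after conditioning also on the realized values of $B_{<j}$ so that the chain-rule bookkeeping is legitimate) that equals $B_j$ with probability at least $1/2 + \beta_0$ for some $\beta_0 = \beta_0(\alpha) > 0$; this follows from the standard fact that a binary random variable with $H(B_j\mid Q,W) \leq 1-\alpha/2$ can be predicted from $(Q,W)$ with error at most $h^{-1}$-type bound, i.e. error $\leq \frac12 - \Omega(\alpha)$ by the relation between binary entropy and the probability of the minority outcome. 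For the ``bad'' coordinates (those with small $I_j$) I simply guess, say, a uniform random bit, or the constant most-likely value — each is correct with probability at least $1/2$ independently.

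The final step is to combine the per-coordinate predictors into a single deterministic $E:\mathcal{Q}\times\mathcal{W}\to\{0,1\}^k$. Defining $E(Q,W) = (E_1(Q,W),\dots,E_k(Q,W))$ and using the chain-rule conditioning structure, the probability that $E(Q,W) = B$ factors (along the chain) into a product of per-coordinate success probabilities, each at least $1/2$, with at least $\Omega(k)$ of them at least $1/2+\beta_0$. Hence $\Pr[E(Q,W)=B] \geq (1/2)^{k-\Omega(k)}\cdot(1/2+\beta_0)^{\Omega(k)} \geq (1/2+\beta)^k$ for a suitable $\beta = \beta(\alpha) > 0$, after taking logarithms and checking that the weighted average of $\log(1/2)$ and $\log(1/2+\beta_0)$ over the two classes of coordinates exceeds $\log(1/2+\beta)$ when $\beta$ is small enough.

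The main obstacle I anticipate is the dependence among the bits $B_1,\dots,B_k$: one cannot just say ``each bit is individually predictable'' and multiply, because the events $\{E_j(Q,W)=B_j\}$ are not independent. The fix, as sketched above, is to work with the chain-rule decomposition $I(B;W\mid Q) = \sum_j I(B_j; (Q,W)\mid Q, B_{<j})$ and to let the $j$-th predictor be a function of $(Q,W)$ \emph{and} the previously revealed true bits $B_{<j}$ — i.e. build $E$ greedily coordinate by coordinate, conditioning on the correct prefix. Then a telescoping/martingale argument gives $\Pr[E(Q,W)=B] = \prod_j \Pr[\text{$j$-th prediction correct}\mid \text{prefix correct}]$, and each factor is controlled by the conditional mutual information of that coordinate, which is exactly what the chain rule lower-bounds on average. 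Care is needed to ensure $E$ is ultimately a deterministic function of $(Q,W)$ alone; this is fine because conditioning on ``prefix correct'' just means we define $E_j$ on the event that $B_{<j}$ equals the value $E_{<j}(Q,W)$ already output, so the recursion closes.
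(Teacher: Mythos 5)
Your plan has a genuine gap that you correctly sense but do not actually close, and it lies exactly in the place you flag at the end. The chain rule gives $I(B;W\mid Q)=\sum_j I(B_j;W\mid Q,B_{<j})$, and this controls the \emph{average} of $H(B_j\mid Q,W,B_{<j})$ where the average over the prefix $B_{<j}$ is taken with respect to the \emph{true} joint distribution. But in your greedy construction $E_j$ is evaluated at the prefix $E_{<j}(Q,W)$, and the factor $\Pr[\text{$j$-th prediction correct}\mid\text{prefix correct}]$ is a conditional probability under the distribution of $(Q,W,B)$ \emph{restricted to} the event $\{E_{<j}(Q,W)=B_{<j}\}$ --- a potentially rare event that reweights the distribution. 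Nothing in the chain-rule decomposition says that, on this restricted distribution, the conditional entropy of $B_j$ is below $1-\Omega(\alpha)$ on a constant fraction of coordinates; the mutual-information budget could in principle be spent entirely on prefixes that the greedy predictor never reaches. Indeed, the greedy (coordinate-by-coordinate MAP) estimator can be exponentially worse than the joint MAP estimator on general distributions, so an argument whose engine is ``each greedy factor exceeds $1/2+\beta_0$ often'' needs to derive that fact under the restricted distribution, and your writeup does not supply that step.

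The paper sidesteps this entirely and is considerably simpler: it takes $E(q,w)$ to be the joint MAP estimate $\argmax_{\hat B}\Pr[B=\hat B\mid Q=q,W=w]$, observes $H(B\mid Q,W)\le(1-\alpha)k$, applies Markov to find a constant-probability set of pairs $(q,w)$ with $H(B\mid Q=q,W=w)\le(1-\alpha/10)k$, and then uses that min-entropy lower-bounds Shannon entropy to conclude $\max_{\hat B}\Pr[B=\hat B\mid q,w]\ge 2^{-(1-\alpha/10)k}$ on that set, giving the bound. No per-coordinate decomposition is needed, and no chain-rule bookkeeping. (An even cleaner variant of the same idea: by convexity of $x\mapsto 2^{-x}$, $\Pr[E(Q,W)=B]\ge\Ex_{(q,w)}2^{-H(B\mid q,w)}\ge 2^{-H(B\mid Q,W)}\ge 2^{-(1-\alpha)k}$, which already equals $(0.5+\beta)^k$ for $\beta=(2^{\alpha}-1)/2$.) If you want to rescue your approach, you would need to argue directly about the greedy success probability under the conditioned distribution, or simply switch to the joint MAP estimator as the paper does, at which point the coordinate decomposition becomes unnecessary.
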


\begin{proof}[Proof of Lemma~\ref{le:estimation_k_d}]
	Consider the deterministic function $E:\mathcal{Q} \times \mathcal{W} \to \{0,1\}^k$ defined as follows. For each $(q,w) \in \mathcal{Q} \times \mathcal{W}$, $E(q,w)$ is set to an arbitrary element of the set
	\begin{equation*}
	\argmax_{\hat{B} \in \{0,1\}^k}\Pr[B = \hat{B} | Q=q, W=w].
	\end{equation*}
	
	We now argue that $E(Q,W) = B$ with probability at least $(0.5+\beta)^k$ over the randomness of $(Q,W,B)$, where $\beta$ is a positive constant that only depends on $\alpha$. Since $I(B;W | Q) \geq \alpha \cdot k$, we have that
	\begin{align*}
	H(B | Q,W) &= H(B|Q) - I(B;W|Q)\\ 
	&\le H(B|Q)-\alpha \cdot k\\ 
	&\le H(B) - \alpha \cdot k\\ 
	&\le (1-\alpha)\cdot k,
	\end{align*}
	where the third inequality above uses the fact that conditioning does not increase entropy, and the fourth inequality follows from the fact that $B \in \{0,1\}^k$. By an averaging argument, with probability at least $\alpha/10$ over $(q,w) \sim (Q,W)$, it should be the case that
	\begin{equation}\label{eq:gd_q_w_pairs}
	H(B | Q = q,W =w) \le (1-\alpha/10)\cdot k.
	\end{equation}
	Let $\mathcal{G} \subseteq \mathcal{Q} \times \mathcal{W}$ denote the set of all pairs $(q,w)$ that satisfy Equation~(\ref{eq:gd_q_w_pairs}).  We now fix $(q,w) \in \mathcal{G}$, and consider the min-entropy
	\begin{equation*}
	H_{\min}(B | Q = q,W =w) \triangleq \min_{\hat{B} \in \{0,1\}^k} \log_2\big(\frac{1}{\Pr[B = \hat{B} | Q=q, W=w]}\big).
	\end{equation*}
	Using the fact that min-entropy lower-bounds Shannon entropy and Equation~(\ref{eq:gd_q_w_pairs}), we deduce that there exists $\hat{B} \triangleq \hat{B}(q,w) \in \{0,1\}^k$ such that $\Pr[B = \hat{B} | Q=q, W=w] \geq 2^{-(1-\alpha/10)\cdot k}$. Hence, for any fixed $(q,w) \in \mathcal{G}$, conditioned on $(Q= q, W=w)$, the value $E(q,w)$ is equal to $B$ with probability at least $2^{-(1-\alpha/10)\cdot k}$. Since the probability that $(Q,W) \in \mathcal{G}$ is at least $\alpha/10$, we conclude that $E(Q,W)$ is equal to $B$ with probability at least
	\begin{equation*}
	(\alpha/10)\cdot 2^{-(1-\alpha/10)\cdot k} \geq (0.5+\beta)^k,
	\end{equation*}
	for some constant $\beta$ that only depends on $\alpha$.
\end{proof}

\subsection{Simulation Protocol}\label{subsec:sim_prot}
Recall that the distribution $\nu_{\epsilon}$ over the inputs $((S,X),(T,Y))$ to $F$ was defined as $\nu_{\epsilon} \triangleq \mathcal{D}_{\epsilon} \otimes \mu_{2\epsilon -2\epsilon^2}$. In the following simulation lemma, the error probability will be measured w.r.t. distribution $\nu_{\epsilon}$ whereas the information cost will be measured w.r.t. another distribution $\kappa_{\epsilon}$ over $((S,X),(T,Y))$ inputs, which is defined as $\kappa_{\epsilon} \triangleq \mathcal{D}_{\epsilon} \otimes \mu_{\epsilon}$.

\begin{lem}[Simulation Lemma]\label{le:sim_lem}
	Let $\Pi$ be any deterministic $1$-way protocol computing $F$ with error at most $\theta$ on the distribution $\nu_{\epsilon}$ over $((S,X),(T,Y))$ inputs, and let $M \triangleq M(X,S)$ be the corresponding single message that is sent from Alice to Bob under $\Pi$. Then, we have that
	\begin{equation}\label{eq:sim_lem_bd}
	I_{((S,X),(T,Y)) \sim \kappa_{\epsilon}}\bigg((\langle T^{(i)}, X^{(i)} \rangle)_{i \in [k]} ; M(X,S) ~ | ~ Y,T\bigg) \geq \beta \cdot k
	\end{equation}
	for some constant $\beta > 0$ that only depends on $\theta$.
\end{lem}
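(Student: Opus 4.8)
The plan is to derive \eqref{eq:sim_lem_bd} by a reduction from the Hamming-distance problem $\HD_k$ under the uniform distribution $\mathcal{U}_{2k}$, combining the one-way lower bound of Woodruff (Lemma~\ref{le:woodruff_lb}) with the information-cost compression result of Jain--Radhakrishnan--Sen (Lemma~\ref{le:jrs_comp_bd_rds}). The key algebraic observation is that, writing $P \triangleq (\langle T^{(i)}, X^{(i)}\rangle)_{i\in[k]}$ and $Q \triangleq (\langle T^{(i)}, Y^{(i)}\rangle)_{i\in[k]}$, one has $(-1)^{\langle T^{(i)}, X^{(i)}\oplus Y^{(i)}\rangle} = (-1)^{P_i\oplus Q_i}$, so $F((S,X),(T,Y)) = \Sign(k - 2\,d_H(P,Q))$ is, outside an $o(1)$-probability set of boundary inputs, the negation of $\HD_k(P,Q)$. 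Moreover, conditioned on $(Y,T)$ the vector $P$ is a bijective re-encoding of the intermediate tuple $(\langle T^{(i)}, X^{(i)}\oplus Y^{(i)}\rangle)_i$ that feeds the outer $\Maj$, so the left-hand side of \eqref{eq:sim_lem_bd} is exactly the information $M$ reveals to Bob about ``Alice's $\HD_k$-input'' $P$, given Bob's side.

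The one wrinkle is that correctness of $\Pi$ is only promised under $\nu_\epsilon$ (noise $2\epsilon-2\epsilon^2$ between $X$ and $Y$) while \eqref{eq:sim_lem_bd} measures information under $\kappa_\epsilon$ (noise $\epsilon$). I would exploit the two-step decomposition $\mathsf{noisy}_{2\epsilon-2\epsilon^2} = \mathsf{noisy}_\epsilon\circ\mathsf{noisy}_\epsilon$: the reduction first has Alice and Bob jointly produce, from a uniform $\HD_k$ instance $(u,v)$, a pair $(S,X)$ for Alice and $(T,Z)$ for Bob with $((S,X),(T,Z)) \sim \kappa_\epsilon$ and $P = u$; then Bob privately pushes $Z$ one further $\epsilon$-noise step to $Y$, so that $((S,X),(T,Y)) \sim \nu_\epsilon$ and $Q = v$. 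Concretely, $T$ together with the ``shared/independent coordinate'' pattern of the $X$-versus-$Z$ noise (and the shared values) is public; Alice samples $S$ as an $\epsilon$-noisy copy of $T$ and fills her independent $X$-coordinates privately, using one coordinate of each $T^{(i)}$ that is of independent type (which exists with high probability since $\epsilon n = \sqrt{\delta' n}$) to force $\langle T^{(i)}, X^{(i)}\rangle = u_i$; because $u$ is uniform, this forcing does not disturb the marginal law, so $((S,X),(T,Z)) \sim \kappa_\epsilon$. Bob builds $Y$ from $Z$ in the same style, using one more free coordinate of each $T^{(i)}$ to force $\langle T^{(i)}, Y^{(i)}\rangle = v_i$. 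Finally Alice sends $M(X,S)$ and Bob runs $\Pi$ on $((S,X),(T,Y))$ and outputs the complement of its answer.

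Then I would do the accounting. The error of this $\HD_k$-protocol is at most $\theta + o(1)$ --- the error of $\Pi$ on $\nu_\epsilon$, plus $o(1)$ for conditioning on the exceptional ``no free coordinate'' events and the $\HD_k$-versus-$F$ boundary. For its external information cost $I(u,v;M\mid R)$ (with $R$ the public string) I would write $I(P,Q;M\mid R) = I(P;M\mid R) + I(Q;M\mid P,R)$, bound the second term by $o(1)$ since $Q = P \oplus (\text{the } T^{(i)}\text{-parities of fresh } \epsilon\text{-ish noise})$, which are $(\tfrac12 - e^{-\Theta(\sqrt n)})$-biased and independent of $(X,S)$, and bound $I(P;M\mid R) \le I(P;M\mid Z,T) + o(1)$ using that, given the public data, $P$ and $Z$ are conditionally independent and $P$ carries essentially no information about $Z$; matching the reduction's law of $(P,M,Z,T)$ with that under $\kappa_\epsilon$ identifies $I(P;M\mid Z,T)$ with the left-hand side of \eqref{eq:sim_lem_bd}. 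Fixing $R$ to a typical value (at most doubling error and information cost and keeping the $o(1)$ events) yields a genuine one-way private-coin $\HD_k$-protocol of error bounded below $1/2$ and external information cost at most $2\cdot(\text{LHS of \eqref{eq:sim_lem_bd}}) + o(1)$; feeding this to Lemma~\ref{le:jrs_comp_bd_rds} with a small constant $\zeta$ produces a deterministic one-way $\HD_k$-protocol with communication $O(\text{LHS of \eqref{eq:sim_lem_bd}} + 1)$ and error below $1/2$, which Lemma~\ref{le:woodruff_lb} forces to be $\Omega(k)$. Hence the left-hand side of \eqref{eq:sim_lem_bd} is $\Omega(k)$, i.e.\ at least $\beta k$ for a constant $\beta = \beta(\theta) > 0$.

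The step I expect to be the main obstacle is the joint sampling together with its information accounting: one has to generate the correlated quadruple $((S,X),(T,Z))$ across the two players, realize the prescribed inner products $P = u$ and $Q = v$ while certifying that the induced law is exactly $\kappa_\epsilon$ (and that the subsequent noise step yields exactly $\nu_\epsilon$), and then show that the external information cost of the resulting $\HD_k$-protocol does not exceed the conditional mutual information in \eqref{eq:sim_lem_bd} by more than lower-order terms --- in particular controlling $I(Q;M\mid P,R)$ and $I(P;Z\mid R)$, which is where the near-unbiasedness of the relevant $T^{(i)}$-parities (a consequence of $\epsilon = \sqrt{\delta'/n}$ and $|T^{(i)}| = \Theta(n)$) is used.
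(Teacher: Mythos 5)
Your high-level plan is the paper's: reduce to a one-way $\HD_k$ protocol over the uniform distribution whose external information cost is controlled by the intermediate information, then apply Jain--Radhakrishnan--Sen compression and Woodruff's $\Omega(k)$ bound. You also correctly identify that $F$ is (up to the boundary $d_H(P,Q)=\lfloor k/2\rfloor$) the complement of $\HD_k$ applied to $P=(\langle T^{(i)},X^{(i)}\rangle)_i$, $Q=(\langle T^{(i)},Y^{(i)}\rangle)_i$, and that one needs a factorization of the $\nu_\epsilon$-noise through an $\epsilon$-noisy intermediate. The paper, too, fixes $T$ to a typical value by averaging; your ``fix $R$ at the end'' step does the same thing.

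The genuine gap is in your information-cost accounting, and it stems from your choice of what is public. You place the coordinatewise \emph{noise pattern} $N$ (which coordinates of $(X,Z)$ are shared vs.\ independently drawn) and the shared values $W$ in the public string $R$, keeping $Z$ itself private to Bob. Since, given $R$, the pair $(u,M)$ is a function of Alice's private coins only while $Z$ is a function of Bob's, one indeed has $I(u;M\mid R)=I(u;M\mid T,N,Z)$. But the inequality you then want, $I(u;M\mid T,N,Z)\le I(u;M\mid T,Z)+o(1)$, does not follow from ``$u$ and $Z$ are conditionally independent given $R$.'' By the chain rule,
\begin{equation*}
I(u;M\mid T,N,Z)-I(u;M\mid T,Z)=I(u;N\mid M,T,Z)-I(u;N\mid T,Z),
\end{equation*}
and while $I(u;N\mid T,Z)\approx 0$ (the parity over $T^{(i)}$ of the fresh coordinates is near-uniform), there is no reason for $I(u;N\mid M,T,Z)$ to be small: $X$, and hence $M=M(X,S)$, depends on $N$, so conditioning on $M$ can correlate $u$ with $N$. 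Thus your inequality actually goes in the \emph{wrong} direction (conditioning on $N$ can only increase the information about $u$, not decrease it), which breaks the contradiction argument.

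The clean fix --- and the route the paper takes in Protocol~\ref{alg:sim_prot} --- is to make $Z$ itself the shared randomness and keep the noise patterns \emph{private}. Concretely: $Z\in_R\{0,1\}^{kn}$ is public; Alice privately samples $X$ as an $\epsilon$-noisy copy of $Z$ conditioned on $\langle \hat T,X\rangle=U$ (her noise pattern private), samples $S$ privately as an $\epsilon$-noisy copy of $\hat T$; Bob privately samples $Y$ as an $\epsilon$-noisy copy of $Z$ conditioned on $\langle \hat T,Y\rangle=V$. Then $(S,X)$ and $V$ are conditionally independent given $Z$, so
\begin{equation*}
I(U;M\mid V,Z)=I(\langle\hat T,X\rangle;M\mid Z)=\IIC_{\kappa_\epsilon}(\Pi\mid T=\hat T),
\end{equation*}
matching the left-hand side of \eqref{eq:sim_lem_bd} exactly, with no extra $N$-conditioning to control. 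The remaining pieces of your sketch (the TV closeness of the protocol's $(X,Y)$ law to $\mu_{2\epsilon-2\epsilon^2}$, accounted for by the near-unbiasedness of the $T^{(i)}$-parities, and the JRS plus Woodruff step) then go through; the paper formalizes the TV closeness as the Closeness Lemma~\ref{le:close_lem}, and selects a single ``nice'' $\hat T$ satisfying typicality, low conditional error, and low conditional information cost, rather than fixing the public string at the end. Your $X\to Z\to Y$ Markov-chain picture (versus the paper's common-ancestor $Z\to X$, $Z\to Y$) is an equivalent way to realize the $2\epsilon-2\epsilon^2$ noise and is not the issue; the issue is exclusively what you make public.
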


We point out that in Lemma~\ref{le:sim_lem}, the error probability is measured w.r.t. the distribution $\nu_{\epsilon}$ while the information cost is measured w.r.t. the distribution $\kappa_{\epsilon}$.
\begin{defn}\label{def:typical}
	A sequence $\hat{T} \triangleq (\hat{T}^{(i)} \subseteq [n])_{i \in [k]}$ of subsets is said to be \emph{typical} if $|\hat{T}^{(i)}| \in [n/3, 2n/3]$ for all $i \in [k]$.
\end{defn}
Recall that the total variation distance between two distributions $\phi$ and $\psi$ defined on the same finite support $\Omega$ is given by $\Delta_{TV}(\phi,\psi) = \max_{A \subseteq \Omega} |\phi(A) - \psi(A)| = 0.5 \cdot \sum_{x \in \Omega} |\phi(x)-\psi(x)|$. We will use the next lemma.

\begin{lem}[Closeness Lemma]\label{le:close_lem}
	For a given sequence $\hat{T} \triangleq (\hat{T}^{(i)} \subseteq [n])_{i \in [k]}$ of subsets, we define the distribution $\mu_{\hat{T},\epsilon}$ as follows. To sample $(X,Y) \sim \mu_{\hat{T},\epsilon}$, we independently sample $U,V \in_R \{0,1\}^k$, $Z \in_R \{0,1\}^{k \cdot n}$, $X$ to be an $\epsilon$-noisy copy of $Z$ conditioned on $(\langle \hat{T}^{(i)}, X^{(i)} \rangle)_{i \in [k]} = U$, and $Y$ to be an $\epsilon$-noisy copy of $Z$ conditioned on $(\langle \hat{T}^{(i)}, Y^{(i)} \rangle)_{i \in [k]} = V$.
	
	Then, for every fixed typical $\hat{T}$, we have that
	\begin{equation*}
	\Delta_{TV}(\mu_{2\epsilon-2\epsilon^2}, \mu_{\hat{T},\epsilon}) \le k \cdot \exp(- \epsilon \cdot n).
	\end{equation*}
\end{lem}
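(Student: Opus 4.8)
The plan is to exploit that both $\mu_{2\epsilon-2\epsilon^2}$ and $\mu_{\hat T,\epsilon}$ are product distributions over the $k$ blocks, reduce the claim to a single‑block total‑variation estimate, and prove that estimate by coupling the two single‑block distributions through a common uniform ``source'' string and comparing mixing weights.

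First I would record the product structure. Writing a sample as $(x,y)=((x^{(i)})_{i},(y^{(i)})_{i})$, the distribution $\mu_{2\epsilon-2\epsilon^2}$ is by definition $\bigotimes_{i\in[k]}\rho$, where $\rho$ is the length‑$n$ instance of $\mu_{2\epsilon-2\epsilon^2}$ on $\{0,1\}^n\times\{0,1\}^n$. Likewise $\mu_{\hat T,\epsilon}=\bigotimes_{i\in[k]}\sigma^{(i)}$ for single‑block distributions $\sigma^{(i)}$: indeed $U,V$ and $Z$ all have independent blocks, the conditioning event $(\langle\hat T^{(i)},X^{(i)}\rangle)_{i}=U$ is a conjunction of per‑block events, and block $i$ of $(X,Y)$ is a function only of $(Z^{(i)},U^{(i)},V^{(i)})$. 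By sub‑additivity of $\Delta_{TV}$ over products, it therefore suffices to show $\Delta_{TV}(\rho,\sigma^{(i)})\le\exp(-\epsilon n)$ for each $i\in[k]$, where now $\hat T^{(i)}$ is a single subset of $[n]$ with $|\hat T^{(i)}|\ge n/3\ge 1$ by typicality.

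For the single‑block bound, I would realize $\rho$ by the standard decomposition of a binary symmetric channel: sample $Z'\in_R\{0,1\}^n$ and let $X'$ and $Y'$ be two independent $\epsilon$‑noisy copies of $Z'$ (a $\mathrm{Bern}(2\epsilon(1-\epsilon))$ flip is the XOR of two independent $\mathrm{Bern}(\epsilon)$ flips, and $2\epsilon(1-\epsilon)=2\epsilon-2\epsilon^2$). The distribution $\sigma^{(i)}$ arises the same way, except that $X'$ (resp.\ $Y'$) is additionally conditioned on $\langle\hat T^{(i)},X'\rangle=U'$ (resp.\ $=V'$) for independent uniform bits $U',V'$. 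Conditioning on $Z'=z$ makes $X'$ and $Y'$ independent under both distributions, so by convexity and tensorization of $\Delta_{TV}$ it is enough to bound, for each fixed $z$, the TV distance between the two conditional laws of $X'$ (the case of $Y'$ being identical). Writing $\zeta\triangleq\langle\hat T^{(i)},z\rangle$ and letting $\nu_{z,b}$ be an $\epsilon$‑noisy copy of $z$ conditioned on having $\hat T^{(i)}$‑parity $b$ (well defined since $\hat T^{(i)}\ne\emptyset$ and $\epsilon\in(0,1/2)$), a split on the parity of the noisy copy shows that, given $Z'=z$, the law of $X'$ is $(1-p)\,\nu_{z,\zeta}+p\,\nu_{z,\bar\zeta}$ under $\rho$ and $\tfrac12\,\nu_{z,\zeta}+\tfrac12\,\nu_{z,\bar\zeta}$ under $\sigma^{(i)}$, where $p\triangleq\tfrac12\big(1-(1-2\epsilon)^{|\hat T^{(i)}|}\big)$ is the probability that the noise flips the $\hat T^{(i)}$‑parity. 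Since these are mixtures of the \emph{same} two components, their TV distance is $|p-\tfrac12|\cdot\Delta_{TV}(\nu_{z,\zeta},\nu_{z,\bar\zeta})\le|p-\tfrac12|=\tfrac12(1-2\epsilon)^{|\hat T^{(i)}|}$; adding the identical $Y'$ contribution gives $\Delta_{TV}(\rho,\sigma^{(i)})\le(1-2\epsilon)^{|\hat T^{(i)}|}$, uniformly in $z$. Finally, typicality gives $|\hat T^{(i)}|\ge n/3$, hence $(1-2\epsilon)^{|\hat T^{(i)}|}\le(1-2\epsilon)^{n/3}\le e^{-2\epsilon n/3}=\exp(-\epsilon n)$, and the union over $i\in[k]$ completes the proof.

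The routine parts are the block‑by‑block product argument and the convexity/tensorization manipulations. The step that requires care is the single‑block comparison: one must realize $\rho$ and $\sigma^{(i)}$ through a common uniform source $Z'$ and recognize that, inside a block, the only difference between the two distributions is that $\sigma^{(i)}$ replaces the mixing weights $(1-p,p)$ of the two parity‑conditioned noisy copies by $(\tfrac12,\tfrac12)$. Once this is in place, the discrepancy is exactly the weight gap $|p-\tfrac12|$, which is exponentially small precisely because typicality forces $|\hat T^{(i)}|=\Omega(n)$, so that the $\hat T^{(i)}$‑parity of a $\mathrm{Bern}(\epsilon)$‑noisy string is exponentially close to unbiased.
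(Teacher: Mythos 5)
Your proposal is correct, and it proves the lemma by a genuinely different organization of the same underlying ingredient. The paper's proof works directly with the joint density on all $k$ blocks: it writes out $\mu_{\hat T,\epsilon}(X=\hat X,Y=\hat Y)$ as a sum over $\hat Z$, observes that for typical $\hat T$ each per-block conditioning factor $\Pr_{X'^{(i)}\sim N_\epsilon(Z^{(i)})}[\langle\hat T^{(i)},X'^{(i)}\rangle=\hat U_i]$ equals $\tfrac12(1\pm\exp(-\epsilon n))$, multiplies over $i$, and concludes that the two densities are pointwise proportional up to a factor $1\pm k\exp(-\epsilon n)$, from which the total-variation bound drops out. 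You instead exploit the block-product structure of both distributions up front, reduce via tensorization to a single-block TV estimate, and then realize both single-block laws through a common uniform source $Z'$; conditioned on $Z'=z$, each marginal is a two-component mixture of the \emph{same} parity-conditioned noisy laws $\nu_{z,\zeta},\nu_{z,\bar\zeta}$, differing only in the mixing weights, so the conditional TV is exactly the weight gap $|p-\tfrac12|=\tfrac12(1-2\epsilon)^{|\hat T^{(i)}|}$, and typicality ($|\hat T^{(i)}|\ge n/3$, which is the only half of typicality either proof actually uses) makes this exponentially small. Both arguments hinge on the same fact --- the $\hat T^{(i)}$-parity of $\mathrm{Bern}(\epsilon)$ noise on $\Omega(n)$ coordinates is nearly unbiased --- but your route is more modular and avoids tracking the full joint density, while the paper's route yields the slightly stronger conclusion that the two densities are \emph{multiplicatively} close (not just close in TV), which it then does not need. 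One small bookkeeping point to make explicit if you write this up in the paper's setting: the final inequality $e^{-2\epsilon n/3}\le\exp(-\epsilon n)$ is using the paper's convention that $\exp(x)$ denotes a quantity of the form $2^{\Theta(x)}$, not the literal exponential.
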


\begin{proof}[Proof of Lemma~\ref{le:close_lem}]
	We denote $\langle \hat{T}, X \rangle \triangleq (\langle \hat{T}^{(i)}, X^{(i)} \rangle)_{i \in [k]}$ and similarly
	$\langle \hat{T}, Y \rangle \triangleq (\langle \hat{T}^{(i)}, Y^{(i)} \rangle)_{i \in [k]}$. Note that both $\langle \hat{T}, X \rangle$ and $\langle \hat{T}, Y \rangle$ are elements of $\{0,1\}^k$. For every fixed $\hat{X}, \hat{Y} \in \{0,1\}^{k \cdot n}$, define $\hat{U} \triangleq \langle \hat{T}, \hat{X} \rangle$ and $\hat{V} \triangleq \langle \hat{T}, \hat{Y} \rangle$. We have that
	\begin{equation*}
	\mu_{2\epsilon-2\epsilon^2}(X = \hat{X} , Y = \hat{Y}) = \frac{1}{2^{k \cdot n}} \cdot (2\epsilon-2\epsilon^2)^{\Delta(\hat{X},\hat{Y})} \cdot (1-2\epsilon+2\epsilon^2)^{k\cdot n - \Delta(\hat{X},\hat{Y})}
	\end{equation*}
	On the other hand, we have that
	\begin{align}
	&\mu_{\hat{T},\epsilon}(X = \hat{X} , Y = \hat{Y})\nonumber\\ 
	&= \mu_{\hat{T},\epsilon}(U = \hat{U}, V =\hat{V} , X = \hat{X} , Y = \hat{Y})\nonumber\\ 
	&= \displaystyle\sum\limits_{\hat{Z} \in \{0,1\}^{k \cdot n}} \mu_{\hat{T},\epsilon}(Z = \hat{Z}, U = \hat{U}, V =\hat{V} , X = \hat{X} , Y = \hat{Y})\nonumber\\ 
	&= \displaystyle\sum\limits_{\hat{Z} \in \{0,1\}^{k \cdot n}} \mu_{\hat{T},\epsilon}(Z = \hat{Z}, U = \hat{U}, V =\hat{V}) \cdot  \mu_{\hat{T},\epsilon}(X = \hat{X} , Y = \hat{Y} | Z = \hat{Z}, U = \hat{U}, V =\hat{V})\nonumber\\ 
	&= \displaystyle\sum\limits_{\hat{Z} \in \{0,1\}^{k \cdot n}} \frac{1}{4^k} \cdot \frac{1}{2^{k \cdot n}} \cdot \mu_{\hat{T},\epsilon}(X = \hat{X} | Z = \hat{Z}, U = \hat{U}) \cdot \mu_{\hat{T},\epsilon}(Y = \hat{Y} | Z = \hat{Z}, V =\hat{V})\label{eq:hatT_X_Y_last}
	\end{align}
	Denote by $N_{\epsilon}(Z)$ the distribution of a random variable that is an $\epsilon$-noisy copy version of $Z$. Then,
	\begin{equation}\label{eq:That_eps_X}
	\mu_{\hat{T},\epsilon}(X = \hat{X} | Z = \hat{Z}, U = \hat{U}) = \frac{\Pr_{X' \sim N_{\epsilon}(Z)}[ X' = \hat{X}]}{\Pr_{X' \sim N_{\epsilon}(Z)}[ \langle \hat{T}, X' \rangle = \hat{U}]},
	\end{equation}
	where
	\begin{align*}
	\Pr_{X' \sim N_{\epsilon}(Z)}[ \langle \hat{T}, X' \rangle = \hat{U}] &= \Pr_{X' \sim N_{\epsilon}(Z)}[ (\langle \hat{T}^{(i)}, X'^{(i)} \rangle)_{i \in [k]}= \hat{U}]\\ 
	&= \displaystyle\prod\limits_{i=1}^k \Pr_{X'^{(i)} \sim N_{\epsilon}(Z^{(i)})}[ \langle \hat{T}^{(i)}, X^{(i)} \rangle = \hat{U}_i]\\ 
	&= \displaystyle\prod\limits_{i=1}^k (\frac{1}{2} \pm \exp(-\epsilon \cdot n))\\ 
	&= \frac{1}{2^k} \cdot \displaystyle\prod\limits_{i=1}^k (1 \pm \exp(-\epsilon \cdot n))\\ 
	&= \frac{1}{2^k} \cdot (1 \pm k\cdot\exp(-\epsilon \cdot n)),
	\end{align*}
	where the third equality above follows from the fact that $\hat{T}$ is typical. Plugging back this last expression in Equation~(\ref{eq:That_eps_X}), we get
	\begin{align}
	\mu_{\hat{T},\epsilon}(X = \hat{X} | Z = \hat{Z}, U = \hat{U}) &= 2^k \cdot \frac{\Pr_{X' \sim N_{\epsilon}(Z)}[ X' = \hat{X}]}{1 \pm k\cdot\exp(-\epsilon \cdot n)}\nonumber\\ 
	&= 2^k \cdot \Pr_{X' \sim N_{\epsilon}(Z)}[ X' = \hat{X}] \cdot (1 \pm k\cdot\exp(-\epsilon \cdot n))\label{eq:hatT_X_last}.
	\end{align}
	Similarly, we have that
	\begin{equation}\label{eq:hatT_Y_last}
	\mu_{\hat{T},\epsilon}(Y = \hat{Y} | Z = \hat{Z}, V =\hat{V}) = 2^k \cdot \Pr_{Y' \sim N_{\epsilon}(Z)}[ Y' = \hat{Y}] \cdot (1 \pm k\cdot\exp(-\epsilon \cdot n)).
	\end{equation}
	Combining Equations~(\ref{eq:hatT_X_Y_last}), (\ref{eq:hatT_X_last}) and (\ref{eq:hatT_Y_last}) yields
	\begin{align}
	\mu_{\hat{T},\epsilon}(X = \hat{X} , Y = \hat{Y}) &= \displaystyle\sum\limits_{\hat{Z} \in \{0,1\}^{k \cdot n}} \frac{1}{2^{k \cdot n}} \cdot  \Pr_{X' \sim N_{\epsilon}(Z)}[ X' = \hat{X}] \cdot \Pr_{Y' \sim N_{\epsilon}(Z)}[ Y' = \hat{Y}] \cdot (1 \pm k\cdot\exp(-\epsilon \cdot n))^2\nonumber\\ 
	&= (1 \pm k\cdot\exp(-\epsilon \cdot n)) \cdot \displaystyle\sum\limits_{\hat{Z} \in \{0,1\}^{k \cdot n}} \frac{1}{2^{k \cdot n}} \cdot  \Pr_{X' \sim N_{\epsilon}(Z)}[ X' = \hat{X}] \cdot \Pr_{Y' \sim N_{\epsilon}(Z)}[ Y' = \hat{Y}]\nonumber\\ 
	&= (1 \pm k\cdot\exp(-\epsilon \cdot n)) \cdot \mu_{2\epsilon-2\epsilon^2}(X = \hat{X} , Y = \hat{Y}),\label{eq:both_dist_proportional}
	\end{align}
	where the last equality above follows from the fact that one way to sample a $(2\epsilon-2\epsilon^2)$-noisy pair $(X,Y)$ is to first sample a uniform-random $Z$, and then independently sample each of $X$ and $Y$ to be an $\epsilon$-noisy copy of $Z$. Using Equation~(\ref{eq:both_dist_proportional}) and the definition of the total-variation distance, we conclude that
	\begin{align*}
	\Delta_{TV}(\mu_{2\epsilon-2\epsilon^2}, \mu_{\hat{T},\epsilon}) &= \frac{1}{2} \cdot \displaystyle\sum\limits_{\hat{X}, \hat{Y} \in \{0,1\}^{k \cdot n}} |\mu_{2\epsilon-2\epsilon^2}(X = \hat{X} , Y = \hat{Y})-\mu_{\hat{T},\epsilon}(X = \hat{X} , Y = \hat{Y})|\\ 
	&\le k\cdot\exp(-\epsilon \cdot n) \displaystyle\sum\limits_{\hat{X}, \hat{Y} \in \{0,1\}^{k \cdot n}} \mu_{2\epsilon-2\epsilon^2}(X = \hat{X} , Y = \hat{Y})\\ 
	&= k\cdot\exp(-\epsilon \cdot n).\qedhere
	\end{align*}
	
\end{proof}

We are now ready to prove Lemma~\ref{le:sim_lem}.
\begin{proof}[Proof of Lemma~\ref{le:sim_lem}]
	Assume for the sake of contradiction that there exists a deterministic $1$-way protocol $\Pi$ that computes $F$ with error at most $\theta$ on $\nu_{\epsilon}$ and that violates Equation~(\ref{eq:sim_lem_bd}). Namely, if we define the \emph{intermediate information cost} of $\Pi$ as
	\begin{equation}\label{eq:iic_def}
	{\IIC}_{\kappa_{\epsilon}}(\Pi ~ | ~ T) \triangleq I_{((S,X),(T,Y)) \sim \kappa_{\epsilon}}\bigg((\langle T^{(i)}, X^{(i)} \rangle)_{i \in [k]} ; M(X,S) ~ | ~ Y,T\bigg),
	\end{equation}
	then we assume that $\IIC_{\kappa_{\epsilon}}(\Pi ~ | ~ T) = o(k)$. We say that a particular value $\hat{T}$ of $T$ is \emph{nice} if it simultaneously satisfies the following three properties:
	\begin{enumerate}
		\item\label{prop:typical} $\hat{T}$ is typical.
		\item\label{prop:error} The conditional error of $\Pi$ w.r.t. to $\nu_{\epsilon}$ conditioned on $T = \hat{T}$ is at most $O(\theta)$.
		\item\label{prop:info_cost} The intermediate information cost condtioned on $T = \hat{T}$ satisfies
		\begin{equation*}
		\IIC_{\kappa_{\epsilon}}(\Pi ~ | ~ T=\hat{T}) = O\bigg(\IIC_{\kappa_{\epsilon}}(\Pi ~ | ~ T)\bigg) = o(k).
		\end{equation*}
	\end{enumerate}
	We now argue that there exists a $\hat{T}$ that is nice. To do so, we show that a random $\hat{T}$ satisfies the above three properties with high probability. First, by Definition~\ref{def:typical}, a Chernoff bound and union bound, a random $\hat{T}$ satisfies property~\ref{prop:typical} with probability at least $1-o(1)$ as long as $k \cdot \exp(-n) = o(1)$. Moreover, by an averaging argument, a random $\hat{T}$ satisfies property~\ref{prop:error} with probability $1-o(1)$. Finally, by an averaging argument and the definition of the conditional mutual information in Equation~(\ref{eq:iic_def}), we get that a random $\hat{T}$ satisfies property~\ref{prop:info_cost} with probability $1-o(1)$. By a union bound, we conclude that a random $\hat{T}$ satisfies all three properties with high probability. Henceforth, we fix such a nice $\hat{T}$ and use it to give a deterministic $1$-way protocol computing the function $\HD_{k}$ w.h.p. over the uniform distribution on $\{0,1\}^{2 \cdot k}$ and with communication $o(k)$ bits. This would contradict the lower bound of \cite{woodruff2007efficient} (i.e., Lemma~\ref{le:woodruff_lb}).
	
	Consider the simulation protocol $\Pi'$ described in Protocol~\ref{alg:sim_prot}. In this protocol, Alice is given as input a binary string $U$ of length $k$ and Bob is given as input a binary string $V$ of length $k$. We will argue that
	\begin{enumerate}[label=(\alph*)]
		\item\label{prop:error_ghd} The output of $\Pi'$ is equal to $\HD_k(U,V)$ with probability $1-O(\theta)$ over the randomness of $(U,V) \sim U_{2k}$ and over the private and shared randomness of $\Pi'$.
		\item\label{prop:info_cost_ghd} The information cost of $\Pi'$ satisfies
		\begin{equation*}\label{eq:info_cost_sim_prot}
		I(U; M'(U) ~ | ~ V, R) = \IIC_{\kappa_{\epsilon}}(\Pi ~ | ~ T=\hat{T}),
		\end{equation*}
		where $M'$ is the single (randomized) message sent from Alice to Bob under $\Pi'$, and $R$ is the public randomness of $\Pi'$.
	\end{enumerate}
	We start by proving property~\ref{prop:error_ghd}. Let $\lambda$ be the probability distribution of the sequence $S$ of subsets that is sampled in Protocol~\ref{alg:sim_prot}. In other words, $S \sim \lambda$ is an $\epsilon$-noisy copy of $\hat{T}$. Then, when $(U,V)$ is drawn uniformly at random, the induced distribution on $(S,X,Y)$ in Protocol~\ref{alg:sim_prot} is $\lambda \otimes \mu_{\hat{T},\epsilon}$. Property~\ref{prop:error} above guaranteed that the error probability of protocol $\Pi$ on pairs $((S,X),(\hat{T},Y))$ such that $(S,X,Y) \sim \lambda \otimes \mu_{\epsilon}$ is at most $O(\theta)$. Using Lemma~\ref{le:close_lem}, the fact that Protocol $\Pi'$ simulates $\Pi$ and the fact that
	\begin{equation*}
	F((S,X),(\hat{T},Y)) = \HD_k\bigg((\langle \hat{T}^{(i)}, X^{(i)} \rangle)_{i \in [k]} , (\langle \hat{T}^{(i)}, Y^{(i)} \rangle)_{i \in [k]} \bigg),
	\end{equation*} we get that the error probability of $\Pi'$ (over the randomness of $(U,V) \sim \mathcal{U}_{2k}$ and over the private and shared randomness) is at most $O(\theta) + O(k \cdot \exp(-\epsilon n))$, which is $O(\theta)$.
	
	We next prove property~\ref{prop:info_cost_ghd}. The information cost of $\Pi'$ is given by
	\begin{align*}
	I(U; M'(U) ~ | ~ V, R) &= I((\langle \hat{T}^{(i)}, X^{(i)} \rangle)_{i \in [k]} ; M(S,X) ~ | ~ V, Z)\\ 
	&= I((\langle \hat{T}^{(i)}, X^{(i)} \rangle)_{i \in [k]} ; M(S,X) ~ | ~ Z)\\ 
	&= I((\langle \hat{T}^{(i)}, X^{(i)} \rangle)_{i \in [k]} ; M(S,X) ~ | ~ Z, T = \hat{T})\\ 
	&= \IIC_{\kappa_{\epsilon}}(\Pi ~ | ~ T=\hat{T})
	\end{align*}
	where the second equality above follows from the fact that $(S,X)$ and $V$ are conditionally independent given $Z$, and the last equality follows from the fact that $(X,Z) \sim \mu_{\epsilon}$ in Protocol~\ref{alg:sim_prot}.
	
	To sum up, the $1$-way protocol $\Pi'$ computes $\HD_k$ with error probability at most $O(\theta)$ and has information cost $o(k)$ bits (by Properties~\ref{prop:info_cost} and \ref{prop:info_cost_ghd} above). By averaging over the shared randomness, we can convert $\Pi'$ into a $1$-way private-coin protocol $\Pi''$ with the same error and information cost guarantees. Note that since $U$ and $V$ are independent, we have that $I(U,V ; M''(U)) = I(U;M''(U) ~ | ~ V)$ where $M''$ is the single message sent from Alice to Bob under $\Pi''$. Applying the generic compression result of \cite{jain2003direct} (i.e., Lemma~\ref{le:jrs_comp_bd_rds}) with error parameter $\zeta = \theta$, we get that there exists a $1$-way deterministic protcol $\Pi'''$ that computes $\HD_k$ with error probability at most $O(\theta)$ over the uniform distribution and with communication cost $o(k)$ bits. This contradicts the lower bound of Woodruff \cite{woodruff2007efficient} (i.e., Lemma~\ref{le:woodruff_lb}).
	\begin{algorithm}
\caption{Simulation Protocol $\Pi'$}
\label{alg:sim_prot}
{\bf Inputs.} Alice is given $U \in \{0,1\}^k$ and Bob is given $V \in \{0,1\}^k$. \\ 
{\bf Parameters.} A fixed sequence $\hat{T} \triangleq (\hat{T}^{(i)} \subseteq [n])_{i \in [k]}$ of subsets and noise parameters $\epsilon, q > 0$. \\
\begin{algorithmic}[1]
\State Alice and Bob use their shared randomness to sample $Z \in_R \{0,1\}^{k \cdot n}$.
\State Alice uses her private randomness to sample $X \in \{0,1\}^{k \cdot n}$ to be an $\epsilon$-noisy copy of $Z$ conditioned on $(\langle \hat{T}^{(i)}, X^{(i)} \rangle)_{i \in [k]} = U$.
\State Bob uses his private randomness to sample $Y \in \{0,1\}^{k \cdot n}$ to be an $\epsilon$-noisy copy of $Z$ conditioned on $(\langle \hat{T}^{(i)}, Y^{(i)} \rangle)_{i \in [k]} = V$.
\State Alice user her private randomness to sample a sequence $S \triangleq (S^{(i)} \subseteq [n])_{i \in [k]}$ of subsets which is set to be a $q$-noisy copy of $\hat{T}$.
\State Alice and Bob simulate the $1$-way deterministic protocol $\Pi$ on inputs $((S,X),(\hat{T},Y))$ and return the resulting output.
\end{algorithmic}
\end{algorithm}

\end{proof}

\subsection{Proof of Lemma~\ref{le:rand_lb_std}}\label{subsec:pf_rand_lb_std}

Assume for the sake of contradiction that there is a deterministic $1$-way protocol $\Pi$ computing $F$ with error at most $\theta$ over the distribution $\nu_{\epsilon}$, and that has communication cost $o(k \cdot \epsilon \cdot n)$ bits. Let $M \triangleq M(X,S)$ be the single message that is sent from Alice to Bob under $\Pi$. By the Simulation Lemma~\ref{le:sim_lem}, we should have that
\begin{equation}\label{eq:iic_lb_sec}
I_{((S,X),(T,Y)) \sim \kappa_{\epsilon}}\bigg((\langle T^{(i)}, X^{(i)} \rangle)_{i \in [k]} ; M(X,S) ~ | ~ Y,T\bigg) \geq \beta \cdot k
\end{equation}
for some constant $\beta$ that only depends on $\theta$.

By Lemma~\ref{le:estimation_k_d} and Equation~(\ref{eq:iic_lb_sec}), there exists a deterministic function $E(Y,T,M(X,S)) \in \{0,1\}^k$ such that $E(Y,T,M(X,S)) = (\langle T^{(i)}, X^{(i)} \rangle)_{i \in [k]}$ with probability at least $(0.5+\gamma)^k$ for some positive constant $\gamma$  that only depends on $\theta$. Hence, by applying the function $E(Y,T,M(X,S))$ to his inputs $(Y,T)$ and to the message $M(X,S)$ that he receives from Alice, Bob can guess the sequence $(\langle T^{(i)}, X^{(i)} \rangle)_{i \in [k]}$ with probability $(0.5+\gamma)^k$. By Corollary~\ref{cor:sdp_disc_comb} -- which combines the strong direct product theorem for discrepancy of \cite{lee2008direct} (i.e., Lemma~\ref{le:strong_dp}) and the base lower bound of \cite{ghazi2015communication} that was proved using the discrepancy method (i.e., Lemma~\ref{le:soda_disc_up_bd}) --  we conclude that the protocol $\Pi$ should have communication cost $\Omega(k \cdot \epsilon \cdot n)$ bits.

\section{Proof of Theorem~\ref{le:isr}}\label{sec:isr_app_pf}

We start by recalling the statement of Theorem~\ref{le:isr}.

\begin{reptheorem}{le:isr}
Let $\rho \in (0,1]$ and $\mu$ be a product distribution. Let $\mathcal{F}$ consist of pairs $(f,g)$ of functions with $\Delta_{\mu}(f,g) \le \delta$, and $\owCC^\mu_{\epsilon}(f), \owCC^\mu_{\epsilon}(g) \le k$. Then, for every positive $\theta$, 
$\owIsrCCU_{\epsilon + 2\delta + \theta}^{\mu}(\mathcal{F}) \leq O_\theta(k/\rho^2)$.
\end{reptheorem}

In order to prove Theorem~\ref{le:isr}, we start by defining a communication problem that will be useful to us (a similar definition is used in \cite{CGMS15}).
\begin{defn}[Gap Inner Product; $\GIP_{c,s}^d$]
Let $-1 \le s < c \le 1$. Let Alice be given a vector $u \in \{\pm 1\}^d$ and Bob be given a vector $v \in \{\pm 1\}^d$. The goal is for Alice and Bob to distinguish the case where $\Ex_{i \in_R [d]}[ u_i v_i ] \geq c$ from the case where $\Ex_{i \in_R [d]}[ u_i v_i ] \le s$.
\end{defn}

We let $\ISR_{\rho}$ denote the source of \emph{imperfect shared randomness} where Alice gets a string $r$ of independent uniform-random bits and Bob gets a string $r'$ of bits obtained by independently flipping each coordinate of $r$ with probability $(1-\rho)/2$. For a function $f$, we denote by $\owIsrCC_{\epsilon, \rho}(f)$ the minimum cost of a protocol that has access to $\ISR_{\rho}$ and that on input $(x,y)$ in the domain of $f$, ouputs $f(x,y)$ with probability at least $1-\epsilon$, where the probability is over the randomness of $\ISR_{\rho}$.

The following theorem --which upper bounds the  communication complexity with $\ISR_{\rho}$ of $\GIP_{c,s}^d$-- was proved by \cite{CGMS15} using a locality-sensitive-hashing based protocol.
\begin{theorem}[\cite{CGMS15}]\label{thm:main_cgms}
Let $\rho \in (0,1]$. Then, $\owIsrCC_{\epsilon, \rho}(\GIP_{c,s}^d) = O((c-s)^{-2} \rho^{-2}\log(1/\epsilon))$ via a protocol where Alice's message depends only her input, her part of the randomness, the values of $\rho$ and $\epsilon$ and the difference $c-s$.
\end{theorem}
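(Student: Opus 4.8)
The plan is to give a one-way protocol built on the random-hyperplane (SimHash) locality-sensitive hash, adapted to the imperfectly-shared-randomness source $\ISR_\rho$. As a warm-up, take perfect shared randomness. Put $p(t) \triangleq 1 - \arccos(t)/\pi$ for $t \in [-1,1]$ (increasing, with $p'(t) = (\pi\sqrt{1-t^2})^{-1} \ge 1/\pi$) and $\tau \triangleq \Ex_{i \in_R [d]}[u_i v_i] = \langle u,v\rangle/d$; here $u,v \in \{\pm 1\}^d$, so $\|u\|^2 = \|v\|^2 = d$. Given a shared standard Gaussian $g \in \mathbb{R}^d$, Alice sends the bit $\Sign(\langle g,u\rangle)$ and Bob compares it with $\Sign(\langle g,v\rangle)$; since $(\langle g,u\rangle,\langle g,v\rangle)$ is a pair of zero-mean Gaussians with correlation $\tau$, Sheppard's formula (Fact~\ref{fact:sheppard}) makes the collision probability exactly $p(\tau)$. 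Because $p(c) - p(s) \ge (c-s)/\pi$, running this hash on $O((c-s)^{-2}\log(1/\epsilon))$ independent shared Gaussians and thresholding the empirical collision frequency at $(p(c)+p(s))/2$ already solves $\GIP_{c,s}^d$ at the claimed cost in the perfect-sharing case.

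Next I would simulate this hash from $\ISR_\rho$. View the source as an unbounded supply of i.i.d.\ pairs $(a_\ell,b_\ell) \in \{\pm 1\}^2$ with $\Ex[a_\ell b_\ell] = \rho$ (negating a $\pm 1$ coordinate with probability $(1-\rho)/2$ yields exactly this correlation). Fix a block size $B$ and, for each $j \in [d]$, let Alice set $g_j \triangleq B^{-1/2}\sum_{\ell \in \text{block } j} a_\ell$ and Bob set $g'_j \triangleq B^{-1/2}\sum_{\ell \in \text{block } j} b_\ell$ over pairwise-disjoint blocks. Then $(\langle g,u\rangle,\langle g',v\rangle)$ is a sum of $dB$ independent bounded increments whose total covariance matrix equals $d\begin{pmatrix}1 & \rho\tau\\ \rho\tau & 1\end{pmatrix}$, so the two-dimensional Berry--Esseen estimate (Theorem~\ref{th:2d_berry_ess_bin_not_id}) together with Sheppard's formula gives, for $B = B(d,\rho,c-s)$ chosen sufficiently large,
\begin{equation*}
\Pr\big[\Sign(\langle g,u\rangle) = \Sign(\langle g',v\rangle)\big] = p(\rho\tau) \pm \frac{\rho(c-s)}{10\pi}.
\end{equation*}
Thus one hash bit collides with probability $\ge p(\rho c) - \rho(c-s)/(10\pi)$ when $\tau \ge c$ and $\le p(\rho s) + \rho(c-s)/(10\pi)$ when $\tau \le s$; since $p(\rho c) - p(\rho s) \ge \rho(c-s)/\pi$, the two cases are separated by a gap $\Delta = \Omega(\rho(c-s))$ in collision probability, and crucially $B$ affects only the amount of shared randomness consumed, not Alice's communication.

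To finish, repeat this with $N = \Theta(\Delta^{-2}\log(1/\epsilon)) = \Theta(\rho^{-2}(c-s)^{-2}\log(1/\epsilon))$ rounds, each using a fresh disjoint portion of the source, so the collision indicators $\1[X_t = Y_t]$ are independent with the common success probability above ($X_t,Y_t$ being the two parties' round-$t$ hash bits). Alice's entire message is the string $(X_1,\dots,X_N)$ of $N = N(\rho,\epsilon,c-s)$ sign bits, computed from $u$, her part of the randomness, and the publicly known quantities $B(d,\rho,c-s)$ and $N$ — in particular it depends on $c,s$ only through $c-s$, matching the structural requirement. Bob computes $\hat p = N^{-1}\sum_t \1[X_t = Y_t]$ and outputs YES iff $\hat p \ge (p(\rho c)+p(\rho s))/2$; by Hoeffding's bound (Fact~\ref{fact:hoeffding}) the error is at most $\epsilon$, and the total one-way communication is $N = O((c-s)^{-2}\rho^{-2}\log(1/\epsilon))$ bits, as desired.

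The hard part is the second step: faithfully simulating a pair of $\rho$-correlated Gaussian vectors from imperfectly shared bits while controlling the induced error in the collision probability. When $\rho\tau$ approaches $\pm 1$ the limiting $2\times 2$ covariance is near-degenerate and the quantitative two-dimensional Berry--Esseen error (which scales like $\lambda^{-3/2}(dB)^{-1/2}$ with $\lambda = 1-|\rho\tau|$ the smallest eigenvalue, and $\lambda \ge 1-\rho$ in the relevant regime) deteriorates; the resolution is that this only forces $B$ to be larger, which is harmless since $B$ does not enter the communication and in the $B\to\infty$ limit Sheppard's formula is exact (the degenerate case $\rho=1$ being just the perfectly shared protocol of the first paragraph). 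Everything else is routine LSH-plus-concentration bookkeeping, the one point worth double-checking being that only Bob, through his threshold, needs $c$ and $s$ separately.
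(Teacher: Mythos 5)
This theorem is imported from \cite{CGMS15} and the paper gives no proof of its own, so the right comparison is with the cited source: your argument is a correct, self-contained reconstruction of exactly that locality-sensitive-hashing protocol --- the Gaussian sign-hash analyzed via Sheppard's formula, with the $\rho$-correlated Gaussians simulated from the $\ISR_\rho$ source by block averaging plus a two-dimensional CLT, yielding collision probability $\approx p(\rho\tau)$ and hence a gap of $\Omega(\rho(c-s))$ amplified by Hoeffding. The two delicate points --- that Alice's message depends on $c,s$ only through $c-s$ (which is what Lemma~\ref{le:needed_isr} needs) and that the block size $B$ costs only shared randomness, not communication, so the near-degenerate covariance regime is harmless --- are both identified and handled correctly.
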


Theorem~\ref{thm:main_cgms} can be used in order to estimate the weighted inner product of two vectors up to an arbitrary additive accuracy, and when the weighting is done according to an arbitrary distribution on coordinates that is known to both Alice and Bob.

\begin{lem}\label{le:needed_isr}
Let $t, d \in \mathbb{N}$ and $P$ be a distribution over $[d]$ that is known to both Alice and Bob. Let Alice be given a vector $u \in \{\pm 1\}^d$ and Bob be given $t$ vectors $v^{(1)}, v^{(2)}, \dots, v^{(t)} \in \{\pm1\}^d$. Let $\theta > 0$ and $\rho \in (0,1]$ be given. Then, there exists a $1$-way protocol with communication cost $O(\theta^{-2} \rho^{-2} \log(t/\theta))$ bits such that, with probability $1-\theta$, for every $j \in [t]$, Bob computes $\Ex_{i \sim P} [u_i v^{(j)}_i]$ up to an additive accuracy of $\theta$.
\end{lem}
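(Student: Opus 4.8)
The plan is to reduce the estimation of a $P$-weighted inner product to a small number of instances of $\GIP^d_{c,s}$, each solved by the $\ISR_\rho$ protocol of Theorem~\ref{thm:main_cgms}, and to share Alice's message across all of them. First I would observe that it suffices to estimate a single quantity $\Ex_{i \sim P}[u_i v^{(j)}_i] \in [-1,1]$ to additive accuracy $\theta$ for each fixed $j$, and then union-bound over $j \in [t]$. For a fixed accuracy level, I would set up a grid of thresholds: for each $\tau$ in the discretized set $\{-1, -1+\theta/2, -1+\theta, \dots, 1\}$ (of size $O(1/\theta)$), running $\GIP^d_{c,s}$ with $c = \tau + \theta/4$ and $s = \tau - \theta/4$ (so $c - s = \theta/2$) lets Bob decide whether $\Ex_{i\sim P}[u_i v^{(j)}_i]$ is above $c$ or below $s$ on the promise that it is not in the gap $(s,c)$. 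Taking the largest $\tau$ for which the answer is "above" gives a $\theta$-accurate estimate of the inner product, since consecutive thresholds are $\theta/2$ apart and at least one of any two consecutive grid instances is outside its gap.

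The one subtlety is that $\GIP$ as defined uses the \emph{uniform} distribution on coordinates, $\Ex_{i \in_R [d]}[u_i v_i]$, whereas here we need the $P$-weighted average. To handle this I would use a standard sampling/replication trick: since $P$ is known to both players, replace the coordinate set $[d]$ by a larger multiset $[d']$ in which coordinate $i$ is repeated a number of times proportional to $P(i)$ (after rounding $P$ to a grid of granularity, say, $\theta/(100 d)$ so the replication counts are integers and the resulting uniform distribution on $[d']$ is within total variation $O(\theta/100)$ of $P$; here $d' = \poly(d/\theta)$). Alice forms $u' \in \{\pm1\}^{d'}$ by replicating her entries accordingly, Bob does the same for each $v^{(j)}$, and the uniform-coordinate inner product on $[d']$ now approximates the $P$-weighted one to within $\theta/4$. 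Crucially, by the "Alice's message depends only on her input, her randomness, $\rho$, $\epsilon$, and $c-s$" clause of Theorem~\ref{thm:main_cgms}, and since all our grid instances share the same value of $c - s = \theta/2$ and the same $u'$, Alice sends a \emph{single} message that simultaneously serves every threshold $\tau$ and every index $j$; Bob then locally post-processes it against each of his $t$ vectors and each threshold.

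For the parameters: each invocation of Theorem~\ref{thm:main_cgms} costs $O((c-s)^{-2}\rho^{-2}\log(1/\epsilon')) = O(\theta^{-2}\rho^{-2}\log(1/\epsilon'))$ bits, and since Alice's message is shared, the total communication is of this same order. To drive the overall failure probability below $\theta$ I would set the per-instance error $\epsilon'$ to $\theta / (\text{number of } (\tau,j) \text{ pairs}) = \Theta(\theta^2/t)$, so $\log(1/\epsilon') = O(\log(t/\theta))$, giving the claimed bound $O(\theta^{-2}\rho^{-2}\log(t/\theta))$; a union bound over the $O(t/\theta)$ relevant decisions then yields correctness with probability $1-\theta$. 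I expect the main (though routine) obstacle to be the careful bookkeeping in the replication step — ensuring the rounding of $P$, the size $d'$, and the accumulated approximation errors ($\theta/4$ from the grid, $\theta/4$ from the $P$-to-uniform rounding, $\theta/4$ slack in $c-s$) all fit within the target accuracy $\theta$ — rather than anything conceptually deep; the reuse of Alice's single message is the one place where the precise form of Theorem~\ref{thm:main_cgms} is essential.
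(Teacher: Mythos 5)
Your proposal is correct and follows essentially the same route as the paper's proof: reduce the $P$-weighted inner product to a uniform one by coordinate replication, run the $\GIP$ protocol of Theorem~\ref{thm:main_cgms} on a grid of $O(1/\theta)$ thresholds with a shared Alice message, set the per-instance error to $\Theta(\theta^2/t)$, and union-bound over all $(\tau, j)$ pairs. The only cosmetic difference is in handling the replication step: the paper simply assumes the $P(i)$ are rational and replicates by the LCM of the denominators without accounting for any approximation loss, whereas you round $P$ to a $\theta/(100d)$-grid and explicitly budget a $\theta/4$ accuracy loss for the rounding — arguably a more careful way to phrase the same reduction.
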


\begin{proof}
First, note that we can reduce the case of general distributions $P$ on $[d]$ to the case of the uniform distribution on $[d']$ for some integer $d' \in \mathbb{N}$, by having Alice and Bob repeat coordinate $i \in [d]$ of their vectors a number of times proportional to $P(i)$. More precisely, we can assume WLOG that $P(i)$ is a rational number for each $i \in [d]$ (because of the density of the rationals in the reals), and then have Alice and Bob repeat coordinate $i$ a number of times equal to $\ell \cdot P(i)$ where $\ell$ is the least-common multiple of the denominators in $\{P(i): i \in [d]\}$.

Henceforth, we assume that $P$ is the uniform distribution on $[d']$. Alice sends a message of the protocol for $\GIP$ in Theorem~\ref{thm:main_cgms} with parameters $c-s = \theta$ and $\epsilon = \theta^2/t$.
Bob then divides the interval $[-1,+1]$ into $2/\theta$ sub-intervals, each of length $\theta$. Then, he completes the protocol for $\GIP$ in Theorem~\ref{thm:main_cgms} on Alice's message, for each subinterval and for each of his vectors $v^{(1)},v^{(2)}, \dots, v^{(t)}$. For each fixed $j \in [t]$, by a union bound over the $2/\theta$ sub-intrevals, we get that with probability $1-\theta/t$, Bob can deduce the value of $\Ex_{i \sim P} [u_i v^{(j)}_i]$ up to additive accuracy $\theta$. Another union bound over all $t$ vectors of Bob implies that with probablility $1-\theta$, for each $j \in [t]$, he computes the value of $\Ex_{i \sim P} [u_i v^{(j)}_i]$ up to additive accuracy $\theta$.

Moreover, by our setting of $c - s = \theta$ and $\epsilon = \theta^2/t$, we get that the communication cost of the protocol is $O(\theta^{-2} \rho^{-2} \log(t/\theta))$ bits.
\end{proof}

We are now ready to prove Theorem~\ref{le:isr}.

\begin{proof}[Proof of Theorem~\ref{le:isr}]
Let $\rho \in [0,1]$ and $\mu$ be a product distribution. Consider a pair $(f,g) \in \mathcal{F}$. Since $\owCC^\mu_{\epsilon}(g) \le k$, there exist an integer $L \leq 2^k$ and (deterministic) functions $\pi\colon X \to [L]$ and $\{B_i \colon Y \to \zo \}_{i \in [L]}$, such that Alice's message on input $x$ is $\pi(x)$, and Bob's output on message $i$ from Alice and on input $y$ is $B_i(y)$. For every $x \in X$, we define the function $f_x: Y \to \{0,1\}$ as $f_x(y) \triangleq f(x,y)$ for all $y \in Y$. We also denote $\tilde{f}_x(y) \triangleq (-1)^{f_x(y)}$, and similarly $\tilde{B}_i(y) \triangleq (-1)^{B_i(y)}$. The operation of the protocol is given in Protocol~\ref{prot:isr}.


\begin{algorithm}[H]
    \caption{The protocol that handles contextual
      uncertainty with $\ISR_{\rho}$} \label{prot:isr}
    \textbf{Setting:} Let $\mu$ be a product distribution over a message
    space $X\times Y$. Alice and Bob are given functions $f$ and $g$, and inputs
    $x$ and $y$, respectively, where $\Delta_{\mu}(f,g) \le \delta$, $\owCC^\mu_{\epsilon}(f), \owCC^\mu_{\epsilon}(g) \le k$ and
    $(x,y) \sim \mu$.

    \textbf{Protocol:}
    \begin{enumerate}
\item Alice and Bob run the protocol in Lemma~\ref{le:needed_isr} with $t = L$, $d = |Y|$, $P = \mu_Y$, accuracy $\theta/3$, $u = \tilde{f}_x$, $v^{(j)} = \tilde{B}_j$ for every $j \in [L]$.
    \item 
      For every $j \in [L]$, let $\agr_j$ denote Bob's estimate of $\Ex_{y \sim \mu_Y}[\tilde{f}_x(y)\tilde{B}_y(y)]$.
\item Bob determines $j_{\max} \triangleq \mathop{\mathrm{argmax}}_{j\in [L]}\{\agr_j\}$ and outputs $B_{j_{\max}}(y)$ and halts. 
    \end{enumerate}
\end{algorithm}

The same argument as in the analysis of the protocol of \cite{ghazi2015communication} (specialized for product distributions) then implies that the probability that $B_{j_{\max}}(y)$ is not equal to $g(x,y)$ is at most $\epsilon +2\delta + \theta$. By Lemma~\ref{le:needed_isr} and our setting of $t = L \le 2^k$, we get that the communication cost of Protocol~\ref{prot:isr} is $O_{\theta}(k/\rho^2)$ bits, as desired.
\end{proof}

\section{Open Questions}\label{sec:conc}

As mentioned in Notes~\ref{rem:priv} and~\ref{rem:pub} in Section~\ref{sec:intro}, significantly improving the bounds from Theorems~\ref{th:det_sep} and~\ref{th:imp_rand_sep} seems to require fundamentally new constructions, and is a very important question. Moreover, is there an analogue of the protocol in Theorem~\ref{le:isr} for \emph{non-product} distributions?

Another very important and intriguing open question is whether efficient communication under contextual uncertainty is possible in the multi-round setup. Namely, if $k$ is the $r$-round certain communication, can we upper bound the $r$-round uncertain communication by some function of $k$, $I$ and possibly $r$?  Even for $r=2$ and when the uncertain protocol is allowed to use public randomness, no non-trivial protocols are known in this setting. On the other hand, no separations are known for this case (beyond those known for $r=1$) even if the protocols are restricted to be deterministic.

\subparagraph*{Acknowledgements.}\label{sec:ack}

The authors would like to thank Ilan Komargodski, Pravesh Kothari and Mohsen Ghaffari and the anonymous reviewers for very helfpul discussions and pointers.

\bibliographystyle{alpha}
\bibliography{refs}

\appendix

\section{Two-Dimensional Berry-Esseen Theorem for Independent Random Variables}\label{sec:berry_esseen}
In this section, we state a two-dimensional Berry-Esseen theorem for independent (but not necessarily identically distributed) binary random variables that we used in the proofs in Section~\ref{sec:det_sep} (namely, in the proofs of Part~\ref{part:det_dist} of Theorem~\ref{th:det_sep} and of Lemma~\ref{le:dist_sep}). It follows from a known multi-dimensional Berry-Esseen theorem and the argument is very similar to that of \cite{matulef2010testing}, the only exceptions being that in our case the random variables are not necessarily identically distributed, and each of them takes values in $\{-1,0,+1\}$ instead of $\{\pm 1\}$.

\begin{theorem}[Two-dimensional Berry-Esseen]\label{th:2d_berry_ess_bin_not_id}
Consider the linear form $\ell(z) \triangleq k^{-1/2} \cdot \displaystyle\sum\limits_{i \in [k]} z_i$ where $z \in \{-1,0,+1\}^k$. Let $(x,y) \in \{-1,0,+1\}^k \times \{-1,0,+1\}^k$ such that independently for each $i \in [k]$, $(x_i,y_i)$ is a pair of zero-mean random variables with covariance matrix $\Sigma_i$. Let $\Sigma \triangleq k^{-1} \cdot \displaystyle\sum\limits_{i \in [k]} \Sigma_i$ and denote by $\lambda$ the smallest eigenvalue of $\Sigma$. Then, for any intervals $I_1, I_2 \subseteq \mathbb{R}$, it holds that
\begin{equation*}
| \Pr[ (\ell(x),\ell(y)) \in I_1 \times I_2 ] - \Pr[(X,Y) \in I_1 \times I_2] | \le O\bigg(\frac{1}{\lambda^{3/2} \cdot \sqrt{k}}\bigg),
\end{equation*}
where $(X,Y)$ is a pair of zero-mean Gaussians with covariance matrix $\Sigma$.
\end{theorem}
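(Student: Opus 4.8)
The plan is to reduce the claim to a known multidimensional Berry--Esseen theorem for sums of independent (not necessarily identically distributed) random vectors over convex sets --- the same ingredient used in \cite{matulef2010testing} --- applied after a whitening transformation that absorbs the covariance matrix $\Sigma$. Concretely, I would invoke a convex-body Berry--Esseen statement of the following type: if $W_1,\dots,W_k$ are independent mean-zero random vectors in $\mathbb{R}^2$ with $\sum_{i\in[k]}\Cov(W_i)=\mathrm{Id}$ and $W\triangleq\sum_{i\in[k]}W_i$, then for every convex set $A\subseteq\mathbb{R}^2$ one has $\big|\Pr[W\in A]-\Pr[G\in A]\big|=O\big(\sum_{i\in[k]}\E\|W_i\|^3\big)$, where $G$ is a standard two-dimensional Gaussian. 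The only differences from the setting of \cite{matulef2010testing} are that our summands are not identically distributed and take values in $\{-1,0,+1\}^2$ rather than $\{\pm1\}^2$; neither matters for the argument, since the per-summand third moments are still $O(1)$.

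Assuming $\lambda>0$ (so that $\Sigma$ is nondegenerate and $\Sigma^{-1/2}$ is well-defined), I would set $W_i\triangleq k^{-1/2}\,\Sigma^{-1/2}(x_i,y_i)$. Then $\sum_{i\in[k]}\Cov(W_i)=\Sigma^{-1/2}\big(k^{-1}\sum_{i\in[k]}\Sigma_i\big)\Sigma^{-1/2}=\Sigma^{-1/2}\Sigma\,\Sigma^{-1/2}=\mathrm{Id}$, and $W=\Sigma^{-1/2}(\ell(x),\ell(y))$. For the error term, since $(x_i,y_i)\in\{-1,0,+1\}^2$ we have $\|(x_i,y_i)\|\le\sqrt2$, while $\|\Sigma^{-1/2}\|_{\mathrm{op}}=\lambda^{-1/2}$; hence $\E\|W_i\|^3\le k^{-3/2}\lambda^{-3/2}\,\E\|(x_i,y_i)\|^3=O\big(k^{-3/2}\lambda^{-3/2}\big)$, so $\sum_{i\in[k]}\E\|W_i\|^3=O\big(\lambda^{-3/2}k^{-1/2}\big)$, which is exactly the target bound.

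It then remains to match the events. We have $\{(\ell(x),\ell(y))\in I_1\times I_2\}=\{W\in A\}$ with $A\triangleq\Sigma^{-1/2}(I_1\times I_2)$; since each $I_j$ is an interval (possibly unbounded, which is harmless), $I_1\times I_2$ is convex, and $\Sigma^{-1/2}$ is a linear invertible map, so $A$ is convex. Applying the Berry--Esseen bound above yields $\big|\Pr[W\in A]-\Pr[G\in A]\big|=O\big(\lambda^{-3/2}k^{-1/2}\big)$. Writing $(X,Y)\triangleq\Sigma^{1/2}G$ --- a pair of zero-mean Gaussians with covariance $\Sigma$ --- we get $\Pr[G\in A]=\Pr[\Sigma^{1/2}G\in I_1\times I_2]=\Pr[(X,Y)\in I_1\times I_2]$, which completes the proof. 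The main point requiring care is the dependence on the conditioning of $\Sigma$: the whitening step is precisely where the $\lambda^{-3/2}$ factor enters, via $\|\Sigma^{-1/2}\|_{\mathrm{op}}^3=\lambda^{-3/2}$, and one should note that the estimate is meaningful only for $\lambda>0$, which is the regime in which the theorem is invoked (cf. the verification that $\lambda>0$ in the proof of Part~\ref{part:det_dist} of Theorem~\ref{th:det_sep} and in Lemma~\ref{le:dist_sep}).
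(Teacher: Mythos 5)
Your proposal is correct, and it takes a genuinely different route from the paper. The paper applies Corollary~16.3 of Bhattacharya--Rao (stated here as Theorem~\ref{thm:gen_Berry_Esseen}) directly in the \emph{un-normalized} covariance setting: that theorem returns an error of the form $\eta + B(A)$ where $\eta = C\lambda^{-3/2}\rho_3 k^{-1/2}$ and $B(A)$ measures the Gaussian mass of an $\eta'$-thickening of $\partial A$ with $\eta' = \Lambda^{1/2}\eta$; the paper then has to separately observe that $\Lambda = O(1)$ (so that $\eta' = O(\eta)$) and that the thickened boundary of a rectangle has Gaussian measure $O(\eta')$. Your approach instead whitens first, taking $W_i = k^{-1/2}\Sigma^{-1/2}(x_i,y_i)$ so that $\sum_i \Cov(W_i) = \mathrm{Id}$, and then invokes a Bentkus-type Berry--Esseen over convex sets with identity covariance; the $\lambda^{-3/2}$ now enters cleanly through $\|\Sigma^{-1/2}\|_{\mathrm{op}}^3$ in the third-moment bound $\sum_i \E\|W_i\|^3 = O(\lambda^{-3/2}k^{-1/2})$, the rectangle $I_1\times I_2$ maps to a convex set $A = \Sigma^{-1/2}(I_1\times I_2)$, and no boundary-measure computation is needed. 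The computations you supply (covariance normalization, operator-norm bound, event matching, undoing the whitening on the Gaussian side) are all correct. What your route buys is a cleaner argument with no explicit handling of boundary terms and no need to observe $\Lambda = O(1)$; what the paper's route buys is that it can quote a single result as stated in standard references without the intermediate change of variables.
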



\begin{proof}
The following statement appears as Theorem 16 in \cite{khot2007optimal} and as Corollary 16.3 in \cite{bhattacharya1986normal}.

\begin{theorem}\label{thm:gen_Berry_Esseen}
Let $X_1, \dots, X_k$ be independent random variables taking values in $\mathbb{R}^d$ and satisfying:
\begin{itemize}
\item $\Ex[X_j]$ is the all-zero vector for every $j \in \{1,\dots,k\}$. 
\item $w^{-1} \sum_{j=1}^w \Cov[X_j] = \Sigma$ where $\Cov$ denotes the covariance matrix.
\item $\lambda$ is the smallest eigenvalue of $\Sigma$ and $\Lambda$ is the largest eigenvalue of $\Sigma$.
\item $\rho_3 = k^{-1} \sum_{j=1}^k \Ex[|| X_j ||^3] < \infty$.
\end{itemize}
Let $Q_k$ denote the distribution of $k^{-1/2}(X_1+\dots+X_k)$, let $\Phi_{0,V}$ denote the distribution of the $d$-dimensional Gaussian with mean $0$ and covariance matrix $\Sigma$, and let $\eta = C \lambda^{-3/2} \rho_3 k^{-1/2}$, where $C$ is a certain universal constant. Then, for any Borel set $A$,
$$ | Q_n(A) - \Phi_{0,V}(A)| \le \eta + B(A),$$
where $B(A)$ is the following measure of the boundary of $A$: $B(A) = 2\sup_{y \in \mathbb{R}^d} \Phi_{0,V}((\partial A)^{\eta'} + y)$, $\eta' = \Lambda^{1/2} \eta$ and $(\partial A)^{\eta'}$ denotes the set of points within distance $\eta'$ of the topological boundary of $A$.
\end{theorem}
We now apply Theorem~\ref{thm:gen_Berry_Esseen} with $d=2$ in order to complete the proof of Theorem~\ref{th:2d_berry_ess_bin_not_id}. We are given that for every $i \in \{1,\dots,k\}$, $\Ex[X_i] = \Ex[Y_i] = 0$ and $\Cov[(X_i,Y_i)] = \Sigma_i$. Thus, $k^{-1} \cdot \sum_{j=1}^k \Cov[(X_j,Y_j)] = \sum_{j=1}^k \Sigma_i= \Sigma$. Note that the largest eigenvalue of $\Sigma$ is $\Lambda = O(1)$. Moreover, since each coordinate of our random variables is $\{-1,0,+1\}$-valued, for every $j \in \{1,\dots,k\}$, $\Ex[||X_j||^3] \le 2^{3/2}$. Thus, $\rho_3 \le 2^{3/2}$. Hence, $\eta = O(\lambda^{-3/2} k^{-1/2})$. As in \cite{khot2007optimal,matulef2010testing}, one can check that the topological boundary of any set of the form $I_1 \times I_2$ (where $I_1, I_2 \subseteq \mathbb{R}$ are intervals) is $O(\eta')$. Since $\eta' = \Lambda^{1/2} \eta = O(\eta)$, Theorem~\ref{th:2d_berry_ess_bin_not_id} follows.
\end{proof}

\end{document}